\newcommand{\cpp}{\textsc{Cycle Packing}}
\newcommand{\cp}{\textsc{Cycle Packing }}
\newcommand{\fvs}{\textsc{Feedback Vertex Set}}
\newcommand{\vc}{\textsc{Vertex Cover}}
\newcommand{\hgc}{$\boxplus_\irr$}
\newcommand{\hg}{$\boxplus_\irr$ }
\newcommand{\irr}{\textsf{core}}
\newcommand{\ud}{\textsf{UD}}
\newcommand{\cut}[1]{\textsf{cut}({#1})}
\newcommand{\wsqr}[1]{\ensuremath{c_{\textsf{sos}}({#1}})}
\newcommand{\wmax}[1]{\ensuremath{c_\textsf{max}}(#1)}
\newcommand{\tc}{\textsf{tc}}
\newcommand{\ccheck}[1]{\textcolor{red}{#1}}
\title{Dynamic Parameterized Problems on Unit Disk Graphs} 
\author{Shinwoo An}{Department of Computer Science and Engineering, POSTECH, Pohang, Korea}{shinwooan@postech.ac.kr}{}{}
\author{Kyungjin Cho}{Department of Computer Science and Engineering, POSTECH, Pohang, Korea}{kyungjincho@postech.ac.kr}{0000-0003-2223-4273}{Supported by the National Research Foundation of Korea (NRF) grant funded by the Korea government (MSIT) (No.RS-2024-00410835).} 
\author{Leo Jang}{Department of Computer Science and Engineering, POSTECH, Pohang, Korea}{leo630@postech.ac.kr}{}{} 
\author{Byeonghyeon Jung}{Department of Computer Science and Engineering, POSTECH, Pohang, Korea}{bhjung@postech.ac.kr}{}{} 
\author{Yudam Lee}{Department of Computer Science and Engineering, POSTECH, Pohang, Korea}{leeyudam@postech.ac.kr}{}{} 
\author{Eunjin Oh}{Department of Computer Science and Engineering, POSTECH, Pohang, Korea}{eunjin.oh@postech.ac.kr}{0000-0003-0798-2580}{}
\author{Donghun Shin}{Department of Computer Science and Engineering, POSTECH, Pohang, Korea}{sdh728@postech.ac.kr}{}{}
\author{Hyeonjun Shin}{Department of Computer Science and Engineering, POSTECH, Pohang, Korea}{hyeonjun.shin@postech.ac.kr}{0009-0008-4701-7295}{} 
\author{Chanho Song}{Department of Computer Science and Engineering, POSTECH, Pohang, Korea}{sch0622@postech.ac.kr}{0009-0001-3522-3517}{}
\authorrunning{S An, et al.} 
\keywords{Unit disk graphs, dynamic parameterized algorithms, kernelization} 
\begin{document}
\maketitle

\begin{abstract}
In this paper, we study fundamental parameterized problems such as 
 \textsc{$k$-Path/Cycle}, \textsc{Vertex Cover}, \textsc{Triangle Hitting Set}, \textsc{Feedback Vertex Set}, and \textsc{Cycle Packing} 
 for \emph{dynamic} unit disk graphs. Given a vertex set $V$ changing dynamically under vertex insertions and deletions, our goal is to maintain data structures so that
 the aforementioned parameterized problems on the unit disk graph induced by $V$ can be solved
 efficiently.
Although dynamic parameterized problems on general graphs have been studied extensively, 
no previous work focuses on unit disk graphs. In this paper, we present the first data structures
for fundamental parameterized problems on dynamic unit disk graphs.
More specifically, our data structure supports $2^{O(\sqrt{k})}$ update time and $O(k)$ query time for \textsc{$k$-Path/Cycle}.
For the other problems, our data structures support $O(\log n)$ update time and $2^{O(\sqrt{k})}$ query time, where $k$ denotes the output size.
\end{abstract}
\section{Introduction}

For a set $V$ of $n$ points in the plane, the \emph{unit disk graph} of 
$V$ is the intersection graph of the unit disks of diameter one centered at the points in 
$V$, denoted by $\ud(V)$. 
Unit disk graphs serve as a powerful model for real-world applications such as broadcast networks~\cite{kuhn2004unit,kuhn2003ad},  biological networks~\cite{HAVEL1983383} and facility location~\cite{wang1988study}.  
Due to various applications, unit disk graphs and geometric intersection graphs have gained significant attention in computational geometry. 
Since most of the fundamental NP-hard problems remain NP-hard even in unit disk graphs, the study of NP-hard problems on unit disk graphs focuses on approximation algorithms and parameterized algorithms~\cite{an2021feedback,an2024eth, demaine2005subexponential, zehavi2021eth, wu2006minimum}. 
From the perspective of parameterized algorithms, the main focus
is to design subexponential-time parameterized algorithms for various problems on unit disk graphs. While such algorithms do not exist for general graphs unless ETH fails, lots of problems admit subexponential-time parameterized algorithms for unit disk graph and geometric intersection graphs~\cite{an2021feedback,an2024eth,an2024sparseouterstringgraphslogarithmic, demaine2005subexponential,zehavi2021eth,lokshtanov2022subexponential}. 

In this paper, we study fundamental graph problems on \emph{dynamic} unit disk graphs. Given a vertex set $V$ that changes under vertex insertions and deletions, our goal is to maintain data structures so that specific problems for $\ud(V)$ can be solved efficiently. 
This dynamic setting has attracted considerable interest. For instance, the connectivity problem~\cite{chan_et_al:LIPIcs.SoCG.2024.36,chan2011dynamic}, 
the coloring problem~\cite{krawczyk2017line}, the independent set problem~\cite{bhore2023fully,erlebach2005polynomial}, the set cover problem~\cite{agarwal2022dynamic}, and the vertex cover problem~\cite{bhore2024fully} have been studied for dynamic geometric intersection graphs. 
Here, all problems, except for the connectivity problem, are NP-hard. 
All previous work on dynamic intersection graphs
for those problems study approximation algorithms. 
However, like the static setting, parameterized algorithms are also a successful approach for addressing NP-hardness in the dynamic setting. There has been significant research on parameterized algorithms for dynamic general graphs~\cite{alman2020dynamic,chen2021efficient,dvovrak2014dynamic,korhonen2023dynamic}. Surprisingly, however, there have been no studies on parameterized algorithms for dynamic unit disk graphs.

In this paper, we initiate the study of fundamental \emph{parameterized} problems on dynamic unit disk graphs. In particular, we study the following five fundamental problems in the dynamic setting. 
All these problems are NP-hard even for unit disk graphs~\cite{clark1990unit,itai1982hamilton}. 
\begin{itemize}
\item \textsc{$k$-Path/Cycle} asks to find a path/cycle of $G$ with  exactly $k$ vertices,
\item \textsc{$k$-Vertex Cover} asks to find a set $S$ of $k$ vertices s.t. $G\setminus S$ has no edge,
\item \textsc{$k$-Triangle Hitting Set} asks to find a set $S$ of  $k$ vertices s.t. $G\setminus S$ has no triangle,
\item \textsc{$k$-Feedback Vertex Set} asks to find a set $S$ of $k$ vertices s.t. $G\setminus S$ has no cycle, and
\item \textsc{$k$-Cycle Packing} asks to find $k$ vertex-disjoint cycles of $G$.
\end{itemize}
In the course of vertex updates, we are asked to solve those problems as a query. 
Except for \textsc{$k$-Path/cycle}, a query is given with an integer $k$. On the other hand, our data structure for \textsc{$k$-Path/Cycle} uses $k$ in the construction time.

\begin{table*}
    \resizebox{\textwidth}{!}
    {
    \begin{tabular}{c||c|c|c}
        \noalign{\smallskip}\hline
        \textbf{} & {Update time} & {Query time} &{Space Complexity} \\
        \hline
        \textsc{$k$-Path/Cycle*} & {
 ${2^{O(\sqrt k)}}$
 } & {$O(k)$} &\textbf{$O(kn)$} \\
        \hline
        \textsc{$k$-Vertex Cover*} & $O(1)$ & {$2^{O(\sqrt k)}$} &\textbf{$O(n)$}\\
        \hline
        \textsc{$k$-Triangle Hitting Set*} & $O(1)$ & {$2^{O(\sqrt k)}$} &\textbf{$O(n)$} \\
        \hline
        \textsc{$k$-Feedback Vertex Set} & {$O(\log n)$} & {$2^{O(\sqrt k)}$} &\textbf{$O(n)$} \\
        \hline
        \textsc{$k$-Cycle Packing} & {$O(\log n)$} & {$2^{O(\sqrt k)}$} &\textbf{$O(n)$} \\
        \hline
    \end{tabular}}
    \caption{
    \small
    Summary of our results.
    The results marked as * support amortized update times, and the others are worst-case update/query times.
    Except for \textsc{$k$-Path/Cycle}, no data structure requires $k$ in the construction time; The parameter $k$ is given as a query. 
    Additionally, the data structure for \textsc{$k$-Path/Cycle} can answer a decision query in constant time.
    }\label{table:summary}
     \vspace*{-1\baselineskip}
\end{table*}

\subparagraph*{Our results.} Our results are summarized in Table~\ref{table:summary}. Note that these are almost ETH-tight.
To see this, recall that no problem studied in this paper admits a $2^{o(\sqrt k)}n^{O(1)}$-time algorithm in the static setting unless ETH fails~\cite{de2020framework, fomin2019finding, fomin2012bidimensionality}.
Thus for any data structure for these problems on dynamic unit disk graphs with update time $T_u(n,k)$ and query time $T_q(n,k)$,
we must have $n\cdot T_u(n,k) +T_q(n,k) = 2^{\Omega(\sqrt k)}n^{O(1)}$ unless ETH fails. In particular, $n\cdot T_u(n,k) +T_q(n,k)$ is the time
for solving the static problem using dynamic data structures; we insert the vertices one by one and then answer the query. 
In our case, this static running time is $2^{O(\sqrt k)} n$ for \textsc{$k$-Path/Cycle}, $2^{O(\sqrt k)} + O(n)$ 
for \textsc{$k$-Vertex Cover} and \textsc{$k$-Triangle Hitting Set},
and $2^{O(\sqrt k)} + O(n\log n)$ for \textsc{$k$-Feedback Vertex Set} and \textsc{$k$-Cycle Packing}.
Interestingly, as by-products, we slightly improve the running times
of the best-known static algorithms in~\cite{an2021feedback,an2024eth} for 
\textsc{$k$-Feedback Vertex Set} and \textsc{$k$-Cycle Packing} on unit disk graphs from 
$2^{O(\sqrt k)}n^{O(1)}$ to $2^{O{(\sqrt k)}}+O(n\log n)$.\footnote{Moreover, they can be improved further to take $2^{O{(\sqrt k)}}+O(n)$ time with a slight modification.}

A main tool used in this paper is \emph{kernelization}, a technique compressing an instance of a problem into a small-sized equivalent instance called a \emph{kernel}. 
Kernelization is one of the fundamental techniques used in the field of parameterized algorithms~\cite{cygan2015parameterized}.
We use the same framework for all problems, except for \textsc{$k$-Path/Cycle}: 
For each update, we maintain kernels for the current unit disk graph. Given a query, it is sufficient to solve the problem on the kernel instead of the entire unit disk graph. 
Precisely, if the kernel size exceeds a certain bound, we immediately return a correct answer. Otherwise, the kernel size is small, say $O(k)$, and thus we can answer the query by applying the static algorithms on the kernel.

\subparagraph*{Related work.}
While dynamic parameterized problems on unit disks have not been studied before, 
static parameterized algorithms have been widely studied for unit disk graphs. 
For instance, Fomin et al.~\cite{fomin2019finding} presented $2^{O(\sqrt{k}\log k)}n^{O(1)}$-time algorithms for \textsc{$k$-Path/Cycle}, \textsc{$k$-Vertex Cover}, \textsc{$k$-Feedback Vertex Set}, and \textsc{$k$-Cycle Packing} problems on static unit disk graphs.
Subsequently, the running times were improved to $2^{O(\sqrt k)}(n+m)$~\cite{an2021feedback,an2024eth,cygan2015parameterized,de2020framework,zehavi2021eth}, which are all ETH-tight.
Additionally, for disk graphs, subexponential time FPT algorithms were studied for \textsc{$k$-Vertex Cover} and \textsc{$k$-Feedback Vertex Set}~\cite{an2023faster,lokshtanov2022subexponential}.

On the other hand, there are several previous works on dynamic parameterized problems on \emph{general graphs}.
Alman et al.~\cite{alman2020dynamic} presented a dynamic algorithm for \textsc{$k$-Vertex Cover} supporting $1.2738^{O(k)}$ query time and $O(1)$ amortized update time.
They also presented a dynamic algorithm for \textsc{$k$-Feedback Vertex Set} supporting $O(k)$ query time and $k^{O(k)}\log^{O(1)} n$ amortized update time.
Korhonen et al.~\cite{korhonen2023dynamic} presented a dynamic algorithm for \textsf{CMSO} testing, parameterized by treewidth.
Chen et al.~\cite{chen2021efficient} and Dvo{\v{r}}{\'a}k et al.~\cite{dvovrak2014dynamic} presented a dynamic algorithm for \textsc{$k$-Path/Cycle} and for \textsf{MSO} testing, respectively, parameterized by treedepth.
These algorithms admit \emph{edge} insertions and deletions, and Dvo{\v{r}}{\'a}k et al.~\cite{dvovrak2014dynamic} also admits \emph{isolated vertex} insertions and deletions, while we deal with vertex insertions and deletions.

An alternative way for dealing with NP-hardness is using approximation.
There are numerous works on approximation algorithms for dynamic intersection graphs. 
For disks, one can maintain $(1+\varepsilon)$-approximation of \textsc{Vertex Cover}~\cite{bhore2024fully,hochbaum1985approximation}.
Bhore et al.~\cite{bhore2023fully} presented a constant-factor approximation algorithm for the maximum independent set problem for disks.
They generalized their result on comparable-sized fat object graphs with the approximation factor depending on a given dimension and fatness parameter.
For intervals and unit-squares, Agarwal et al.~\cite{agarwal2022dynamic} presented a constant-approximation algorithm for the set cover problem and the hitting set problem.

\section{Preliminaries}
Throughout this paper, we let $V$ be a set of points in the plane, and we let $\ud(V)$ be the \emph{unit disk graph} of $V$. 
We interchangeably denote $v\in V$ as a point or as a vertex of $\ud(V)$ if it is clear from the context.
For an undirected graph $G$, 
we often use $V(G)$ and $E(G)$ to denote the vertex set of $G$ and the edge set of $G$, respectively. 
For convenience, we denote the subgraph of $G$ induced by $V(G)\setminus U$ by $G\setminus U$ for a subset $U$ of $V(G)$.  

\subparagraph*{Tree decomposition and weighted treewidth}
A \emph{tree decomposition} of a graph $G=(V,E)$ is a pair $(T,\mathcal B)$, where $T$ is a tree and $\mathcal B$ is a collection of subsets of $V$, called \emph{bags} of which each subset is associated with the nodes of $T$ with the following properties.
\begin{itemize}
    \item \textbf{(T1)} For any vertex $u\in V$, there is at least one bag in $\mathcal B$ which contains $u$,
    \item \textbf{(T2)} For any edge $uv\in E$, there is at least one bag in $\mathcal B$ which contains both $u$ and $v$,
    \item \textbf{(T3)} For any vertex $u\in V$, the subset of bags of $\mathcal B$ containing $u$ forms a subtree of $T$.
\end{itemize}
The \emph{width} of a tree decomposition is the size of its largest bag minus one, and the \emph{treewidth} of $G$ is the minimum width of a tree decomposition of $G$.
Specifically, when $G$ is a vertex-weighted graph, the \emph{weighted width} of the tree decomposition is defined as the maximum weight of the bags, where the weight of a bag is defined as the sum of the weights of the vertices in the bag.
Subsequently, the \emph{weighted treewidth} of graph $G$ is the minimum weighted width among the tree decomposition of $G$.

In this paper, we use the concept of the \emph{clique-based weighted width} introduced by de Berg et al.~\cite{de2020framework}.
For an unweighted graph $G$, let $\mathcal P$ be a decomposition of the vertex set $V(G)$ where each $P\in \mathcal P$ forms a clique in $G$.
Additionally, let $G_{\mathcal P}$ be the vertex-weighted graph obtained from $G$ by contracting each $P\in \mathcal P$ into a vertex $v_P$ and defining the weight of $v_P$ as $1+\log (|P|+1)$.
A tree decomposition $(T,\mathcal B)$ of $G$ is said to be a \emph{clique-based tree decomposition} with respect to $\mathcal P$ if it can be obtained from a tree decomposition $(T',\mathcal B')$ of $G_{\mathcal P}$ by uncontracting every vertex $v_P$ into the clique $P\in \mathcal P$.
Additionally, we refer to the weighted width of $(T',\mathcal B')$ as the \emph{clique-based weighted width} of $(T,\mathcal B)$.

\begin{figure}
    \centering
    \includegraphics[width=0.85\textwidth]{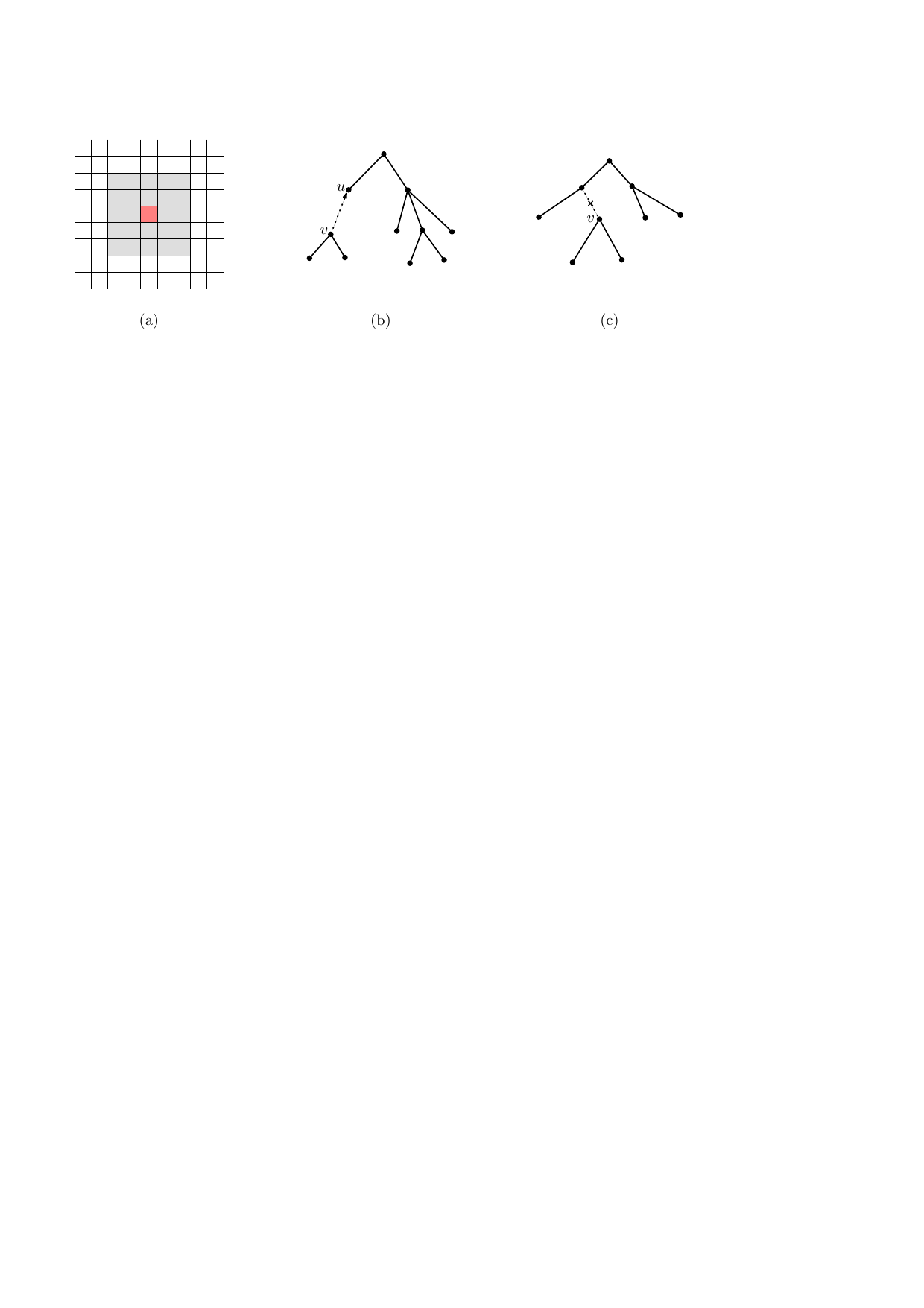}
    \caption{(a) The gray and red grid cells are $2$-neighboring cells of the red grid cell. (b) Illustration of $\textsf{Link}(u,v)$. (c) Illustration of $\textsf{Cut}(v)$. }
    \label{fig:neighbor-link-cut}
\end{figure}

\subparagraph*{Grid.}
A grid $\boxplus$ is a partition of the plane into squares (called grid cells) of diameter one.
Notice that any two points of $V$ contained in the same grid cell of $\boxplus$ are adjacent in $G$.
Each grid cell $\Box$ has its own id: $(\lfloor a/\sqrt{2} \rfloor, \lfloor b/\sqrt{2} \rfloor)$, where $a$ and $b$ are the $x$- and $y$-coordinates of a point in $\Box$.
For any two grid cells $\Box, \Box'$ with id $(x,x')$ and $(y,y')$, respectively, we let $d(\Box, \Box')=\max(|x-x'|, |y-y'|)$.
For an integer $\ell> 0$, a grid cell $\Box$ is called an \emph{$\ell$-neighboring cell} of a grid cell $\Box'$ if $d(\Box,\Box')\leq \ell$.
See Figure~\ref{fig:neighbor-link-cut}(a). 
We slightly abuse the notation so that $\Box$ itself is an $\ell$-neighboring cell of $\Box$ for all $\ell>0$.
For a point $v$ in the plane, we use $\Box_v$ to denote the cell of $\boxplus$ containing $v$. If it lies on the boundary of a cell,
$\Box_v$ denotes an arbitrary cell containing $v$ on its boundary. 

We do not construct the grid $\boxplus$ explicitly. Instead, 
we maintain the grid cells containing vertices of $V$ only. 
We associate each id with a linked list that stores all the vertices of $V$ contained in the grid cell.
Once we have the id of $\Box$, we fetch the linked list associated with $\Box$ in amortized constant time or  $O(\log n)$ worst-case time, where $n$ is the number of points of $V$. 
More specifically, this can be implemented in $O(1)$ amortized time using dynamic perfect hashing (once true randomness is available) and in $O(\log n)$ worst-case time using 1D range search tree along the lexicographic ordering of the ids.
Also, each update of the linked list can be done in the same time bound.

In this paper, we access grid cells only for updating data structures, and then we store the necessary grid cells explicitly in our data structures. 
Consequently, the update times of our data structures are sometimes analyzed using amortized analysis while the query times are always analyzed using worst-case analysis.

\subparagraph*{Link-cut tree.} When we update data structures, we use link-cut trees. 
A \emph{link-cut tree} is a dynamic data structure that maintains a collection of vertex-disjoint rooted trees and supports two kinds of operations: a link operation that combines two trees into one by adding an edge, and a cut operation that divides one tree into two by deleting an edge~\cite{sleator1981data}. 
See Figure~\ref{fig:neighbor-link-cut}(b--c).
Each operation requires $O(\log n)$ time. More precisely, the data structure supports the following query and update operations in $O(\log n)$ time. 
\begin{itemize} 
    \item $\textsf{Link}(u,v)$:
    If $v$ is the root of a tree and $u$ is a vertex in another tree, link the trees containing $v$ and $u$ by adding the edge between them, making $u$ the parent of $v$.
    \item $\textsf{Cut}(v)$: If $v$ is not a root, this removes the edge between $v$ and its parent, so that the tree containing $v$ is divided two trees containing either $v$ or not. 
    \item $\textsf{Evert}(v)$: This turns the tree containing $v$ “inside out” by making $v$ the root of the tree.
    \item $\textsf{Connected}(u,v)$: This checks if $u$ and $v$ are contained in the same tree. 
    \item $\textsf{LCA}(u,v)$: This returns the lowest common ancestor of $u$ and $v$ assuming that $u$ and $v$ are contained in the same tree.
    \item $\textsf{Root}(u)$: This returns the root of
    the tree containing $u$.
\end{itemize}

\section{Dynamic $k$-Path/Cycle Problem}\label{apx:long_path}
In this section, we describe a fully dynamic data structure for maintaining a path/cycle of length $k$ (if one exists) of $\ud(V)$ under \emph{vertex} insertions and deletions. 
Each query asks to return a $k$-path/cycle of $\ud(V)$ if it exists. 
This data structure supports $2^{O(\sqrt k)}$ update time and $O(k)$ query time.
Moreover, if each query asks to simply check if $V$ has 
a $k$-path/cycle, then we can answer this in $O(1)$ time.

We use the following two observations. First, if a grid cell contains at least $k$ vertices of $V$,
$\ud(V)$ has a $k$-path/cycle. In this case, we return arbitrary $k$ vertices contained in the same grid cell. Second, if $\ud(V)$ has a $k$-path/cycle $\pi$ containing a vertex $x\in V$, 
then all vertices of $\pi$ are contained in $V\cap \boxplus_x$, where $\boxplus_x$ denotes the union of $2k$-neighboring cells of the grid cell $\Box_x$ containing $x$.
Notice that 
$\ud(V\cap \boxplus_x)$ contains a $k$-path/cycle if $ |V\cap \boxplus_x|\geq k(4k+1)^2$
due to the first observation. 
We state these ideas formally as follows. 
\begin{lemma} \label{lem:lpc-kernel}
$(\ud(V), k)$ is a \textbf{yes}-instance of \textsc{$k$-Path/Cycle} if there is a vertex $x\in V$ with $|V\cap \boxplus_x|\geq f(k)$, where $f(k)=k(4k+1)^2$. 
\end{lemma}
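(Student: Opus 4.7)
The plan is to apply a direct pigeonhole argument on the grid cells that compose $\boxplus_x$. First I would count the number of $2k$-neighboring cells of $\Box_x$: by the definition of $d(\Box,\Box')$ as the Chebyshev distance between cell ids, these cells form a $(4k+1)\times(4k+1)$ subgrid around $\Box_x$, so $\boxplus_x$ is the union of at most $(4k+1)^2$ grid cells. Under the hypothesis $|V\cap\boxplus_x|\geq k(4k+1)^2$, the pigeonhole principle then guarantees that at least one grid cell $\Box\subseteq\boxplus_x$ satisfies $|V\cap\Box|\geq k$.

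Next I would invoke the diameter-one property of the grid stated in the Preliminaries: any two points lying in a common cell are at Euclidean distance at most one, hence they are adjacent in $\ud(V)$. Consequently $V\cap\Box$ induces a clique of size at least $k$ in $\ud(V)$, and any such clique trivially contains a Hamiltonian path on $k$ vertices and (for $k\geq 3$) a Hamiltonian cycle on $k$ vertices. Extracting any such subgraph yields the desired $k$-path/cycle, so $(\ud(V),k)$ is a yes-instance of \cp.

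There is no real obstacle here; the whole lemma reduces to cell counting plus the fact that a single grid cell is a clique. The only points worth stating carefully are the count $(4k+1)^2$ via the Chebyshev metric on ids and the treatment of the trivial boundary cases $k\leq 2$ for the cycle variant, which can be handled by returning a single vertex or edge directly.
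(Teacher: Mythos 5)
Your proof matches the paper's argument exactly: the paper also counts the $(4k+1)^2$ cells forming $\boxplus_x$, applies pigeonhole to find a cell with at least $k$ vertices, and uses the fact that vertices sharing a cell form a clique to extract the $k$-path/cycle. Your explicit remark about the trivial cases $k\leq 2$ for the cycle variant is a small point the paper leaves implicit, but the approach is the same.
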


\subparagraph*{Data structures.}
Recall that once we know the id of $\Box$, we can fetch the linked list storing $V\cap \Box$ in {$O(1)$ amortized time.} 
For a cell $\Box$, we define the $2k$-\emph{grid cluster} of $\Box$ as the union of $2k$-neighboring cells of $\Box$. 
In the course of updates of $V$, we maintain the following information 
for each grid cell $\Box$ with $V\cap \Box \neq\emptyset$ as data structures. 
\begin{itemize} 
    \item The number of vertices of $V$ contained in $\Box$,
    \item The number of vertices of $V$ contained in the $2k$-grid cluster of $\Box$,
    \item Value $\textsf{Bool}(\Box)\in\{\textbf{yes},\textbf{no}\}$ indicating that $\ud(V)$
    has a $k$-path/cycle contained in the $2k$-grid cluster of $\Box$, and
    \item A $k$-path/cycle $\pi(\Box)$ of $\ud(V)$ contained in the $2k$-grid cluster of $\Box$ if it exists.
\end{itemize}

We store this information by associating it with the cell id: now each cell $\Box$ is associated with not only the linked list for $V\cap \Box$ but also the four items stated above.
Furthermore, we globally maintain a linked list $\mathcal I$ storing the ids of the grid cells $\Box$ with $\textsf{Bool}(\Box)=\textbf{yes}$. 
The space complexity of our data structure is $O(kn)$ as it explicitly stores $\pi(\Box)$ for each $\Box$.

\subparagraph*{Update algorithm.}
Suppose we have the above information, and we aim to update $V$. Let $v$ be the vertex
to be removed from $V$ (or added to $V$). Notice that we are required to update the four items we maintain for the $2k$-neighboring cells of $\Box_v$ including $\Box_v$ itself.
We get the id of $\Box_v$ and update the number of vertices in $\Box_v$. Next, we enumerate the ids of all $2k$-neighbors $\Box$ of $\Box_v$ and update the number of vertices in the $2k$-grid cluster for $\Box$ along with $\Box_v$.
Since the number of $2k$-neighboring cells of $\Box_v$ is $O(k^2)$, this takes $O(k^2)$ time.
Before we update the third and fourth items, we handle the exception where 
$|V\cap \Box_v|$ becomes 0 from 1. In this case, we do not need to maintain the four items for $\Box_v$ any longer, so we dissociate them from $\Box_v$. 
Additionally, if $\textsf{Bool}(\Box_v)$ was previously \textbf{yes}, we remove the id of $\Box_v$ from $\mathcal I$. 
Then we use the algorithm of~\cite{zehavi2021eth} for updating the third and fourth items maintained for $\Box_v$ and its $2k$-neighboring cells.
\begin{lemma}[Theorem 1 in~\cite{zehavi2021eth}] \label{lem:lpc-blackbox}
    We can compute a $k$-path/cycle in a unit disk graph with $n$ vertices in $2^{O(\sqrt k)}n^{O(1)}$ time.
\end{lemma}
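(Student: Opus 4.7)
The plan is to prove \Cref{lem:lpc-blackbox} by combining a grid-based win/win step with a single-exponential dynamic program on a clique-based tree decomposition of width $O(\sqrt k)$. First, I would observe that whenever some grid cell $\Box$ of $\boxplus$ satisfies $|V \cap \Box| \ge k$, those $k$ points form a clique in $\ud(V)$ and therefore contain both a $k$-path and a $k$-cycle, so I may return them immediately. From now on, assume every cell contains at most $k-1$ points of $V$.

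Next, I would compute a clique-based tree decomposition of $\ud(V)$ of clique-based weighted width $w = O(\sqrt k)$, provided the instance is a yes-instance. The partition $\mathcal P = \{V \cap \Box : V \cap \Box \neq \emptyset\}$ of $V$ into per-cell cliques induces the contracted graph $G_{\mathcal P}$ with vertex weights $1 + \log(|P|+1)$. Using the planar-like balanced separator theorem for unit disk graphs in the framework of de~Berg et al.~\cite{de2020framework}, together with a bidimensionality-style argument---the existence of a $k$-path in $\ud(V)$ either forces a $\Omega(\sqrt k) \times \Omega(\sqrt k)$ grid-minor-like structure in $G_{\mathcal P}$ (which itself witnesses a $k$-path after uncontraction) or bounds the weighted treewidth of $G_{\mathcal P}$ by $O(\sqrt k)$---I can in polynomial time either output a $k$-path directly or produce a clique-based tree decomposition $(T, \mathcal B)$ of weighted width $w = O(\sqrt k)$.

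Then I would run dynamic programming over $(T, \mathcal B)$ for \textsc{$k$-Path/Cycle}. A naive DP would record, at every bag, a partial matching describing path endpoints inside the bag plus the set of already-used vertices, which costs $w^{\Theta(w)} \cdot n^{O(1)}$ and only yields $2^{O(\sqrt k \log k)}$. To shave the $\log k$ factor, I would plug in the representative-families technique of Fomin, Lokshtanov, Panolan, and Saurabh: at every node one keeps only a family of partial solutions that $k$-represents the current DP table with respect to future extensions, maintained via matroid truncation in a uniform or transversal matroid of rank $k$. This keeps table sizes at $2^{O(w)} \cdot k^{O(1)}$ and each bag update runnable in $2^{O(w)} (n+m)^{O(1)}$ time. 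The clique-based variant allows an entire clique $P$ to be ``introduced'' at one node by processing at most two representatives from $P$ per partial path (since a simple path touches any clique in at most two vertices), paying only $2^{O(1 + \log(|P|+1))} = \mathrm{poly}(|P|)$, which matches the weighted-width accounting.

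The hard part will be the representative-families step adapted to clique-based tree decompositions: one must verify that introducing a whole clique---rather than a single vertex---at a bag still preserves the single-exponential bound $\binom{w}{w/2}$ on representative sets, and that the matroid-truncation machinery (which underlies the $k$-representative computation) continues to run within the claimed time when bags are unions of cliques rather than sets of independent vertices. Once this technical verification is in place, plugging $w = O(\sqrt k)$ into the DP gives total running time $2^{O(\sqrt k)} \cdot n^{O(1)}$, completing the proof of \Cref{lem:lpc-blackbox}.
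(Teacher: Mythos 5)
The paper does not prove Lemma~\ref{lem:lpc-blackbox} at all: it is imported verbatim as a black box (Theorem~1 of~\cite{zehavi2021eth}), so what you are really attempting is a re-proof of that prior result, which is considerably more than the paper itself does. Your overall plan (win/win on a heavy cell, then a clique-based weighted tree decomposition of width $O(\sqrt k)$ obtained via a bidimensionality-style argument, then a single-exponential DP) is indeed the general shape of the cited work, so the first two stages are reasonable in spirit.

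The gap is in the DP stage, and it is exactly where the real difficulty of that theorem lies. Your accounting rests on the claim that ``a simple path touches any clique in at most two vertices,'' which is false --- and your own first step contradicts it, since a cell with $k$ points is a clique containing a $k$-path. Unlike \textsc{Vertex Cover} (all but one vertex of a clique) or \textsc{Triangle Hitting}/\textsc{FVS} (all but two), a $k$-path may use arbitrarily many vertices of a single cell and may enter and leave that cell many times, so a bag of clique-weight $O(\sqrt k)$ can contain $\Theta(k)$ solution vertices and the partial-solution interface is not controlled by ``two representatives per clique.'' Consequently the claimed $2^{O(w)}\cdot\mathrm{poly}$ table size does not follow from plugging representative families/matroid truncation into the clique-based decomposition; one must first prove a structural lemma about how an optimal path can be assumed to interact with each cell (and then design the states so that, within a cell, only a bounded summary --- e.g.\ counts and segment endpoints --- matters), which is precisely the technical contribution of~\cite{zehavi2021eth} and is missing from your sketch. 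As written, the argument either falls back to remembering a pairing over up to $\Theta(k)$ interface vertices (giving $2^{O(\sqrt k\log k)}$ at best) or relies on the false two-vertex claim; either way the stated $2^{O(\sqrt k)}n^{O(1)}$ bound is not established.
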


For each $2k$-neighboring cell $\Box$ of $\Box_v$ with $V\cap\Box\neq\emptyset$ along with $\Box_v$ itself, we update $\textsf{Bool}(\Box)$ as follows.
If the number of vertices in the $2k$-neighboring cells of $\Box$ is at least $f(k)$, we set $\textsf{Bool}(\Box)$ to \textbf{yes} due to Lemma~\ref{lem:lpc-kernel}, where $f(k)=k(4k+1)^2$.
Otherwise, we compute $\textsf{Bool}(\Box)$ in $2^{O(\sqrt k)}f(k)^{O(1)}=2^{O(\sqrt k)}$ time by running the algorithm of Lemma~\ref{lem:lpc-blackbox} with the unit disk graph induced by the vertices contained in the $2k$-grid cluster of $\Box$.
Therefore, we can update $\textsf{Bool}(\Box)$ for all $2k$-neighboring cells $\Box$ of $\Box_v$ in 
$2^{O(\sqrt k)}\cdot O(k^{2})=2^{O(\sqrt k)}$ time in total. 
Moreover, we can also compute $\pi(\Box)$ simultaneously.
Finally, for each cell $\Box$ whose value $\textsf{Bool}(\Box)$ has been changed from \textbf{yes} (and \textbf{no}) to \textbf{no} (and \textbf{yes}), we remove the id of $\Box$ from $\mathcal I$ (and insert the id of $\Box$ to the tail of $\mathcal I$) in $O(1)$ time.

\begin{lemma}
    The update algorithm takes $2^{O(\sqrt k)}$ time in total.
\end{lemma}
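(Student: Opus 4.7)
The plan is to bound the cost of each step of the update algorithm separately and then sum them, observing that each contribution is dominated by the cost of recomputing $\textsf{Bool}(\Box)$ and $\pi(\Box)$ over the $2k$-neighboring cells of $\Box_v$. First I would account for the counter updates: fetching the linked list associated with $\Box_v$ takes amortized $O(1)$ time (using dynamic perfect hashing as described in the preliminaries), and enumerating the $O(k^2)$ cells that are $2k$-neighbors of $\Box_v$ together with incrementing or decrementing their grid-cluster counters takes $O(k^2)$ time in total. Handling the bookkeeping when $|V\cap\Box_v|$ becomes $0$, and inserting or deleting an id from $\mathcal{I}$, each costs $O(1)$.

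Next I would bound the cost of refreshing the third and fourth items. For every cell $\Box$ that is a $2k$-neighbor of $\Box_v$, the algorithm first checks whether the grid-cluster count reaches the threshold $f(k)=k(4k+1)^2$; by Lemma~\ref{lem:lpc-kernel} this immediately certifies a $k$-path/cycle and in that case $\textsf{Bool}(\Box)$ is set to \textbf{yes} in $O(1)$ time without recomputation. Otherwise, the relevant cluster contains fewer than $f(k)$ vertices, so applying Lemma~\ref{lem:lpc-blackbox} to $\ud(V\cap \boxplus_\Box)$ runs in $2^{O(\sqrt{k})}\cdot f(k)^{O(1)}=2^{O(\sqrt{k})}$ time. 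When a $k$-path/cycle is produced by the subexponential algorithm, storing it as $\pi(\Box)$ is absorbed into the same bound since $\pi(\Box)$ has exactly $k$ vertices.

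Finally I would sum over the at most $O(k^2)$ cells whose stored information can change because of the update; these are exactly the $2k$-neighbors of $\Box_v$, since no cell outside this range sees its vertex set or grid-cluster altered. The total work is therefore $O(k^2)\cdot 2^{O(\sqrt{k})}$, and since $k^2 = 2^{O(\log k)} = 2^{o(\sqrt{k})}$, this simplifies to $2^{O(\sqrt{k})}$. The remaining constant-time updates to $\mathcal{I}$ for cells whose $\textsf{Bool}$ value flipped only add $O(k^2)$, which is again absorbed.

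There is no real obstacle here; the whole argument is a bookkeeping exercise. The only subtlety worth stressing is that the $O(k^2)$ factor coming from iterating over $2k$-neighbors is swallowed by the $2^{O(\sqrt{k})}$ factor, so care should be taken in the presentation to make this absorption explicit. I would close by noting that all updates other than the call to the static algorithm of Lemma~\ref{lem:lpc-blackbox} run in time polynomial in $k$ and thus do not dominate, yielding the stated $2^{O(\sqrt{k})}$ amortized update bound.
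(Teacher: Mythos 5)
Your proof is correct and follows essentially the same argument as the paper: the update touches only the $O(k^2)$ cells that are $2k$-neighbors of $\Box_v$, each is handled either by the threshold test of Lemma~\ref{lem:lpc-kernel} in $O(1)$ time or by running the static algorithm of Lemma~\ref{lem:lpc-blackbox} on a cluster of fewer than $f(k)$ vertices in $2^{O(\sqrt{k})}$ time, and the polynomial factor $O(k^2)$ is absorbed into $2^{O(\sqrt{k})}$. Your added care about the amortized nature of the hashing-based cell lookup matches the paper's accounting as well.
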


\subparagraph*{Query algorithm.}
We simply answer \textbf{yes} if and only if $\mathcal I$ is not empty. 
This takes $O(1)$ time. 
Furthermore, if the answer is \textbf{yes}, we obtain an explicit solution by referencing the id of $\Box$ stored in the head of $\mathcal I$ and retrieve $\pi(\Box)$ accordingly in
$O(k)$ time, which is linear in the output size. 
The following lemma verifies the correctness. 

\begin{lemma}
    $(\ud(V), k)$ is a \textbf{yes}-instance of \textsc{$k$-Path/Cycle} iff $\mathcal I$ is not empty.
\end{lemma}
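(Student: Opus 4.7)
The plan is to establish both directions of the equivalence by unpacking the meaning of $\textsf{Bool}(\Box)=\textbf{yes}$ and relating it to the existence of a $k$-path/cycle in $\ud(V)$; the correctness of the data-structure invariant does most of the work.

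For the ``if'' direction, I would take any id in $\mathcal I$, corresponding to some cell $\Box$ with $\textsf{Bool}(\Box)=\textbf{yes}$. Inspecting the update rule, this flag is set to \textbf{yes} in one of two ways: either the $2k$-grid cluster of $\Box$ contains at least $f(k)$ vertices of $V$, in which case Lemma~\ref{lem:lpc-kernel} directly certifies that $\ud(V)$ has a $k$-path/cycle; or the static algorithm of Lemma~\ref{lem:lpc-blackbox} is applied to the unit disk graph induced by the vertices inside that cluster and returns a $k$-path/cycle $\pi(\Box)$. Since that induced subgraph is a subgraph of $\ud(V)$, $\pi(\Box)$ is also a $k$-path/cycle of $\ud(V)$.

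For the ``only if'' direction, suppose $(\ud(V),k)$ is a yes-instance and let $\pi$ be any $k$-path/cycle in $\ud(V)$. I would pick an arbitrary vertex $x\in V(\pi)$; by the geometric observation stated just before Lemma~\ref{lem:lpc-kernel}, every vertex of $\pi$ lies in the $2k$-grid cluster of $\Box_x$. Thus $\pi$ already witnesses a $k$-path/cycle inside that cluster, so the update rule must have assigned $\textsf{Bool}(\Box_x)=\textbf{yes}$, either directly (when the cluster contains at least $f(k)$ vertices) or via Lemma~\ref{lem:lpc-blackbox}, which will detect some $k$-path/cycle whenever one exists. Since $x\in V\cap\Box_x$, the cell $\Box_x$ is one of the cells for which the data structure actually maintains $\textsf{Bool}$, so its id indeed belongs to $\mathcal I$.

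The main subtlety I expect is verifying the invariant that at every moment $\mathcal I$ lists exactly the ids of cells $\Box$ with $V\cap\Box\neq\emptyset$ and $\textsf{Bool}(\Box)=\textbf{yes}$. I would handle this by induction on the sequence of updates: when $v$ is inserted or deleted, the only cells whose $2k$-grid cluster content changes, and hence the only cells whose $\textsf{Bool}$ value may flip, are the $2k$-neighboring cells of $\Box_v$. These are precisely the cells the update algorithm recomputes, and it synchronizes $\mathcal I$ by inserting or removing an id whenever $\textsf{Bool}$ flips, and by dissociating $\Box_v$ if it becomes empty. Once this invariant is in place, the two direct arguments above close the proof.
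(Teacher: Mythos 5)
Your proposal is correct and follows essentially the same route as the paper: both directions reduce to the observation that a $k$-path/cycle through a vertex $x$ lies entirely in the $2k$-grid cluster of $\Box_x$, so $\textsf{Bool}(\Box_x)=\textbf{yes}$ exactly when such a path/cycle exists, and $\mathcal I$ records precisely those cells. Your additional care about the maintenance invariant for $\mathcal I$ under updates is a reasonable elaboration of what the paper leaves implicit, but it is not a different argument.
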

\begin{proof}
    $(\ud(V),k)$ is a \textbf{yes}-instance if and only if there is a vertex $x\in V$ contained in a $k$-path/cycle.
    This is equivalent to the case that $\textsf{Bool}(\Box_x)$ is \textbf{yes}.
    Since $\mathcal I$ stores the ids of all grid cells $\Box$ with $\textsf{Bool}(\Box)=\textbf{yes}$, the two conditions of the lemma are equivalent.
\end{proof}
\begin{theorem} \label{thm:lpc-result}
    For a fixed integer $k$, there is an $O(kn)$-sized fully dynamic data structure on the unit disk graph induced by a vertex set $V$ 
    supporting $2^{O(\sqrt k)}$ update time that answers each $k$-path/cycle reporting query in $O(k)$ time, and each decision query in $O(1)$ time.
\end{theorem}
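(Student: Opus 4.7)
The plan is to combine the grid-based kernel of Lemma~\ref{lem:lpc-kernel} with the static $2^{O(\sqrt k)}n^{O(1)}$-time algorithm of Lemma~\ref{lem:lpc-blackbox}, applied only inside small local neighborhoods, so that each update touches only a $k^{O(1)}$-size region of the grid. Concretely, for every grid cell $\Box$ with $V\cap\Box\neq\emptyset$ I would store four local records: the counts $|V\cap\Box|$ and the number of vertices of $V$ in the $2k$-grid cluster of $\Box$, a flag $\textsf{Bool}(\Box)\in\{\textbf{yes},\textbf{no}\}$ indicating whether $\ud(V)$ has a $k$-path/cycle inside that cluster, and an explicit witness $\pi(\Box)$ when the flag is $\textbf{yes}$. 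A global linked list $\mathcal I$ stores the ids of cells with $\textsf{Bool}(\Box)=\textbf{yes}$. Since at most $n$ cells are nonempty and each $\pi(\Box)$ has $k$ vertices, this uses $O(kn)$ space.

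For an update at a vertex $v$, only the $O(k^2)$ cells in the $2k$-cluster of $\Box_v$ can change. I would first update both counts in $O(k^2)$ time (amortized, via the grid's dynamic hashing) and handle the corner case of $\Box_v$ becoming empty or newly nonempty by allocating or dissociating its records and adjusting $\mathcal I$. Then, for each of the $O(k^2)$ affected cells $\Box$, if the count in its $2k$-cluster reaches $f(k)=k(4k+1)^2$, Lemma~\ref{lem:lpc-kernel} gives $\textsf{Bool}(\Box)=\textbf{yes}$ immediately and $k$ co-cellular vertices serve as $\pi(\Box)$; otherwise the cluster has fewer than $f(k)$ vertices and Lemma~\ref{lem:lpc-blackbox} computes $\textsf{Bool}(\Box)$ and $\pi(\Box)$ in $2^{O(\sqrt k)}f(k)^{O(1)}=2^{O(\sqrt k)}$ time. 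Summed over the $O(k^2)$ cells, and absorbing the $k^{O(1)}$ factor into $2^{O(\sqrt k)}$, the overall update cost is $2^{O(\sqrt k)}$, with each $\mathcal I$ update taking $O(1)$.

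Queries are immediate from the invariants on $\mathcal I$: a decision query returns \textbf{yes} iff $\mathcal I$ is nonempty, in $O(1)$ time; a reporting query reads the head of $\mathcal I$, retrieves the associated $\pi(\Box)$, and outputs its $k$ vertices in $O(k)$ time, which is linear in the output. Correctness reduces to the equivalence already observed: $(\ud(V),k)$ is a \textbf{yes}-instance iff some vertex $x$ lies on a $k$-path/cycle, iff $\textsf{Bool}(\Box_x)=\textbf{yes}$, iff $\mathcal I\neq\emptyset$. The main technical point I expect to pin down carefully is the amortized versus worst-case accounting, since fetching cell records through dynamic hashing is amortized constant time per cell; this is exactly what makes the update bound amortized, matching the asterisk attached to \textsc{$k$-Path/Cycle} in Table~\ref{table:summary}, while the query bounds remain worst case because they only traverse the explicitly stored structures.
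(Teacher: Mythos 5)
Your proposal is correct and follows essentially the same route as the paper: the same per-cell records (counts, $\textsf{Bool}(\Box)$, witness $\pi(\Box)$), the same global list $\mathcal I$, the same update strategy via Lemma~\ref{lem:lpc-kernel} and Lemma~\ref{lem:lpc-blackbox} on the $O(k^2)$ affected $2k$-grid clusters, and the same query logic. The accounting of space, amortized update time, and worst-case query time also matches the paper's analysis.
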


\section{Dynamic Vertex Cover Problem}\label{apx:vertex_cover}
In this section, we describe a fully dynamic data structure on the unit disk graph of a vertex set $V$ dynamically changing under vertex insertions and deletions that can answer
vertex cover queries efficiently. 
A query is given with a positive integer $k$ and aims to find a vertex cover of $\ud(V)$ of size at most $k$.
For this, we maintain the vertex set $V'\subseteq V$ which is the union of non-isolated vertices in $\ud(V)$ and the vertices $v$ where $\Box_v$ is a 5-neighboring cell of the grid cells containing at least two vertices of $V$.
See Figure~\ref{fig:VC_kernel}.
It is trivial that $(\ud(V),k)$ is a \textbf{yes}-instance if and only if $(\ud(V'),k)$ is a \textbf{yes}-instance.

\begin{figure}
    \centering
    \includegraphics[width=0.65\textwidth]{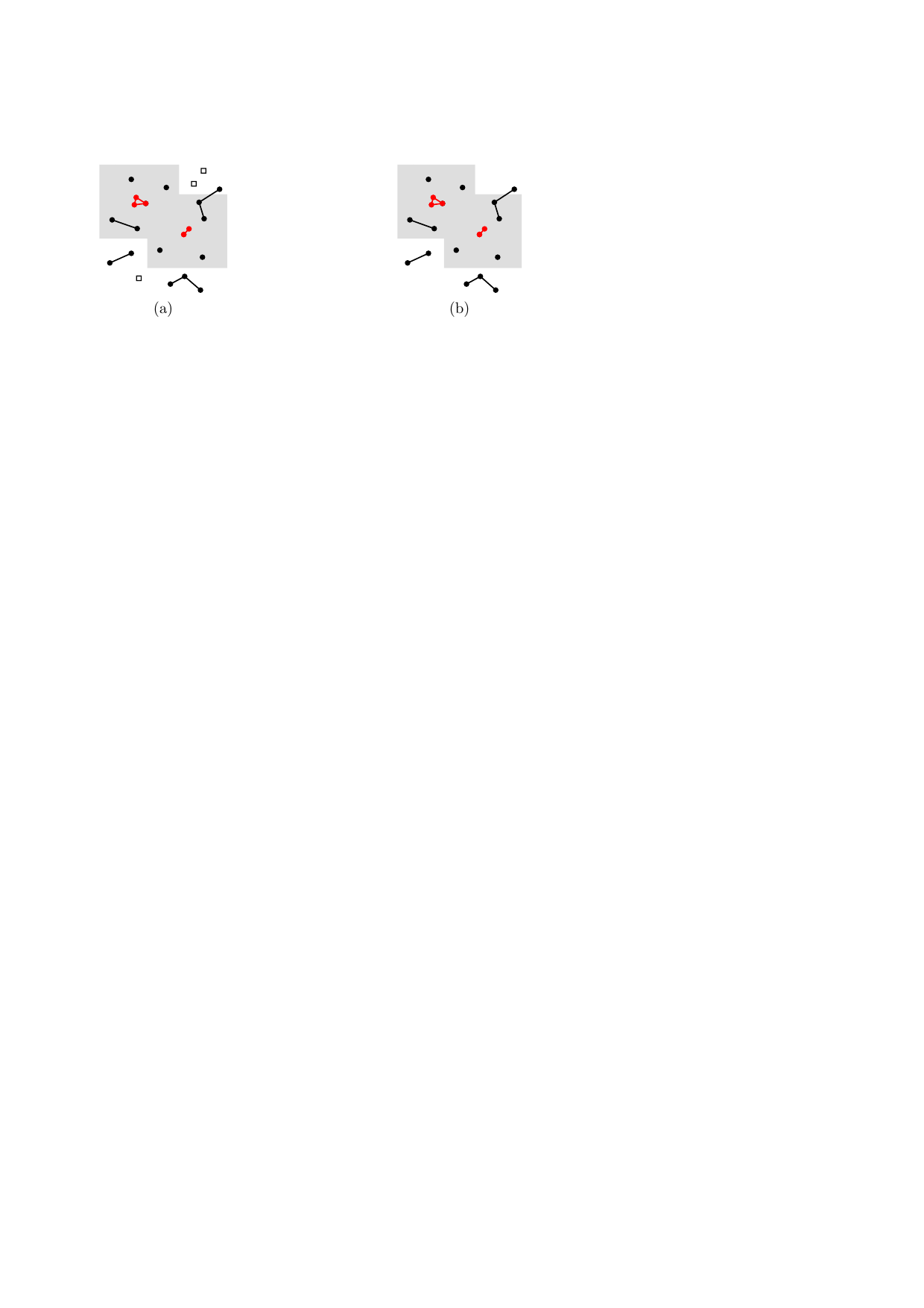}
    \caption{(a) Illustration of $\ud(V)$. The gray part represents a 5-neighboring cell of the grid cells containing at least two vertices. These vertices are colored in red. Square vertices are vertices of  $V\setminus V'$. (b) Illustration of $\ud(V')$. }
    \label{fig:VC_kernel}
\end{figure}

\begin{lemma}\label{lem:query_vc}
    If $(\ud(V),k)$ is a \textbf{yes}-instance of $\vc$, $|V'|=O(k)$.
\end{lemma}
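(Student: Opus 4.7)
The plan is to fix an arbitrary vertex cover $S$ of $\ud(V)$ with $|S|\le k$ and bound the two pieces making up $V'$---the non-isolated vertices, and the vertices lying in cells that are $5$-neighbors of some cell with at least two points---separately, using $S$ in two different ways.

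For the second piece I would exploit that any two points of $V$ in the same grid cell are adjacent, since the cell has diameter $1$. Letting $M$ be the set of cells $\Box$ with $t_\Box:=|V\cap \Box|\ge 2$, each such cell induces a clique and therefore forces at least $t_\Box-1$ of its points into $S$; a one-line summation $\sum_{\Box\in M}(t_\Box-1)\le |S|\le k$ combined with $t_\Box\ge 2$ then gives $|M|\le k$ and $\sum_{\Box\in M}t_\Box\le 2k$. Since each cell has only $11\times 11$ many $5$-neighbors, only $O(|M|)=O(k)$ cells are involved, and the vertex count in them splits into a heavy portion absorbed by the $2k$ bound and a light portion of at most one vertex per cell, totaling $O(k)$.

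For the first piece I would use $S$ in its guise as an edge cover. Every non-isolated vertex $v$ has a neighbor $u$, so $v\in S$ or $u\in S$; since $|uv|\le 1$ and the cells have diameter $1$, a routine grid calculation forces $d(\Box_v,\Box_u)$ to be at most some fixed constant $c$. Hence $\Box_v$ lies within grid-distance $c$ of a cell containing a vertex of $S$, giving at most $|S|(2c+1)^2=O(k)$ candidate cells, and the same heavy/light accounting on these $O(k)$ cells caps the total vertex count by $O(k)$. Summing the two pieces yields $|V'|=O(k)$.

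The only real subtlety I anticipate is pinning down the adjacency-reach constant $c$ from the geometry of the grid, but this is purely routine and only affects the constant hidden in $O(k)$. The buffer $5$ in the definition of $V'$ is not actually exploited by the count itself---it presumably plays its role elsewhere in the overall kernelization---so I will sanity-check that my heavy/light accounting does not implicitly rely on any specific choice of this constant.
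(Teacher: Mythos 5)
Your proposal is correct and follows essentially the same route as the paper's proof: fix a size-$k$ vertex cover, use the fact that each grid cell of diameter one induces a clique to bound the number of cells with at least two points (and hence their $5$-neighborhoods) by $O(k)$, and then account for the remaining non-isolated vertices through the cover. The only cosmetic difference is that you count those remaining vertices by locating them in the $O(k)$ cells within bounded grid-distance of the cover and reusing your heavy/light accounting, whereas the paper argues via the constant degree of covering vertices lying outside $\boxplus_{\geq 2}$; both are sound and yield the same $O(k)$ bound.
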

\begin{proof}
    Recall that $(\ud(V),k)$ is a \textbf{yes}-instance if and only if $(\ud(V'),k)$ is a \textbf{yes}-instance.
    Let $X\subset V'$ be a vertex cover of $\ud(V')$ of size $k$.
    Then $\ud(V')-X$ is a unit disk graph where the vertices are isolated.

    Let $\boxplus_{\geq 2}$ be the union of the 5-neighboring cells of the grid cells containing at least two vertices.
    Note that the vertices in the same grid cell form a clique.
    Thus a grid cell containing at least two vertices contains at least one vertex in $X$, and each grid cell can contain at most one vertex not in $X$.
    Thus the number of grid cells in $\boxplus_{\geq 2}$ is $O(k)$, and thus $\boxplus_{\geq 2}$ contains at most $O(k)$ vertices.
    Note that an endpoint of an edge is in a 5-neighboring cell of the grid cell containing the other endpoint.
    Thus the vertices not in $\boxplus_{\geq 2}$ are not covered by a vertex of $X$ in a grid cell containing at least two vertices.
    Note that such vertices of $X$ have a constant degree.
    Thus the number of vertices not in $\boxplus_{\geq 2}$ is $O(k)$ since $X$ is a vertex cover of size $k$.
    This implies that $|V'|=O(k)+|X|=O(k)$.
\end{proof}

To update $V'$ efficiently, we store the size of $\Box\cap V$ for each cell $\Box$  with $\Box\cap V\neq \emptyset$.

\subparagraph*{Update algorithm.}
Suppose that we already have a non-isolated vertex set $V'$ for the current vertex set $V$.
When we insert a vertex $v$, we first update the number of vertices in $\Box_v$.
After that, we need to update $V'$. 
For this, we access the ids of the 5-neighboring cells $\Box$ of $\Box_v$ containing $V \cup \{v\}$.
If one $\Box$ of the cells along with $\Box_v$ itself contains at least two vertices of $V$, then we insert the vertex $v$ into $V'$.
Additionally, if $\Box_v$ contains at least two vertices of $V$, we insert the vertices in the 5-neighboring cells of $\Box_v$ containing at most one vertex.
Note that the number of such vertices is $O(1)$.
Otherwise, only $O(1)$ vertices are contained in the 5-neighboring cells of $\Box_v$.
We insert $v$ into $V'$ when $v$ is adjacent to one of those $O(1)$ vertices in $\ud(V)$.
This takes $O(1)$ time.
Analogously, we can delete a vertex from $V$ in $O(1)$ time.


\subparagraph*{Query algorithm.}
Given an integer $k$ as a query, we aim to find a vertex cover of size $k$ of $\ud(V)$.
Recall that we maintain $V'$ for the current $V$, and it is sufficient to check if $(\ud(V'),k)$ is a \textbf{yes}-instance.
By Lemma~\ref{lem:query_vc}, if $|V'|$ is at least $\Omega(k)$, then we return \textbf{no} immediately.
Otherwise, we compute a vertex cover of $\ud(V')$ of size $k$ using the clique-based tree decomposition as follows.
We compute a tree decomposition of clique-based weighted treewidth $O(\sqrt{k})$ of $\ud(V')$ in $\text{poly}(k)$ time by applying the algorithms in~\cite{an2024eth,cygan2015parameterized,de2020framework}.
For each clique in $\ud (V')$, a vertex cover contains all but one vertex of the clique.
Thus, we can compute the minimum vertex cover of $\ud(V')$ (also in $\ud(V)$) in $2^{O(\sqrt k)}$ time using a standard dynamic programming algorithm on a tree decomposition of clique-based weighted treewidth $O(\sqrt k)$ as observed in~\cite{de2020framework}.

\begin{theorem}
    There is an $O(n)$-sized fully dynamic data structure on the unit disk graph induced by a vertex set $V$ supporting $O(1)$ update time that allows us to compute a
    vertex cover of size at most $k$ in $2^{O(\sqrt k)}$ time. 
\end{theorem}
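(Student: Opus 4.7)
The plan is to assemble the three pieces already developed in this section into a single data structure and verify that the claimed complexities hold.

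First, I would describe the stored information explicitly. For the current vertex set $V$, the data structure consists of (i) the global linked list representing $V$, (ii) for every grid cell $\Box$ with $\Box\cap V\neq\emptyset$, the id of $\Box$ together with a linked list for $\Box\cap V$ and a counter $|\Box\cap V|$, and (iii) the set $V'$ maintained as a doubly linked list, with each vertex storing a pointer to its position in $V'$ (or a null pointer if $v\notin V'$). Since each vertex contributes a constant amount of information to (ii) and at most one entry to (iii), the total space is $O(n)$.

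Next, I would justify the $O(1)$ amortized update bound. Upon inserting or deleting a vertex $v$, only the membership in $V'$ of vertices lying in the $5$-neighborhood of $\Box_v$ can change, because a vertex is in $V'$ precisely when either it is non-isolated or it lies in the $5$-neighborhood of some cell containing at least two vertices. As argued in the update algorithm above, unless $\Box_v$ contains at least two vertices of $V$, only $O(1)$ vertices lie in the $5$-neighborhood of $\Box_v$, so direct recomputation takes $O(1)$ time; otherwise we only need to insert into $V'$ the $O(1)$ vertices that lie in $5$-neighboring cells containing at most one vertex, using the counters from (ii). Using dynamic perfect hashing to fetch cell ids in $O(1)$ amortized time, each update costs $O(1)$ amortized time.

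Finally, I would justify the $2^{O(\sqrt k)}$ query bound. Given $k$, read off $|V'|$; by Lemma~\ref{lem:query_vc}, if $|V'|$ exceeds the constant in the $O(k)$ bound we safely return \textbf{no}. Otherwise $|V'|=O(k)$, and since $(\ud(V),k)$ is a \textbf{yes}-instance iff $(\ud(V'),k)$ is, we may work solely on $\ud(V')$. Using the algorithm of~\cite{an2024eth,cygan2015parameterized,de2020framework}, I construct a clique-based tree decomposition of $\ud(V')$ of clique-based weighted width $O(\sqrt k)$ in $\mathrm{poly}(k)$ time. Because any vertex cover must retain all but one vertex of each clique bag, a standard bag-by-bag dynamic program decides \vc{} on $\ud(V')$ in $2^{O(\sqrt k)}$ time, as observed in~\cite{de2020framework}. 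Combining these three steps yields the theorem.

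The main obstacle I anticipate is a clean treatment of the border case $|V'|>ck$: one must argue that returning \textbf{no} is correct, which is exactly the contrapositive of Lemma~\ref{lem:query_vc} and so is already in hand. The remaining subtlety is purely bookkeeping, namely tracking membership in $V'$ under vertex updates using only the cell counters in~(ii), which is immediate from the definition of $V'$.
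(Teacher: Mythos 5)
Your proposal is correct and follows essentially the same route as the paper: the same kernel $V'$ (non-isolated vertices plus vertices in $5$-neighboring cells of cells with at least two points), the same use of Lemma~\ref{lem:query_vc} to reject when $|V'|$ is too large, the same cell-counter-based $O(1)$ amortized update, and the same clique-based tree decomposition plus ``all but one vertex per clique'' dynamic program for the query. One phrasing slip worth fixing: conditioning on whether $\Box_v$ itself has at least two vertices does not guarantee that only $O(1)$ vertices lie in the $5$-neighborhood of $\Box_v$ (another nearby cell could be heavily populated); the correct case split is on whether \emph{some} $5$-neighboring cell has at least two vertices, since vertices in such cells are already in $V'$ and need no update -- this matches the paper's intent and does not affect correctness.
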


\section{Dynamic Triangle Hitting Set Problem}\label{sec:hittings}
In this section, we describe a fully dynamic data structure on the unit disk graph of a vertex set $V$ dynamically changing under vertex insertions and deletions that can answer
triangle hitting set queries efficiently. 
Each query is given with a positive integer $k$ and 
asks to return a triangle hitting set of $\ud(V)$ of size at most $k$.
This data structure will also be used for \textsc{Feedback Vertex Set} and \textsc{Cycle Packing}
in Sections~\ref{sec:fvs} and~\ref{sec:cpp}.

Our strategy is to maintain a \emph{kernel} of $(\ud(V),k)$. 
More specifically, consider the set $V_\textsf{tri}$ of vertices contained in triangles of $\ud(V)$.
Then $(\ud(V_\textsf{tri}),k)$ is a \textbf{yes}-instance if and only if $(\ud(V),k)$ is a \textbf{yes}-instance, i.e., it is a kernel of 
$(\ud(V),k)$. We can show that the size of $V_\textsf{tri}$ is $O(k)$
if $(\ud(V),k)$ is a \textbf{yes}-instance. Therefore, it is sufficient to maintain
$V_\textsf{tri}$ for answering queries.
However, it seems unclear if $V_\textsf{tri}$ can be updated in $O(1)$ time, which is the desired update time.
In particular, imagine that a vertex $v$ is inserted to $V$, and we are to determine if $v$ is contained in a triangle of $\ud(V)$. In the case that two neighboring cells $\Box_1$ and $\Box_2$ of $\Box_v$ contain $\omega(k)$ vertices of $V$, we need to determine if there are vertices $x_1\in\Box_1$ and $x_2\in\Box_2$ such that $x_1, x_2$ and $v$ form a triangle of $\ud(V)$.
 
To overcome this issue, we use a superset of $V_\textsf{tri}$ as a kernel. 
Notice that as long as a subset of $V$ contains $V_\textsf{tri}$, it is a kernel of $(\ud(V),k)$.

\begin{figure}
    \centering
    \includegraphics[width=0.48\textwidth]{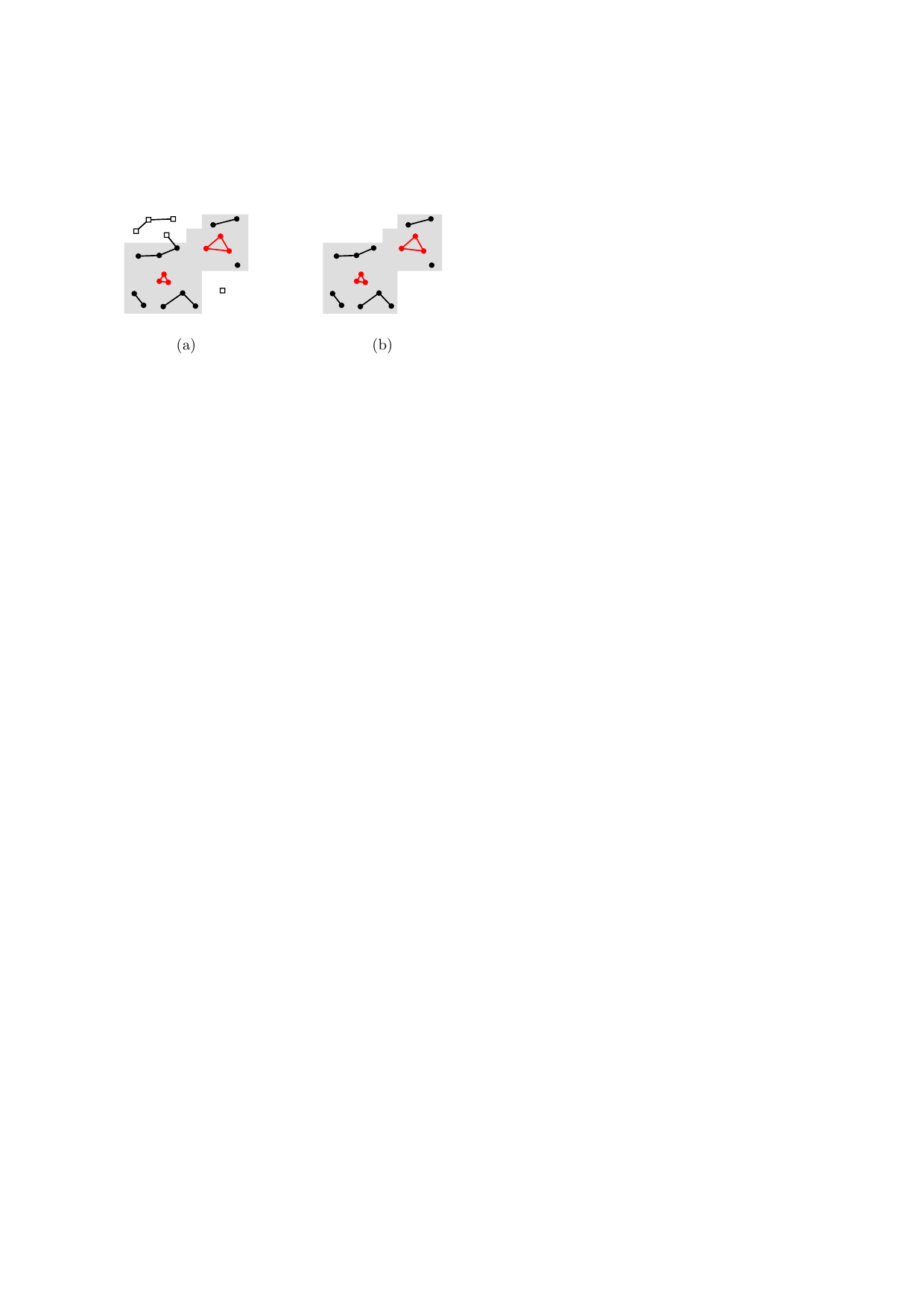}
    \caption{(a) Illustration of $\ud(V)$. The gray part represents the cells of $\boxplus_\irr$. Vertices of $V_\textsf{tri}$ are colored in red and square vertices are vertices of $V \setminus V_\irr$. (b) Illustration of $\ud(V_\irr)$.}
    \label{fig:V_tri_and_V_core}
\end{figure}

\subparagraph*{Kernel: $\boxplus_\irr$ and $V_\irr$.}
The \emph{core grid cluster}, denoted by $\boxplus_\irr$, is defined as the union of the 5-neighboring cells of the grid cells containing a vertex of $V_\textsf{tri}$ and the 10-neighboring cells of the grid cells
containing at least three vertices of $V$. Then let $V_\irr$ be the set of vertices of $V$ contained in $\boxplus_\irr$.
See Figure~\ref{fig:V_tri_and_V_core}.
Note that the degree of every vertex of $V\setminus V_\irr$ in $\ud(V)$ is $O(1)$.
We will use this property for designing the update algorithm.

\begin{lemma}\label{lem:irrducible_bound_fvs}
        The size of $V_{\irr}$ is $O(k)$ if $\ud(V)$ has at most $k$ vertex-disjoint triangles.
\end{lemma}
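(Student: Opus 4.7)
My plan is to bound separately the two families of grid cells whose neighborhoods define $\boxplus_\irr$. Let $C_3 := \{\Box : |V\cap\Box|\geq 3\}$ and $S := \{\Box : V_\textsf{tri}\cap\Box \neq \emptyset\}$, so that $\boxplus_\irr$ is covered by the $5$-neighborhoods of cells in $S$ together with the $10$-neighborhoods of cells in $C_3$. It therefore suffices to bound $|S|$ and $|C_3|$, and then count vertices inside each category of cell.

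For $C_3$, the key observation is that every cell has diameter one, so all vertices inside a single cell are pairwise adjacent; hence each $\Box$ with $|V\cap\Box|\geq 3$ contains $\lfloor |V\cap\Box|/3\rfloor$ vertex-disjoint triangles, and triangles in distinct cells are automatically vertex-disjoint. Summing across $C_3$ gives at most $k$ such triangles, whence $|C_3|\leq k$ and, using $n\leq 3\lfloor n/3\rfloor +2$, we get $\sum_{\Box\in C_3}|V\cap\Box| \leq 3k + 2|C_3| = O(k)$. For $S$, I would run a packing argument. Two adjacent vertices of $\ud(V)$ necessarily lie in cells at $d$-distance at most $2$ (this is forced by the cell diameter being one and adjacency meaning Euclidean distance at most one), so for every $\Box\in S$ there is a witness triangle through some $v\in V_\textsf{tri}\cap \Box$ whose three vertices all sit inside the $2$-neighborhood of $\Box$. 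Since $d$ is a max-norm metric, cells at mutual $d$-distance $\geq 5$ have disjoint $2$-neighborhoods, making the corresponding witness triangles vertex-disjoint. Greedily selecting cells of $S$ that are pairwise at $d$-distance $\geq 5$ picks at least $|S|/81$ cells (each chosen cell forbids only the $9\times 9$ block around it), and the hypothesis on disjoint triangle packings bounds this by $k$, yielding $|S|=O(k)$.

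Combining the two estimates, $\boxplus_\irr$ consists of $O(|S|\cdot 11^2 + |C_3|\cdot 21^2) = O(k)$ grid cells. Cells of $\boxplus_\irr$ with at most two vertices contribute $O(k)$ vertices in total; cells with three or more vertices lie in $C_3$ and contribute $O(k)$ vertices in total by the first step. Adding the two contributions gives $|V_\irr|=O(k)$. The one place that needs real care is the packing step: one must verify both that a witness triangle genuinely sits inside the $2$-neighborhood of its cell and that disjoint $2$-neighborhoods really produce vertex-disjoint triangles, both of which rely on the unit diameter of grid cells together with the distance-one adjacency criterion of $\ud(V)$.
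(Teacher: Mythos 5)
Your proof is correct; every step checks out, including the two places you flagged as needing care (a triangle through a vertex of $\Box$ does sit in the $2$-neighborhood of $\Box$ because adjacent vertices of $\ud(V)$ lie in cells at $d$-distance at most $2$, and $d$-distance $\geq 5$ between cells does force disjoint $2$-neighborhoods by the triangle inequality for the max-norm). The overall skeleton is the same as the paper's --- first bound the number of grid cells of $\boxplus_\irr$ by $O(k)$ via a disjoint-triangle argument, then bound the vertices per cell --- but the mechanism inside each step is different. The paper fixes a \emph{maximum} family of vertex-disjoint triangles and exploits its maximality twice: every triangle of $\ud(V)$ must share a vertex with the packing (so all cells meeting $V_\textsf{tri}$ sit near the $O(k')$ packing cells), and every cell has all but at most two of its vertices inside the packing. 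You never fix a maximum packing; instead you exhibit explicit disjoint families whose cardinality is bounded directly by the hypothesis --- one triangle per dense cell of $C_3$ (with the $\lfloor n/3\rfloor$ refinement giving the total vertex count in dense cells), and one witness triangle per cell in a greedily chosen $d$-distance-$\geq 5$ subfamily of $S$. Your route is slightly longer but arguably more self-contained: it avoids the implicit maximality claims and replaces them with a standard sparsification/packing argument, at the cost of worse explicit constants (e.g.\ the factor $81$). Either version is fine for the $O(k)$ bound the lemma needs.
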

\begin{proof}
    We fix a maximum number of vertex-disjoint triangles of $\ud(V)$. Let $k'$ be the number of such triangles. 
    We first demonstrate that  
    there are $O(k')$ grid cells contained in $\boxplus_\irr$. 
    For this, observe that the vertices of the vertex-disjoint triangles are contained in $O(k')$ grid cells. 
    Moreover, a vertex of any triangle of $\ud(V)$ lies in a 5-neighboring  cell of such grid cells. Therefore, the number of grid cells containing vertices of $V_\textsf{tri}$ is $O(k')$. By definition, a grid cell contained in $\boxplus_\irr$  is a 10-neighboring cell of a grid cell containing a vertex of $V_\textsf{tri}$. 
    Therefore, the number of grid cells contained in $\boxplus_\irr$ is also $O(k')$.

    Now we focus on the number of vertices contained in $\boxplus_\irr$. 
    Note that for each grid cell $\Box$, all but at most two vertices of $V\cap \Box$ must be contained in the vertex-disjoint triangles.  
    Therefore, 
    the number of vertices of $V$ contained in $\boxplus_\irr$ but not contained in the vertex-disjoint triangles is $O(k')$, and the number of vertices of $V$ contained in the vertex-disjoint triangles is $3k'$. 
    In conclusion, $V_\irr$ has $O(k')$ vertices. Moreover, if $k'\leq k$, the size of $V_\irr$ is $O(k)$.
\end{proof}

\begin{observation}\label{obs:neighbors}
    Given a vertex $v\in V\setminus V_\irr$, we can compute its neighbors in $\ud(V)$ in $O(1)$ time. 
\end{observation}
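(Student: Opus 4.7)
The plan is to convert the degree bound ``the degree of every vertex of $V \setminus V_\irr$ in $\ud(V)$ is $O(1)$'' into an explicit enumeration procedure. Since $v \in V \setminus V_\irr$, the cell $\Box_v$ does not lie in $\boxplus_\irr$; in particular, by the second clause of the definition of $\boxplus_\irr$, $\Box_v$ is not a $10$-neighboring cell of any grid cell containing three or more vertices of $V$. Thus every grid cell within grid distance $10$ of $\Box_v$ contains at most two vertices of $V$.

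The first step is to observe that any neighbor $u$ of $v$ in $\ud(V)$ satisfies $\|u-v\| \le 1$, and since every grid cell has diameter $1$, such a $u$ must lie in a cell within a fixed constant grid distance of $\Box_v$ (the paper uses $5$-neighboring cells for the analogous bound in Section~\ref{apx:vertex_cover}); in particular $\Box_u$ is a $10$-neighboring cell of $\Box_v$. There are $O(1)$ such candidate cells, and by the paragraph above each contains at most $2$ vertices of $V$.

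The second step is to run through these candidates. I enumerate the ids of the $O(1)$ relevant cells near $\Box_v$, fetch the linked list attached to each such id, and then test the $O(1)$ resulting vertices individually for Euclidean distance at most $1$ from $v$. Each id-lookup takes $O(1)$ amortized time via the dynamic perfect hashing discussed in the grid preliminaries, and the number of lookups and distance tests is constant, giving the claimed $O(1)$ bound (consistent with the paper's convention that update-side computations may be amortized).

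The only mild obstacle is a bookkeeping one: one must pick a single constant $c$ (at most $10$) that both bounds the grid distance from $\Box_v$ to any possible neighbor's cell and lies inside the range on which the ``at most two vertices per cell'' guarantee from the definition of $\boxplus_\irr$ applies. Any such $c$ works, so the argument goes through without difficulty.
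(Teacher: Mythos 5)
Your proof is correct and follows exactly the reasoning the paper relies on (the paper leaves this observation unproved, but the same argument appears inside the proof of Observation~\ref{obs:irreducible_in_constant}): since $\Box_v\notin\boxplus_\irr$, every $10$-neighboring cell of $\Box_v$ holds at most two vertices, any neighbor of $v$ lies in one of the $O(1)$ nearby cells, and each cell's linked list is fetched in $O(1)$ amortized time. Nothing further is needed.
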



\subparagraph*{Query algorithm.} 
Given an integer $k$ as a query, our goal is to check if $(\ud(V),k)$ is a \textbf{yes}-instance. Since $(\ud(V_\irr),k)$ is a kernel of $(\ud(V),k)$, it is sufficient to check if $(\ud(V_\irr),k)$ is a \textbf{yes}-instance. 
We first check if the size of $V_\irr$ is $O(k)$.
If it is not the case, $\ud(V_\irr)$ contains at least $k+1$ vertex-disjoint triangle by Lemma~\ref{lem:irrducible_bound_fvs}, and thus any triangle hitting set of $\ud(V_\irr)$
has a size larger than $k$. We return \textbf{no} immediately. 
If the size of $V_\irr$ is $O(k)$, the clique-based weighted treewidth of $\ud(V_\irr)$ is $O(\sqrt k)$ by Lemma~2 of~\cite{an2023faster}.
In this case, the smallest triangle hitting set of $\ud(V_\irr)$ can be computed in $2^{O(\sqrt k)}$ time as follows. 
We compute a tree decomposition of clique-based weighted treewidth $O(\sqrt{k})$ of $\ud(V_\irr)$ in $\text{poly}(k)$ time.
For a clique in $\ud(V_\irr)$, every triangle hitting set contains all but at most two vertices in the clique.
Thus we can compute the minimum triangle hitting set of $\ud(V)$ in $2^{O(\sqrt{k})}$  time 
using a standard dynamic programming algorithm on a tree decomposition of clique-based weighted treewidth $O(\sqrt k)$ as observed in~\cite{de2020framework}.

\subparagraph*{Update algorithm.}
Suppose that we already have $V_\irr$ for the current vertex set $V$.
Imagine that we aim to insert a vertex $v$ to $V$. Then we need to update $V_\irr$. 
For this, it suffices to determine if $\Box$ must be contained in $\boxplus_\irr$ for every 10-neighboring cell $\Box$ of $\Box_v$. 
By Observation~\ref{obs:irreducible_in_constant}, this takes $O(1)$ time. 
The deletion of a vertex $v$ from $V$ can be handled analogously in $O(1)$ time.

\begin{observation}\label{obs:irreducible_in_constant}
    We can check if a grid cell $\Box$ is contained in $\boxplus_\irr$ in $O(1)$ time.
\end{observation}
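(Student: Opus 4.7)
The plan is to exploit the fact that the two clauses defining $\boxplus_\irr$ are nested in a way that removes any need to maintain finer triangle information: the 10-neighbor clause, together with the maintained cell counts, already certifies $\Box\in\boxplus_\irr$ in every configuration in which checking the 5-neighbor clause would otherwise require a nontrivial triangle search.

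The only auxiliary data I would assume is maintained is the count $|V\cap\Box'|$ for every nonempty grid cell $\Box'$, which can clearly be updated in $O(1)$ amortized time per vertex insertion or deletion alongside the cell linked lists introduced in the preliminaries.

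Given $\Box$, the algorithm has two stages. In the first stage I enumerate the at most $21^2$ cells $\Box'$ with $d(\Box,\Box')\leq 10$ and read off the stored counts; if any count is at least $3$, I return \textbf{true} because the 10-neighbor clause applies. Otherwise, every cell within distance $10$ of $\Box$ contains at most two vertices of $V$. Since the two endpoints of an edge of $\ud(V)$ lie in grid cells at $d$-distance at most $2$ of each other (every cell has diameter one, so a unit-length edge spans at most two cells), every triangle of $\ud(V)$ witnessing that some cell $\Box'$ with $d(\Box,\Box')\leq 5$ contains a vertex of $V_\textsf{tri}$ lies entirely in cells within distance $5+2=7\leq 10$ of $\Box$. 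Because all of those cells hold at most two vertices, there are at most $2\cdot 15^2=O(1)$ candidate vertices in total. The second stage therefore just enumerates, for each cell $\Box'$ with $d(\Box,\Box')\leq 5$, each vertex $v\in V\cap\Box'$, and each unordered pair $\{u,w\}$ of other candidate vertices, and tests in constant time whether $\{u,v,w\}$ forms a triangle of $\ud(V)$; if any test succeeds I return \textbf{true}, and otherwise I return \textbf{false}.

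Both stages inspect $O(1)$ cells and $O(1)$ vertices and perform $O(1)$ constant-time adjacency tests, so the whole routine runs in $O(1)$ time. The only step that needs verification is the distance bookkeeping in the second stage, which follows from the triangle inequality for the max-metric on cell ids; everything else is mechanical. Since nothing beyond the cell counts is required, the observation slots cleanly into the $O(1)$-amortized update budget used by the surrounding update algorithm.
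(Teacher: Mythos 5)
Your proposal is correct and follows essentially the same route as the paper's proof: first test the 10-neighbor ``at least three vertices'' clause via the stored cell counts, and in the remaining case exploit that every cell within distance $10$ holds at most two vertices so that the triangle test for the 5-neighbor clause reduces to a brute-force check over $O(1)$ candidate vertices. The only difference is that you make the distance bookkeeping (edge endpoints lie in cells at id-distance at most $2$, so relevant triangles stay within distance $7\leq 10$) explicit, which the paper leaves implicit.
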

\begin{proof}
    Recall that $\Box$ is contained in $\boxplus_\irr$ 
    if and only if either one of its 10-neighboring cells contains at least three vertices of $V$ or one of 
    its 5-neighboring cells contains a vertex of $V_\textsf{tri}$. 
    We can check if $\Box$ belongs to the first case in $O(1)$ time  
    by enumerating the ids of all $10$-neighboring cells of $\Box$
    and fetching the linked lists stored for those cells. 
    If $\Box$ does not belong to the first case, all its 10-neighboring cells
    contain at most two vertices of $V$. 
    Therefore, we can enumerate all vertices contained in the 5-neighboring cells of $\Box$ in $O(1)$ time, and check if each of them is contained in a triangle of $\ud(V)$ in $O(1)$ time by enumerate all vertices contained in the 10-neighboring cells of $\Box$. In this way, we can determine if $\Box$ is contained in $\boxplus_\irr$ in $O(1)$ time. 
\end{proof}

\begin{theorem}\label{thm:ths_perf}
    There is an $O(n)$-sized fully dynamic data structure on the unit disk graph induced by a vertex set $V$ supporting $O(1)$ update time that allows us to compute a
    triangle hitting set of size at most $k$ in $2^{O(\sqrt k)}$ time. 
\end{theorem}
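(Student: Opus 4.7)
The plan is to show that maintaining the kernel $V_\irr$ together with the grid data structures yields the claimed update and query bounds, and that correctness follows from $V_\irr$ retaining every triangle of $\ud(V)$. I would first record the kernel property: any triangle of $\ud(V)$ uses vertices only from $V_\textsf{tri}\subseteq V_\irr$, so the triangles of $\ud(V_\irr)$ coincide with those of $\ud(V)$, and consequently $(\ud(V),k)$ is a \textbf{yes}-instance if and only if $(\ud(V_\irr),k)$ is. Thus it suffices to maintain $V_\irr$ during updates and solve \textsc{$k$-Triangle Hitting Set} on the kernel at query time.

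For updates, I maintain the linked list of $V\cap\Box$ and the count $|V\cap\Box|$ for each nonempty cell $\Box$, together with an explicit copy of $V_\irr$ supporting $O(1)$ membership and modification. When a vertex $v$ is inserted or deleted, only cells within a fixed constant Chebyshev distance of $\Box_v$ can change their membership in $\boxplus_\irr$: by the two triggers in the definition of $\boxplus_\irr$, a cell $\Box$ can enter or leave $\boxplus_\irr$ only if some $10$-neighbor of $\Box$ crosses the threshold of three vertices, forcing $\Box$ to lie at bounded distance from $\Box_v$, or if some $5$-neighbor of $\Box$ gains or loses a vertex of $V_\textsf{tri}$, which in turn requires a triangle to appear or disappear through $v$ so that the affected vertex is adjacent to $v$ and therefore lies near $\Box_v$. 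I enumerate this $O(1)$-sized set of candidate cells, apply Observation~\ref{obs:irreducible_in_constant} to each in $O(1)$ time, and add to or remove from $V_\irr$ exactly those vertices whose host cells flipped status, giving an $O(1)$ update bound. The total space is $O(n)$ because the grid, the counts, and $V_\irr\subseteq V$ each cost $O(n)$.

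For a query with parameter $k$, I first compare $|V_\irr|$ against the threshold $ck$ from Lemma~\ref{lem:irrducible_bound_fvs}: if exceeded, $\ud(V)$ contains more than $k$ vertex-disjoint triangles, so no size-$k$ triangle hitting set exists and I return \textbf{no}. Otherwise $\ud(V_\irr)$ has $O(k)$ vertices and, by Lemma~2 of~\cite{an2023faster}, admits a clique-based tree decomposition of weighted width $O(\sqrt k)$, which I compute in $\mathrm{poly}(k)$ time via~\cite{an2024eth,cygan2015parameterized,de2020framework}; a standard bag-wise dynamic program, using that any triangle hitting set includes all but at most two vertices of each clique, then solves the problem in $2^{O(\sqrt k)}$ time. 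The main obstacle is verifying the $O(1)$ update bound: one must argue that $V_\textsf{tri}$ changes only within a constant-radius neighborhood of $\Box_v$, so that the on-demand check via Observation~\ref{obs:irreducible_in_constant} applies to only $O(1)$ cells. Beyond this geometric locality lemma, the remaining work is routine bookkeeping on the grid and a direct invocation of the kernel-plus-treewidth framework.
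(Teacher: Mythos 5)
Your proposal matches the paper's proof essentially step for step: the same kernel $V_\irr$ defined via the core grid cluster $\boxplus_\irr$, the same locality argument that only cells within a constant Chebyshev distance of $\Box_v$ can change status (checked via Observation~\ref{obs:irreducible_in_constant}), and the same query algorithm that rejects when $|V_\irr|$ exceeds the $O(k)$ bound of Lemma~\ref{lem:irrducible_bound_fvs} and otherwise runs the clique-based tree-decomposition dynamic program. The only cosmetic difference is that you spell out the geometric locality of $V_\textsf{tri}$ changes slightly more explicitly than the paper does; the argument is correct and the bounds follow as claimed.
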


\section{Dynamic Feedback Vertex Set Problem} \label{sec:fvs}
In this section, we describe a fully dynamic data structure on the unit disk graph of a vertex set $V$ dynamically changing under vertex insertions and deletions that can answer
feedback vertex set queries efficiently. 
Each query is given with a positive integer $k$ and 
asks to return a feedback vertex set of $\ud(V)$ of size at most $k$.
As a data structure, we use the core grid cluster $\boxplus_\irr$ introduced in Section~\ref{sec:hittings}. In addition to this, we design a new data structure,
which will also be used for \textsc{Cycle Packing} in Section~\ref{sec:cpp}.

\begin{figure}
    \centering
    \includegraphics[width=0.72\textwidth]{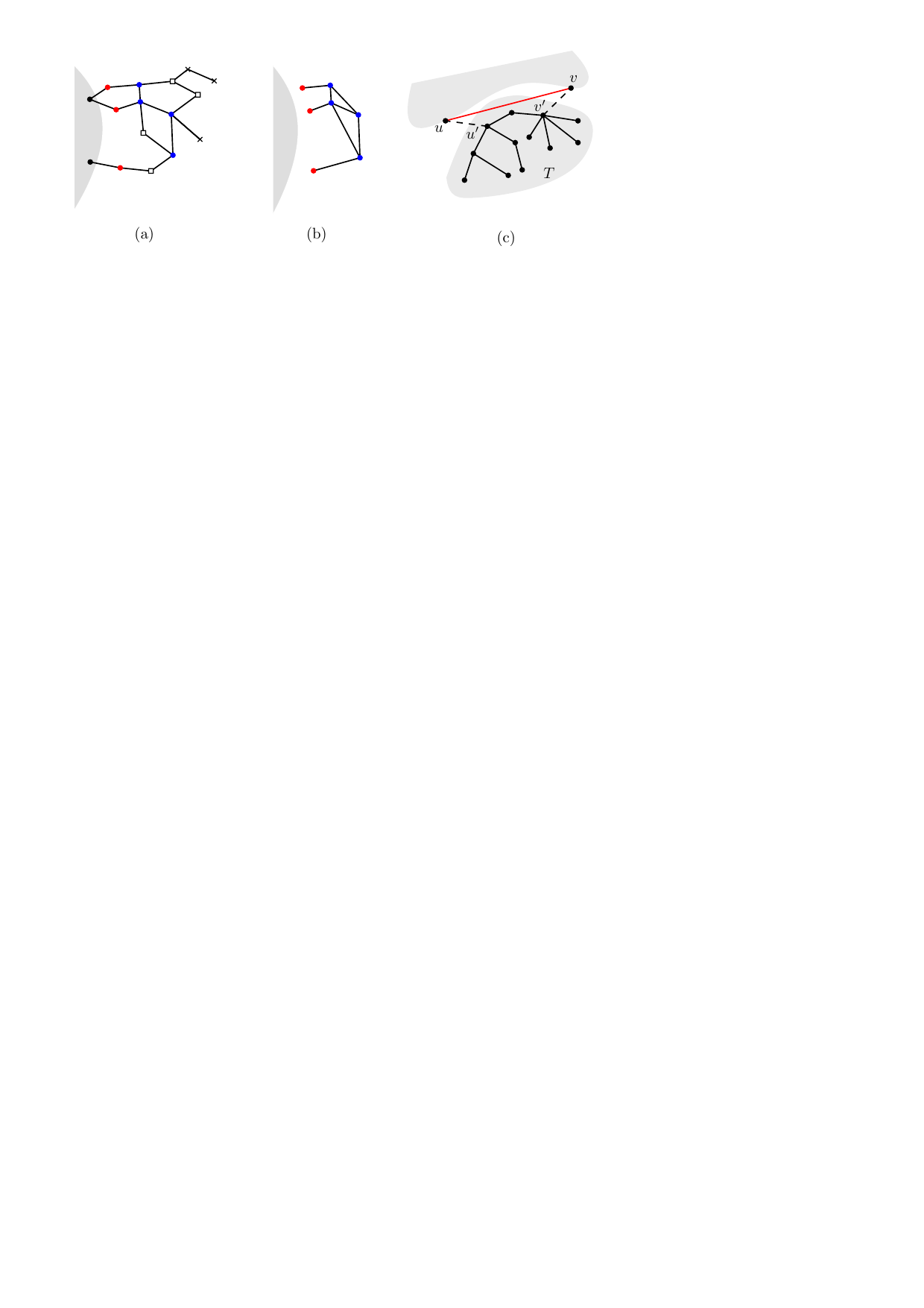}
    \caption{(a) Illustration of $\ud(V)$. The vertices of $\ud(V)$ in $V_\irr$, the boundary vertices, and the non-boundary vertices in $M$ are marked by the black, red, and blue vertices, respectively. The removed vertices and the contracted vertices in the construction of $M$ are marked by cross vertices and boxes, respectively. (b) Illustration of $M$. (c) In the construction of $M$, we are left with the induced path between $u$ and $v$, which will be contracted to the red edge in $M$. The two end edges of the path compose the bridge set of $T$.}
    \label{fig:M_construction_bridge}
\end{figure}

\subsection{Data Structure}
Note that a cycle of $\ud(V)$ might contain a vertex lying outside of $\boxplus_\irr$.
Thus we need to consider the part of $\ud(V)$ lying outside of $\boxplus_\irr$.
Since the complexity of $\ud(V\setminus V_\irr)$ can be $\Theta(n)$ in the worst case, 
we cannot afford to look at all such vertices to handle a query. Instead, we maintain a minor $M$ of $\ud(V\setminus V_\irr)$ of complexity $O(k)$, which will be called the \emph{skeleton} of $\ud(V\setminus V_\irr)$, such that
the graph obtained by gluing $\ud(V_\irr)$ and $M$ has a feedback vertex set of size $k$
if and only if $\ud(V)$ has a feedback vertex set of size $k$. 
Then it suffices to look at $V_\irr$ and $M$ to handle a query. 
Given an update of $V$, we need to update both $V_\irr$ and $M$. 
Since $M$ is a minor of $\ud(V\setminus V_\irr)$, some vertices of $\ud(V\setminus V_\irr)$ do not appear in $M$.
To handle the update of $V_\irr$ and $M$ efficiently, 
we construct an auxiliary data structure $\mathcal T$, which maintains 
the vertices of $\ud(V\setminus V_\irr)$ not appearing in $M$ efficiently.  

\subparagraph*{Skeleton $M$ of $\ud(V\setminus V_\irr)$.}
Given a query, we will glue $M$ and $\ud(V_\irr)$ together. For this purpose, we need to keep all vertices of $V\setminus V_\irr$ adjacent to a vertex of $V_\irr$ in the construction of $M$. We call 
such a vertex a \emph{boundary vertex}. 
We let $M$ be the graph obtained from $\ud(V\setminus V_\irr)$ by removing non-boundary vertices with the degree at most one in $M$ repeatedly, and then by 
contracting every maximal induced path consisting of only non-boundary vertices into a single edge. Note that the resulting graph $M$ is a minor of  $\ud(V\setminus V_\irr)$. 
Each vertex of $M$ corresponds to a vertex of $\ud(V\setminus V_\irr)$ of degree at least three or a boundary vertex. 
Furthermore, each edge of $M$ corresponds to an edge of $\ud(V)$ or an induced path of $\ud(V\setminus V_\irr)$.
See Figure~\ref{fig:M_construction_bridge}(a--b).
Note that $M$ is planar since every triangle-free disk graph is planar~\cite{breu1996algorithmic}.


\begin{lemma} \label{lem:fvs-m-bound}
If $(\ud(V),k)$ is a \textbf{yes}-instance of \textsc{Feedback Vertex Set}, $|V(M)|=O(k)$.
\end{lemma}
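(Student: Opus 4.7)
\subparagraph*{Proof plan.}
The plan is to split $V(M)$ into the set $B$ of boundary vertices and the set $N$ of non-boundary vertices, and to bound each using, respectively, the geometric structure of $\boxplus_\irr$ and a double-counting argument on a feedback vertex set of $M$.

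Since a feedback vertex set of size $k$ in $\ud(V)$ hits every cycle and hence every triangle, $\ud(V)$ contains at most $k$ vertex-disjoint triangles; Lemma~\ref{lem:irrducible_bound_fvs} then yields $|V_\irr|=O(k)$, and its proof also shows that the number of grid cells meeting $\boxplus_\irr$ is $O(k)$. Moreover, by the definition of $\boxplus_\irr$, any cell containing at least three vertices of $V$ already lies in $\boxplus_\irr$, so every cell outside $\boxplus_\irr$ contains at most two vertices of $V$. A boundary vertex lies in a cell within bounded Chebyshev distance of some cell of $\boxplus_\irr$ (since adjacency in $\ud(V)$ forces the endpoints into cells within bounded Chebyshev distance), and the number of such cells is $O(k)$; hence $|B|=O(k)$.

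Next I would establish two global properties of $M$. First, each vertex of $V\setminus V_\irr$ is adjacent in $\ud(V\setminus V_\irr)$ only to vertices inside constantly many cells, each of which lies outside $\boxplus_\irr$ and therefore contains at most two vertices; hence $\Delta(\ud(V\setminus V_\irr))=O(1)$. Second, the construction of $M$ applies only deletions of low-degree non-boundary vertices and contractions of induced paths of degree-$2$ non-boundary vertices; both operations preserve the $O(1)$ maximum-degree bound and do not increase the feedback vertex number, so $\Delta(M)=O(1)$ and $M$ has a feedback vertex set of size at most $k$. Moreover, by construction every vertex in $N$ has degree at least $3$ in $M$.

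Finally, I would finish by double-counting. Fix a feedback vertex set $Z$ of $M$ with $|Z|\le k$ and let $F=M\setminus Z$, which is a forest. For every $v\in N\cap V(F)$ we have $\deg_M(v)\ge 3$, so, writing $e(v,Z)=|N_M(v)\cap Z|$,
\[
3\,|N\cap V(F)| \;\le\; \sum_{v\in N\cap V(F)}\bigl(\deg_F(v)+e(v,Z)\bigr) \;\le\; 2|V(F)|+|E(V(F),Z)|.
\]
Using $|E(V(F),Z)|\le \Delta(M)\cdot|Z|=O(k)$ and $|V(F)|\le |N\cap V(F)|+|B|$, this rearranges to $|N\cap V(F)|\le 2|B|+O(k)=O(k)$, and combining with $|N\cap Z|\le|Z|$ and the bound on $|B|$ gives $|V(M)|=O(k)$.

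The main obstacle is verifying that the minor operations building $M$ simultaneously preserve both the constant maximum-degree bound and the min-degree-$3$ property on non-boundary vertices, so that the degree-counting step applies; the two bounds on $B$ and on $N$ then come from genuinely different sources, namely the geometry of $\boxplus_\irr$ and the FVS-based double count.
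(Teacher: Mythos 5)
Your proof is correct and follows essentially the same route as the paper's: both pass to a size-$k$ feedback vertex set of the minor $M$, look at the resulting forest, and exploit the facts that non-boundary vertices of $M$ have degree at least three, that $\Delta(M)=O(1)$, and that there are $O(k)$ boundary vertices. Your degree-sum double count is a slightly more careful rendering of the paper's leaf-counting step (it transparently handles forest vertices whose degree drops to two after removing the feedback vertex set), and you additionally supply the $|B|=O(k)$ bound that the paper's proof only asserts.
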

\begin{proof}
Recall that $M$ is a minor of $\ud(V)$.
Thus, $(M,k)$ is also a \textbf{yes}-instance for \fvs, if $(\ud(V),k)$ is a \textbf{yes}-instance of \textsc{Feedback Vertex Set}.
Let $X$ be a feedback vertex set of $M$ of size $k$.

We have that $M-X$ is a forest.
Since every vertex of $M$ of degree at most two is a boundary vertex of $M$, every leaf node of the forest $M-X$ is adjacent to a vertex of $X$ or $V_\irr$.
The number of such leaf nodes is at most $O(k)$ since every vertex of $M$ has constant degree and the number of boundary vertices in $M$ is $O(k)$. 
Furthermore, the number of vertices of degree at least three is linear in the number of leaf nodes of the trees, and thus, there are at most $O(k)$ vertices in $M$.
\end{proof}


\subparagraph*{Contracted forest $\mathcal T$ concerning $M$.}
We will see that it suffices to maintain $V_\irr$ and $M$ for the query algorithm.
However, to update $M$ efficiently, we need a data structure for the subgraph of $\ud(V)$ induced by $V\setminus (V_\irr\cup V(M))$. 
Note that the subgraph is a forest. We maintain the trees of this forest using the link-cut tree data structure. Let $\mathcal T$ denote the link-cut tree data structure. If it is clear from the context, we sometimes use $\mathcal T$ to denote the forest itself. 
Along with the link-cut trees, we associate each tree $T$ of $\mathcal T$ with a \emph{bridge set}. An edge of $\ud(V)$ incident to both $V(M)$ and $V\setminus (V_\irr\cup V(M))$ is called a \emph{bridge}. Then the bridge set of $T$
is the set of bridges incident to $T$.
See Figure~\ref{fig:M_construction_bridge}(c).

\begin{figure}
    \centering
    \includegraphics[width=0.45\textwidth]{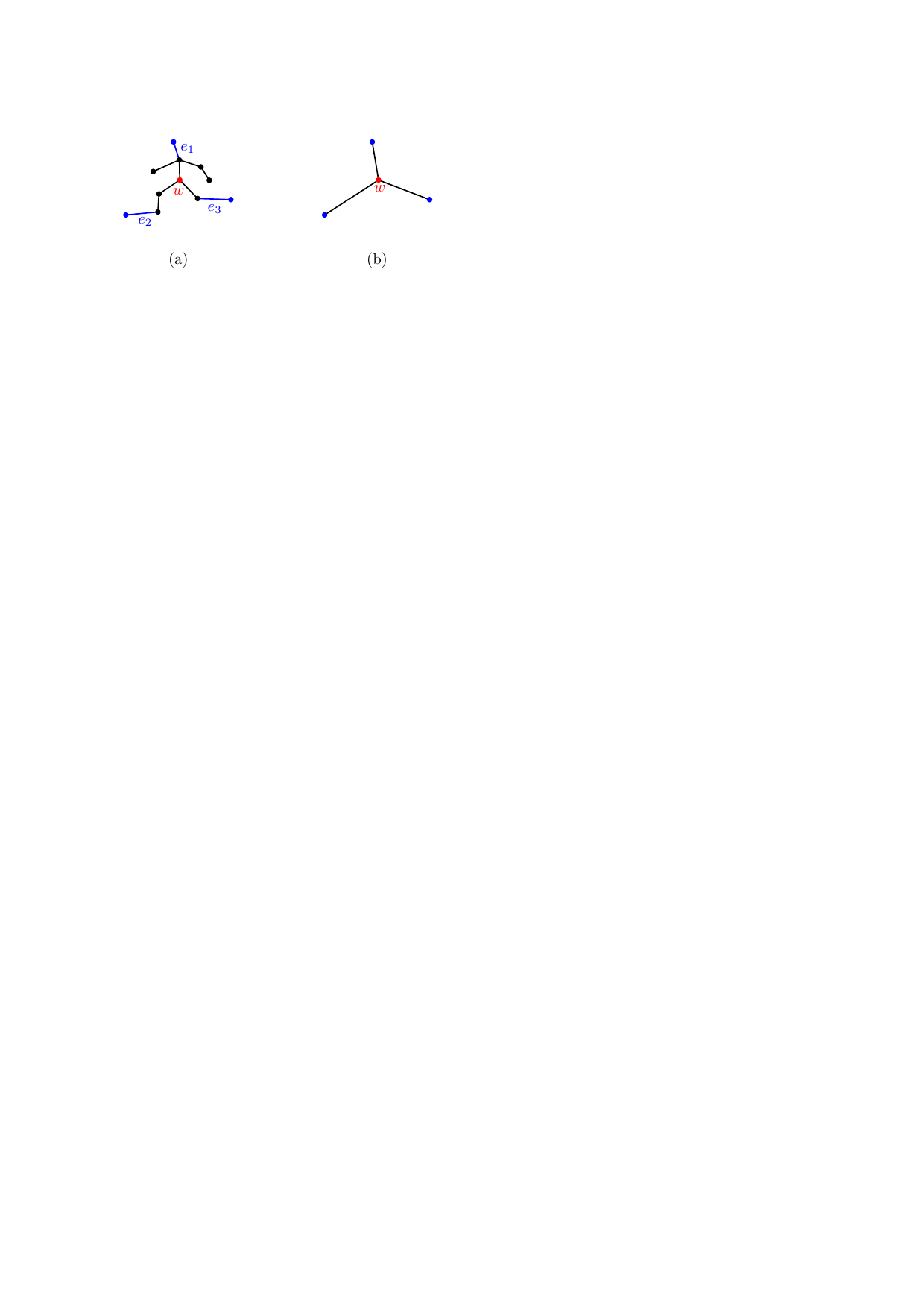}
    \caption{(a) A tree with 3 bridges $e_1$, $e_2$ and $e_3$. The vertex $w$ is the lowest common ancestor of the endpoints of $e_2$ and $e_3$. (b) The vertex $w$ has degree 3 after removing and contracting process, so it needs to appear in $M$. }
    \label{fig:branch_size_3}
\end{figure}

\begin{observation}\label{obs:edges-T-M}
    Each tree of $\mathcal T$ is incident to at most two bridges. 
\end{observation}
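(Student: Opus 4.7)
The plan is to argue by contradiction: suppose some tree $T$ of $\mathcal T$ is incident to at least three bridges $e_1$, $e_2$, $e_3$. Let $v_i\in V(T)$ denote the endpoint of $e_i$ in $T$ (the $v_i$ need not be distinct) and $u_i\in V(M)$ the other endpoint. The goal is to exhibit a \emph{branching vertex} $w$ inside $T$ that could not have been discarded during either phase of the construction of $M$, contradicting $V(T)\cap V(M)=\emptyset$.

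First, I would define $w$ by cases. If at least two of the $v_i$ coincide, take $w$ to be that common endpoint; because $T$ is connected, $w$ then has two bridges going out of it to $V(M)$ plus at least one edge in $T$ towards the remaining $v_j$, giving degree $\geq 3$ in $\ud(V\setminus V_\irr)$. Otherwise $v_1,v_2,v_3$ are distinct, and I take $w$ to be the Steiner point of $\{v_1,v_2,v_3\}$ in the tree $T$, which is either one of the $v_i$ (if it already separates the other two, giving $\geq 2$ neighbors in $T$ plus the bridge $e_i$) or a fourth vertex of $T$ at which three edge-disjoint paths meet. In every case, $w$ has three pairwise internally vertex-disjoint walks $\pi_1,\pi_2,\pi_3$ in $\ud(V\setminus V_\irr)$, where $\pi_i$ goes from $w$ inside $T$ to $v_i$ and then along the bridge $e_i$ to $u_i\in V(M)$.

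The core step is to show that $w$ survives both phases used to build $M$. For the leaf-removal phase, consider the first moment at which some vertex $x\in V(\pi_1\cup\pi_2\cup\pi_3)$ is removed; note that the $u_i$ lie in $V(M)$ and so are never removed, and by the choice of $x$ every other vertex of the three paths is still present when $x$ is about to be removed. If $x$ is an interior vertex of some $\pi_i$, it still has both of its neighbors on $\pi_i$, giving residual degree $\geq 2$; if $x=v_i$, it still has its $T$-neighbor on $\pi_i$ towards $w$ and its bridge-neighbor $u_i$, again degree $\geq 2$; and if $x=w$, the three first vertices along $\pi_1,\pi_2,\pi_3$ are all still present, giving degree $\geq 3$. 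Each case contradicts the removal rule that $x$ has degree $\leq 1$, so no such $x$ exists. Hence $w$ persists through leaf-removal with degree $\geq 3$, and during path-contraction $w$ cannot be an interior vertex of any maximal induced path of non-boundary degree-two vertices. Therefore $w\in V(M)$, which contradicts $w\in V(T)\subseteq V\setminus(V_\irr\cup V(M))$.

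The main obstacle I anticipate is the careful case analysis for choosing $w$ when the $v_i$ coincide or when the Steiner point degenerates to one of the $v_i$, and, more crucially, the ``first-removed'' argument that certifies the three witness paths actually remain intact throughout the iterative leaf-removal; once that is in place, the contraction step is immediate from $\deg(w)\geq 3$.
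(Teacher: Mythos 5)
Your proposal is correct and follows essentially the same route as the paper: both take three bridges, locate the branching vertex $w$ (the paper takes the LCA of the endpoints of $e_2,e_3$ in $T$ rooted at the endpoint of $e_1$, which is exactly your Steiner point), and derive a contradiction because $w$ would have survived the leaf-removal and path-contraction phases and hence would belong to $V(M)$. Your version merely spells out the degenerate cases and the ``first-removed vertex'' argument that the paper leaves implicit.
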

\begin{proof}
    Assume to the contrary that there is a tree $T$ of $\mathcal T$ incident to three bridges, say $e_1$, $e_2$ and $e_3$. 
    We consider the endpoint of $e_1$ in $T$ as the root of $T$.
    Let $w$ be the lowest common ancestor in $T$ between the endpoints of $e_2$ and $e_3$ in $T$.
    See Figure~\ref{fig:branch_size_3}. 
    There are at least three edge-disjoint paths from $w$ to some vertices of $M$. 
    This contradicts the definition of $M$ since the degree of $w$ is at least three even after removing all degree-1 vertices and contracting all maximal induced paths.
\end{proof}

\subsection{Query Algorithm}\label{sec:fvs-kernel}
In this subsection, we show how to compute a feedback vertex set of size $k$ of $\ud(V)$ in $2^{O(\sqrt{k})}$ time using $V_\irr$ and $M$ only.
Let $G$ be the graph obtained by gluing $\ud(V_\irr)$ and $M$.
More precisely, the vertex set of $G$ is the union of $V_\irr$ and $V(M)$.
There is an edge $uv$ in $G$ if and only if $uv$ is either an edge of $\ud(V_\irr)$, an edge of $M$, or an edge of $\ud(V)$ between $u\in V_\irr$ and $v\in V(M)$.
Notice that $v$ is a boundary vertex of $M$ in the third case.
By the following lemma, we can compute a feedback vertex set of $\ud(V)$ of size $k$ by computing a feedback vertex set of $G$.

\begin{lemma}
    $G$ has a feedback vertex set of size $k$ if and only if $\ud(V)$ has a feedback vertex set of size $k$.
\end{lemma}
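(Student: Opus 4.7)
The plan is to prove each direction by relating $G$ to $\ud(V)$ through minor operations that act only inside $V\setminus V_\irr$.

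For the direction $\ud(V)\Rightarrow G$, I would first argue that $G$ is a minor of $\ud(V)$. Starting from $\ud(V)$, apply exactly the operations used to construct $M$ from $\ud(V\setminus V_\irr)$: iteratively delete non-boundary vertices of degree at most one, then contract every maximal induced path of non-boundary vertices. These operations touch only $V\setminus V_\irr$, the subgraph induced on $V_\irr$ stays equal to $\ud(V_\irr)$, and every edge from $V_\irr$ into $V\setminus V_\irr$ hits a boundary vertex -- hence a vertex of $V(M)$ -- and therefore survives. The resulting minor is exactly $G$, and since the feedback vertex number is minor-monotone, any FVS of $\ud(V)$ of size $\le k$ yields one of $G$ of size $\le k$.

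For the converse, let $S\subseteq V(G)$ be an FVS of $G$ with $|S|=k$; I would prove by contradiction that the same $S$ is an FVS of $\ud(V)$. Suppose $\ud(V)\setminus S$ contains a cycle $C$, and cyclically decompose $C$ into maximal arcs alternating in $V_\irr$ and in $V\setminus V_\irr$. Each $V_\irr$-arc is a path of $\ud(V_\irr)\subseteq G$, and each transition between arcs occurs across an edge from $V_\irr$ to a boundary vertex, which is a cross edge of $G$. For each $V\setminus V_\irr$-arc $B$, both endpoints are boundary vertices and hence lie in $V(M)$. Any interior vertex of $B$ not in $V(M)$ must be a suppressed degree-$2$ vertex sitting inside some contracted maximal induced path, since a vertex removed in the iterative degree-$\le 1$ pruning cannot lie on a cycle of $\ud(V\setminus V_\irr)$. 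Consequently $C$ traverses such a contracted path in full, so consecutive $V(M)$-vertices along $B$ are joined by the corresponding edge of $M$. Concatenating the $V_\irr$-arcs, the cross edges, and the $V(M)$-walks produced from the $V\setminus V_\irr$-arcs yields a closed walk $W$ in $G\setminus S$.

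I would finish by showing that $W$ repeats no edge: the three edge types ($\ud(V_\irr)$-edges, cross edges, and $M$-edges) are pairwise disjoint, and a repetition within one type would force $C$ to revisit a vertex or to traverse a contracted path twice, contradicting that $C$ is simple. Since a closed trail in a graph always contains a cycle, this produces a cycle in $G\setminus S$, contradicting that $S$ is an FVS of $G$. The main obstacle I expect is exactly this projection step: verifying that every consecutive pair of $V(M)$-vertices along $C$ is indeed joined by a single edge of $M$, and that the resulting walk is edge-simple. Both rely on the precise way non-boundary degree-$2$ vertices are suppressed when forming $M$, and on the fact that vertices eliminated in the degree-$\le 1$ phase cannot belong to any cycle; once these two structural facts are in hand, minor-monotonicity closes the easy direction and the closed-trail argument closes the hard one.
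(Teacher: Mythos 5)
Your proof is correct and, on the hard direction, follows the same projection idea as the paper, but the two arguments are organized differently enough to be worth contrasting. For the direction from $\ud(V)$ to $G$, the paper takes a \emph{minimum} feedback vertex set $X$ of $\ud(V)$ and explicitly exchanges each $v\in X$ that does not survive into $G$: if $v$ was pruned as a degree-one vertex it lies on no cycle (contradicting minimality), and if $v$ sits on a contracted maximal induced path $\pi$ it is swapped for an endpoint of $\pi$, which hits every cycle that $v$ hits. Your observation that $G$ is a minor of $\ud(V)$ (the pruning and contractions touch only non-boundary vertices, which have no neighbors in $V_\irr$, so $\ud(V_\irr)$ and the $V_\irr$-to-boundary edges survive intact) together with minor-monotonicity of the feedback vertex number subsumes this case analysis and is cleaner; the paper's exchange argument buys nothing extra here beyond being self-contained. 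For the converse, the paper simply asserts that replacing each maximal subpath of suppressed vertices of a cycle $C$ by its contracted edge yields a cycle $C'$ of $G$ with $V(C')\subseteq V(C)$; you instead build a closed walk and argue it is a closed trail, hence contains a cycle, which is the more honest way to handle the projection. One small imprecision: you justify that interior vertices of a $V\setminus V_\irr$-arc were not eliminated in the degree-$\le 1$ pruning because such vertices ``cannot lie on a cycle of $\ud(V\setminus V_\irr)$,'' but the arc is only guaranteed to lie on a cycle of $\ud(V)$ that may pass through $V_\irr$. The statement you actually need --- that no pruned vertex lies on any cycle of $\ud(V)$ --- still holds, by considering the first vertex of such a cycle to be pruned: it is non-boundary, so both of its cycle-neighbors are in $V\setminus V_\irr$ and still present, giving it degree at least two at the moment of removal. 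With that one-line fix your argument is complete, and in fact more careful than the paper's on the degenerate point of whether the projected object is genuinely a cycle rather than a closed walk.
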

\begin{proof}
    We first prove the forward direction. 
    Let $X$ be a feedback vertex set of $G$ of size $k$. 
    Recall that $\ud(V_\irr)$ is a subgraph of $\ud(V)$, and a vertex in $M$ corresponds to a vertex of $\ud(V)$.
    Thus, $X$ is a subset of $V$, and then it suffices to show that $X$ hits every cycle of $\ud(V)$.
    Let $C$ be a cycle of $\ud(V)$.
    If $C$ consists of vertices in $G$, it is hit by $X$ by the choice of $X$.
    Otherwise, each maximal subpath of $C$ consisting of vertices of $V\setminus V(M)$ is contracted into an edge in $M$, and thus there is a cycle $C'$ in $M$ consisting of the edges of $C$ and the contracted edges.
    By definition, $X$ hits $C'$. Since $V(C')\subseteq V(C)$, $X$ also hits $C$.
    Thus, $X$ is a feedback vertex set of $\ud(V)$ of size $k$.
    
    For the other direction, let $X$ be a minimum feedback vertex set of $\ud(V)$.  
    If a vertex $v$ in $X$ is not in $G$, 
    there are two cases: while removing all degree-1 vertices in the construction of $M$, $v$ becomes a degree-1 vertex at some point, 
    and thus it is removed from the graph. 
    Or, after removing all degree-1 vertices, it is
    contained in a maximal induced path $\pi$ in the remaining graph. 
    For the first case, $v$ cannot be contained in any cycle of $\ud(V)$, and thus
    $X\setminus \{v\}$ is also a feedback vertex set of $\ud(V)$, which violates the choice of $X$. For the second case, 
    the endpoints of $\pi$ are contained in $M$ by the construction of $M$.
    Let $u$ be an endpoint of $\pi$. 
    Since $\pi$ is a maximal induced path of the graph obtained from $\ud(V)$ by removing every degree-1 vertex repeatedly, a cycle of $\ud(V)$ hit by $v$ contains $\pi$, and thus $u$ hits every cycle hit by $v$.
    Then, $X'=(X\setminus\{v\})\cup \{u\}$ is a feedback vertex set of $\ud(V)$. 
    Furthermore, $X'$ is a feedback vertex set of $G$ with $|X|=|X'|$.
\end{proof}

We apply the algorithm proposed by An and Oh~\cite{an2021feedback} to compute a feedback vertex set of $G$ of size $k$ in $2^{O(\sqrt{k})}$ time.
The algorithm of~\cite{an2021feedback} computes a feedback vertex set of size $k$ of a unit disk graph with $n$ vertices in $2^{O(\sqrt k)}n^{O(1)}$ time using the clique-based tree decomposition.
In our case,
$G$ is not necessarily a unit disk graph. Thus we need to modify the algorithm of~\cite{an2021feedback} slightly as follows. 

First, as in~\cite{an2021feedback}, we compute a \emph{grid-based contraction} of $G$ as follows. 
We draw $G$ as in $\ud(V)$. More specifically, we put the vertices of $G$ on the points where their corresponding vertices in $\ud(V)$ lie. An edge $uv$ of $G$ corresponds to either an edge of $\ud(V)$ or a path of $\ud(V)$. We draw $uv$ as the curve in the drawing of $\ud(V)$ representing the path (or the edge) of $\ud(V)$ corresponding to $uv$. 
For each grid cell $\Box$ containing a vertex of $V_\irr$, we contract
all vertices in $\Box$ of $G$ into a single vertex. Here, we put the new vertex on an arbitrary point in $\Box$. The resulting graph is not necessarily planar
as two edges incident to contracted vertices might cross. To make the graph planar,
we add all crossing points in the drawing as vertices. Let $G'$ be the resulting graph, and we call it the \emph{grid-based contraction} of $G$. 
For each vertex $v$ of $G'$, we define its weight as the sum of $1+\log (|\Box\cap V(G)|+1)$ for all 10-neighboring cells $\Box$ of $\Box_v$. 


\begin{lemma}\label{lem:fvs-G'-weighted}
    The weighted treewidth of $G'$ is $O(\sqrt k)$ if $G$ has a feedback vertex set of size $k$.
    Moreover, in this case, we can compute a tree decomposition of $G'$ of weighted width $O(\sqrt k)$ in $2^{O(\sqrt k)}$ time. 
\end{lemma}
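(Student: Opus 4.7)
The plan is to adapt the proof of An and Oh~\cite{an2021feedback} for \fvs\ on unit disk graphs to our glued graph $G$. The argument combines three ingredients: the FVS hypothesis, which bounds the essential grid cells; the planarity of $G'$ by construction; and the clique-based weighted planar separator framework of de Berg et al.~\cite{de2020framework}.

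First, I would use the hypothesis that $G$ has an FVS of size $k$ to pin down the geometric support of $V(G)$. By Lemma~\ref{lem:fvs-m-bound} we immediately get $|V(M)|=O(k)$. For the core, note that a feedback vertex set of size $k$ must intersect every family of vertex-disjoint cycles, and in particular every family of vertex-disjoint triangles of $\ud(V)$; so $\ud(V)$ contains at most $k$ vertex-disjoint triangles, and Lemma~\ref{lem:irrducible_bound_fvs} gives $|V_\irr|=O(k)$. Consequently $V(G)$ occupies only $O(k)$ grid cells, each containing $O(k)$ vertices of $V(G)$, so each vertex of $G'$ has weight $O(\log k)$ under the prescribed weighting. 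By construction $G'$ is planar, since each crossing between edges incident to contracted cells is explicitly resolved into a degree-4 vertex.

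Second, I would invoke the grid-based weighted separator argument of An and Oh~\cite{an2021feedback} (which itself rests on the framework of~\cite{de2020framework}): for a planar graph drawn on a grid whose non-crossing vertices lie in $O(k)$ cells, weighted by $1+\log(|\Box\cap V(G)|+1)$ summed over the constant-sized neighborhood of 10-neighboring cells, a balanced weighted separator of weight $O(\sqrt k)$ exists and can be computed in polynomial time. Recursive application yields a tree decomposition of weighted width $O(\sqrt k)$ in $2^{O(\sqrt k)}$ total time, by either using the constructive polynomial-time separator, or by branching on $2^{O(\sqrt k)}$ candidate separators as in~\cite{an2024eth,cygan2015parameterized}.

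The main obstacle is to verify that the separator framework, originally stated for the grid-based contraction of $\ud(V)$ itself, still applies when we contract within $G$—which is not a unit disk graph since edges of $M$ may correspond to contracted induced paths of $\ud(V\setminus V_\irr)$. This transfers cleanly because the separator argument relies only on (i) the planar drawing of the contracted graph in the plane, inherited from the drawing of $\ud(V)$ with each path-edge drawn as its representing curve, and (ii) the vertices of $V(G)$ being localized to $O(k)$ grid cells, both of which are established above. No property of $G$ beyond its planar drawing, its grid support, and the weights is used by the separator argument, so the bound $O(\sqrt k)$ on the weighted treewidth of $G'$ follows.
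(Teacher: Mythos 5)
Your proposal follows essentially the same route as the paper: bound $|V_\irr|$ and $|V(M)|$ by $O(k)$ via Lemmas~\ref{lem:irrducible_bound_fvs} and~\ref{lem:fvs-m-bound} (your observation that a size-$k$ FVS forbids $k+1$ vertex-disjoint triangles is exactly the right bridge to the first lemma), use planarity of $G'$ to get a balanced weighted separator, recurse, and invoke the de Berg et al.\ machinery (clique-expansion plus a constant-factor treewidth approximation) for the constructive $2^{O(\sqrt k)}$-time statement. Your point that the separator argument uses nothing about $G$ beyond its planar drawing, grid support, and weights is also the correct justification for why the argument transfers off of genuine unit disk graphs.

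The one place you lean too hard on the black box is the quantitative heart of the lemma. The fact you extract explicitly --- that $V(G)$ lives in $O(k)$ cells and each vertex of $G'$ has weight $O(\log k)$ --- only yields a separator of weight $O(\sqrt{k}\,\log k)$, since a planar graph on $O(k)$ vertices of maximum weight $O(\log k)$ need not have a lighter one. To get $O(\sqrt k)$ you must use the weighted planar separator bound of the form $O\bigl(\sqrt{\sum_v w(v)^2}\bigr)$ (Djidjev) and then show $\sum_v w(v)^2 = O(k)$: by Cauchy--Schwarz the square of $\sum_\Box\bigl(1+\log(|\Box\cap V(G)|+1)\bigr)$ is $O(1)\cdot\bigl(1+\sum_\Box(\log(|\Box\cap V(G)|+1))^2\bigr)$, each cell contributes to only $O(1)$ vertices of $G'$, and $(\log(z+1))^2\le 2z$ converts the sum of squared logs into $\sum_\Box O(|\Box\cap V(G)|)=O(k)$, using the \emph{total} vertex count rather than the per-cell or per-vertex bounds. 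You have all the preconditions in hand, and the framework you cite does contain this computation, but since this inequality is what separates $O(\sqrt k)$ from $O(\sqrt k\log k)$, it should be carried out rather than asserted.
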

\begin{proof}
    We first prove that the weighted treewidth of $G'$ is $O(\sqrt{k})$ if $G$ has a feedback vertex set of size $k$.
    It suffices to show that there is a balanced separator of $G'$ of weight $O(\sqrt{k})$.
    Then, this implies that there is a tree decomposition of $G'$ of weighted width $O(\sqrt{k})$ since $G'$ is a planar and $G'$ can be recursively decomposed by balanced separators of weight small.
    
    Note that $G'$ is a planar graph, and thus a balanced separator of $G'$ has a weight of $O(\sqrt{\sum_{v_\in V(G')} w(v)^2})$ by~\cite{djidjev1997weighted}, where $w(v)$ be the weight of $v$ in $G'$.
    Recall that $w(v)$ is $\sum_\Box (1+ \log(|\Box\cap V(G)|+1))$, where in the summation, we take all $10$-neighboring cells $\Box$ of $\Box_v$.
    By the Cauchy-Schwarz inequality, the square of $\sum_\Box (1+ \log(|\Box\cap V(G)|+1))$ is at most $O(1)\cdot (1+\sum_\Box (\log(|\Box\cap V(G)|+1))^2)$.
    Observe that each grid cell contains at most a constant number of crossing points since a vertex of $G'$ in a grid cell $\Box$ can be adjacent to a vertex of $G'$ in $O(1)$-neighboring cells of $\Box$.
    Thus, for each cell $\Box$, it contributes the weights of $O(1)$ vertices of $G'$.
    Then, the squared sum of $w(v)$ is at most $|V(G')| + \sum O(\log(|\Box\cap V(G)|+1))^2$, where in the summation, we take all grid cells $\Box$ containing a vertex of $G'$.
    Since $(\log (z + 1))^2 \leq 2z$ for every positive integer $z$, it is at most $|V(G')|+\sum O(|\Box \cap V(G)|)$, and it is $O(k)$ by Lemma~\ref{lem:irrducible_bound_fvs} and Lemma~\ref{lem:fvs-m-bound}.
    Thus a balanced separator of $G'$ has a weight of $O(\sqrt{k})$.

    We can compute a tree decomposition of $G'$ of weighted width $O(\sqrt{k})$ in $2^{O(\sqrt{k})}$ time utilizing the algorithm proposed by de Berg et al.~\cite{de2020framework}.
    This algorithm computes a tree decomposition of weighted width $O(\omega)$ in $2^{O(\omega)}n^{O(1)}$ time, where $\omega$ is the weighted treewidth and $n$ is the size of a given graph.
    Note that for $G'$ $\omega=O(\sqrt{k})$ and $n=O(k)$.
    In the remainder of this proof, we illustrate the sketch of the algorithm.
    We compute an unweighted graph $G'_\textsf{un}$ by replacing each vertex of $G'$ with a clique of size $\lceil w(v) \rceil$, and then, we compute a tree decomposition $(T_\textsf{un}, \mathcal B_\textsf{un})$ of $G'_\textsf{un}$ by applying a constant-factor approximation algorithm for an optimal tree decomposition of~\cite{cygan2015parameterized}.
    Then, we compute a tree decomposition $(T_\textsf{un}, \mathcal B)$ of $G'$ of weighted width $O(\sqrt{k})$.
    Precisely, we set every bag of $\mathcal B$ to be empty, and add a vertex $v$ of $G'$ into a bag of $\mathcal B$ if its corresponding bag of $\mathcal B_\textsf{un}$ contains the clique corresponding to $v$.
    %
\end{proof}


Now, we illustrate how to construct the tree decomposition of $G$ of clique-based weighted width $O(\sqrt{k})$ from $(T',\mathcal B')$, where $(T',\mathcal B')$ is a tree 
decomposition of $G'$ of weighted width $O(\sqrt k)$. 
Note that a vertex of $G'$ is either a vertex of $M$, a vertex corresponding to a grid cell containing a vertex of $V_\irr$, or a vertex corresponding to a crossing point which was introduced to make the graph planar. 
For each bag of $\mathcal B'$, we replace a vertex corresponding to a grid cell $\Box$ into 
$\Box\cap V_\irr$. Also, we replace a vertex corresponding to a crossing point between two edges $e$ and $e'$ into the vertices contained in the grid cells containing the endpoints of $e$ and $e'$. 
In this way, a vertex $v$ of a bag of $\mathcal B'$ is replaced with $O(1)$ cliques. 
Moreover, the cliques are contained in the 10-neighboring cells of $\Box_v$, and thus
the clique-weight of each bag is $O(\sqrt k)$. 
Let $\mathcal B$ be the collection of the resulting bags. 

\begin{lemma}
    $(T',\mathcal B)$ is a tree decomposition of $G$. 
\end{lemma}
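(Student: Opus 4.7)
The proof plan is to verify the three tree-decomposition axioms \textbf{(T1)}, \textbf{(T2)}, and \textbf{(T3)} for the pair $(T',\mathcal{B})$, using $(T',\mathcal{B}')$ as the base structure and tracking the replacement rule vertex by vertex. For \textbf{(T1)}, every $w\in V_\irr$ is covered because the contracted vertex $v_{\Box_w}$ lies in some bag of $\mathcal{B}'$ and is replaced by $\Box_w\cap V_\irr\ni w$; every $w\in V(M)$ is an uncontracted vertex of $G'$ and therefore already appears in some bag of $\mathcal{B}$ inherited unchanged from $\mathcal{B}'$. For \textbf{(T2)}, I would fix an edge $uv$ of $G$: in $G'$ it becomes either a single edge or a subdivided path whose endpoints are the $G'$-images of $u,v$ and whose interior vertices are crossings. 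Applying \textbf{(T2)} for $G'$ to the first link of this path gives a bag in $\mathcal{B}'$ that, after replacement, contains both $u$ and $v$ — either directly (no crossings) or via the first crossing $c_1$, whose replacement contributes all of $V(G)\cap(\Box_u\cup\Box_v)$ because $c_1$ lies on the drawing of $uv$.

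The substantive part is \textbf{(T3)}. For each $w\in V(G)$, let $\Sigma_w\subseteq V(G')$ be the set of source vertices whose replacement contains $w$: the $G'$-image of $w$ itself together with every crossing lying on the $G'$-image of an edge of $G$ having an endpoint in $\Box_w$. For each $\sigma\in\Sigma_w$ the set $S_\sigma$ of $\mathcal{B}'$-bags containing $\sigma$ is a connected subtree of $T'$ by \textbf{(T3)} for $G'$, and the set $S_w$ of $\mathcal{B}$-bags containing $w$ equals $\bigcup_{\sigma\in\Sigma_w}S_\sigma$. To show $S_w$ is connected I would chain the $S_\sigma$'s via pairwise overlaps forced by \textbf{(T2)} for $G'$: each crossing $c\in\Sigma_w$ on an edge $e$ of $G$ with endpoint $y\in\Box_w$ lies on a $G'$-path from the image of $y$ to the image of the other endpoint of $e$, and successive vertices of that path share a bag, so $S_c$ attaches to the subtree of $y$'s image through a chain of subtrees all in $\Sigma_w$. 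When $y\in V_\irr$ the image of $y$ is $v_{\Box_w}$ itself, giving a direct attachment; when $y\in V(M)$ with $y\ne w$, the chain passes through $S_y^{G'}$, which is not in $\Sigma_w$, and here the argument must supply a parallel chain along the $G'$-image of the edge $wy$ whose crossings all carry $w$ (because $w\in\Box_w$ is an endpoint of $wy$), patching $S_w$ into a connected subtree.

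The principal obstacle is the last sub-case of \textbf{(T3)}: when $w\in V(M)$ and $\Box_w$ contains a second $M$-vertex $y$, crossings on $y$-incident edges inject $w$ into bags reached via $y$'s subtree rather than $w$'s. Overcoming this relies on the combinatorial fact that in this situation $wy$ must itself be an edge of $M$, which I would derive from the path-contraction rule defining $M$ together with the bound $|V\cap\Box_w|\le 2$ that holds outside $\boxplus_\irr$ (since a cell with three or more vertices of $V$ belongs to $\boxplus_\irr$ by definition, but $\Box_w\notin\boxplus_\irr$ because $w\notin V_\irr$). The edge $wy$ then supplies the missing companion chain of $\Sigma_w$-crossings, closing the loop through bags of $S_w$ and completing the verification.
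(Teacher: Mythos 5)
Your verification of \textbf{(T1)} and \textbf{(T2)} matches the paper's argument, and your treatment of \textbf{(T3)} for $w\in V_\irr$ (chaining the subtrees of the crossing vertices along each subdivided edge back to the subtree of the contracted vertex for $\Box_w$) is exactly the paper's reasoning, made more explicit. The problem is the sub-case you yourself flag as the principal obstacle: the patch you propose does not close it. Suppose $w,y\in V(M)$ share a cell and a crossing $c$ on a $y$-incident edge $e$ contributes $w$ to its bags. The chain of crossings along $e$ attaches the subtree of $c$ to a bag $A$ containing both $y$ and the first crossing on $e$, and the edge $wy$ (which, as you correctly argue, must be an edge of $M$ and hence of $G'$) forces some bag $B$ containing both $w$ and $y$. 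Both $A$ and $B$ lie in the subtree of bags containing $y$, but the bags of $T'$ on the path between them are guaranteed only to contain $y$, not $w$; nothing in the replacement rule forces them to pick up $w$. So $\bigcup_{\sigma\in\Sigma_w}S_\sigma$ can still split into one component around $w$'s own subtree and another around $A$. Worse, the ``companion chain of crossings along $wy$'' is empty: crossings are introduced only between edges incident to contracted vertices, $wy$ is incident to none, so it is never subdivided and contributes exactly the single bag $B$, which does not bridge the gap.

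The way out---and what the paper's terse proof implicitly assumes---is not to fight this case but to make it vanish: a crossing between $e$ and $e'$ should be replaced only by the cliques $\Box\cap V_\irr$ of the \emph{contracted} endpoints of $e$ and $e'$ (plus, at most, the $M$-vertex endpoints themselves), not by every vertex of $G$ lying in the cells of those endpoints. Under that reading the only sources of $w\in V(M)$ in $\mathcal B$ are $w$'s own occurrences in $\mathcal B'$ (and possibly crossings on edges incident to $w$ itself), and the chain argument you already use for the $V_\irr$ case applies verbatim; one then checks that \textbf{(T2)} survives by using the bag shared by the $M$-endpoint and the \emph{last} crossing on a subdivided edge rather than the first. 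If instead you keep the literal ``all vertices in the cells of the endpoints'' replacement, the connectivity axiom can genuinely fail in the scenario above for a legitimate choice of $(T',\mathcal B')$, so no combinatorial fact about $wy$ being an edge of $M$ can rescue the argument.
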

\begin{proof}
    First, we show that every vertex $v$ of $G$ is contained in a bag of $\mathcal B$. 
    If $v$ is a vertex of $M$, it is also contained in $G'$.
    Then it is contained in $\mathcal B$ by construction.
    If $v$ is a vertex of $V_\irr$, a vertex of $G'$ 
    corresponds to $\Box_v$. Then this vertex is contained in a bag of $\mathcal B'$, and it is replaced with the vertices of $G$ contained in $\Box_v$. Therefore, the bag contains $v$. 

    Second, we show that for every edge $uv$ of $G$, both $u$ and $v$ is contained in the same bag of $\mathcal B$. Again, if both vertices are contained in $V(M)$, then we are done. If both vertices are contained in $V_\irr$, then the are two vertices
    $u'$ and $v'$ of $G'$ corresponding to $\Box_u$ and $\Box_v$, respectively. 
    If $u'v'$ is an edge of $G'$, they are contained in the same bag of $\mathcal B'$.
    Then by construction, $u'$ and $v'$ are replaced with cliques containing $u$ and $v$, respectively. 
    If $u'v'$ is not an edge of $G'$, there is a crossing vertex $x$ lying on $u'v'$ in $G'$. Then $x$ is replaced with cliques containing $u$ and $v$ in every bag of $\mathcal B'$ containing $x$.
    The remaining case that $u'\in V(M)$ and $v' \in V_\irr$ can be handled analogously. 

    Third, we show that for any vertex $v$ of $G$,
    all bags of $\mathcal B$ containing $v$ are connected in $T'$. 
    If $v$ is a vertex of $M$, then we are done. 
    Thus we assume that $v$ is a vertex of $V_\irr$. The bags of $\mathcal B$
    containing $v$ are exactly the bags of $\mathcal B'$ containing the vertex $v'$
    corresponding to $\Box_v$ and the crossing points lying on the edges incident to $v'$. Those bags are connected in $T'$. Therefore, $(T',\mathcal B)$ is a tree decomposition of $G$. 
\end{proof}

De Berg et al.~\cite{de2020framework} showed that once we have a tree decomposition of a graph of clique-based weighted width $w$, a minimum feedback vertex set can be computed in $2^{O(\omega)}n^{O(1)}$ time, where $n$ denotes the number of vertices of the graph.
In our case, $\omega=O(\sqrt k)$ and $n=O(k)$. Therefore, we have the following theorem. 

\begin{theorem}\label{thm:fvs-query}
    Given $M$ and $V_\irr$, we can compute a feedback vertex set of $\ud(V)$ of size $k$ in $2^{O(\sqrt{k})}$ time if it exists.
\end{theorem}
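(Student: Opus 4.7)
The plan is to assemble the pieces already developed in the section. First, I would invoke the earlier lemma that $G$ has a feedback vertex set of size $k$ if and only if $\ud(V)$ does, so it suffices to solve \fvs{} on the glued graph $G$. Before doing any real work, I would check the sizes of $V_\irr$ and $V(M)$: if either exceeds the explicit $O(k)$ bound given by Lemma~\ref{lem:irrducible_bound_fvs} or Lemma~\ref{lem:fvs-m-bound}, then $(\ud(V),k)$ cannot be a \textbf{yes}-instance and I would return \textbf{no} immediately. Otherwise $|V(G)| = |V_\irr|+|V(M)| = O(k)$.

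Next, I would run the construction sketched in the text to produce the grid-based contraction $G'$ of $G$, and apply Lemma~\ref{lem:fvs-G'-weighted} to obtain, in $2^{O(\sqrt k)}$ time, a tree decomposition $(T',\mathcal B')$ of $G'$ of weighted width $O(\sqrt k)$. I would then convert $(T',\mathcal B')$ into the tree decomposition $(T',\mathcal B)$ of $G$ as described just before the theorem, by replacing each vertex of $G'$ corresponding to a grid cell with the $V_\irr$-vertices in that cell, and each crossing vertex with the cliques at both endpoints of the involved edges. The lemma preceding the theorem statement certifies that $(T',\mathcal B)$ is a valid tree decomposition, and since every such replacement inserts only $O(1)$ cliques located in $10$-neighboring cells of the original vertex, the clique-based weighted width is still $O(\sqrt k)$.

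Finally, I would feed $(T',\mathcal B)$ into the dynamic programming algorithm of de~Berg et al.~\cite{de2020framework}, which computes a minimum feedback vertex set of a graph equipped with a clique-based tree decomposition of weighted width $\omega$ in $2^{O(\omega)} n^{O(1)}$ time. With $\omega = O(\sqrt k)$ and $n = |V(G)| = O(k)$, the $n^{O(1)}$ factor is absorbed into $2^{O(\sqrt k)}$, giving the claimed bound.

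The main obstacle I expect is purely bookkeeping: confirming that the running time of the $(T',\mathcal B')$-to-$(T',\mathcal B)$ conversion and of the verification that each replaced clique still lies in a $10$-neighbor of the original weighted vertex is charged correctly and does not exceed $2^{O(\sqrt k)}$. Since every conceptual step (the equivalence between $\ud(V)$ and $G$, the treewidth bound for $G'$, the validity of $(T',\mathcal B)$, and the DP on clique-based tree decompositions) has already been established in the preceding lemmas and in the cited works, the proof of Theorem~\ref{thm:fvs-query} reduces to stringing these results together and checking that the sizes and running times compose to $2^{O(\sqrt k)}$.
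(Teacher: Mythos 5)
Your proposal is correct and follows essentially the same route as the paper: reduce to the glued graph $G$ via the equivalence lemma, reject early if $|V_\irr|$ or $|V(M)|$ exceeds the $O(k)$ bounds, build the grid-based contraction $G'$, obtain a weighted tree decomposition of width $O(\sqrt k)$ via Lemma~\ref{lem:fvs-G'-weighted}, convert it to a clique-based tree decomposition of $G$, and run the de~Berg et al.\ dynamic program with $\omega=O(\sqrt k)$ and $n=O(k)$. No gaps beyond the bookkeeping you already flag, which the paper handles the same way.
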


\subsection{Update Algorithm}\label{sec:fvs_update}
In this subsection, we illustrate how to update data structures $V_\irr$, $M$, and $\mathcal T$ with respect to vertex insertions and deletions. 
We call $\ud(V\setminus V_\irr)$ the \emph{shell} of $\ud(V)$. 
Here, we slightly abuse the notion of the skeleton so that we can define additional \emph{special vertices} other than the boundary vertices. 
Imagine that several vertices of $V$ are predetermined as \emph{special} vertices.
The other vertices are called \emph{ordinary} vertices.
The skeleton $M$ of the shell of $\ud(V)$ with predetermined special vertices 
is defined as the minor of $\ud(V\setminus V_\irr)$ obtained by removing all degree-1 ordinary
vertices of $\ud(V\setminus V_\irr)$ repeatedly and then
contracting each maximal induced path consisting of ordinary vertices. 
Notice that if only the boundary vertices of $\ud(V)$ are set to the special vertices,
the two definitions of the skeleton coincide. 

Given an update of $V$, we first update $V_\irr$ accordingly using the update algorithm in Section~\ref{sec:hittings}. 
Recall that the number of vertices newly added to $V_\irr$ or removed from $V_\irr$ is $O(1)$.
We are to add the vertices removed from $V_\irr$ to the shell of $\ud(V)$, and remove the vertices newly added to $V_\irr$ from the shell of $\ud(V)$.
Additionally, $O(1)$ vertices of $V\setminus V_\irr$ become boundary vertices (when we handle the deletion operation), and $O(1)$ vertices of $V\setminus V_\irr$ become non-boundary vertices (when we handle the insertion operation.)
These are the only changes in the shell of $\ud(V)$ due to the update of $V$.
Note that we just need to add $v$ to the shell of $\ud(V)$ if $v$ is not in $V_\irr$.
Let $S$ be the shell of $\ud(V)$ before the update. Let $M$ be the skeleton of $S$ we currently maintain, and let $\mathcal T$ be the contracted forest concerning $M$ we currently maintain. 

In the following, we show how to update $M$ and $\mathcal T$ for the change of $S$. 
The update algorithm consists of two steps: \emph{push-pop step} and \emph{cleaning step}.
In the push-pop step, we set several vertices of $S$ as special vertices.
Specifically, the new vertices to be added to $S$ are set as special vertices.
The neighbors in $S$ of the vertices to be added to $S$ or to be removed from $S$ are set to special vertices. 
Then we update $M$ and $\mathcal T$ to the skeleton of the new set $S$ and the contracted forest concerning the new skeleton, respectively.
In the cleaning step, we make the non-boundary vertices ordinary vertices. 
Note that this changes the structure of the skeleton as well, and thus we need to update $M$ and $\mathcal T$.

\subsubsection{Push-Pop Step}
Recall that $S$ is the shell of $\ud(V)$ before the update.
Notice that the complexity of the new shell of $\ud(V)$ can be $\Theta(n)$ in the worst case.
However, the update algorithm in Section~\ref{sec:hittings} determines 
the vertices and edges added to $S$ and removed from $S$ to obtain the new shell. 
By construction, the number of such vertices and edges is $O(1)$. 
Moreover, we can determine the vertices set to the special vertices of $S$ in $O(1)$ time by Observation~\ref{obs:neighbors}.
To update $M$ and $\mathcal T$, we add (or remove) the special vertices and their incident edges one by one to (or from) $S$ until $S$ becomes the desired set.
Precisely, we add each special vertex $v$ using the \emph{push} subroutine, which adds $v$ into $M$ and updates $M$ and $\mathcal T$ accordingly. In the push subroutine,
we assume that $v$ was not contained in $S$. 
Recall that $M$ is the skeleton of $S$ obtained by removing and contracting some ordinary vertices, not special vertices.
And, we remove $v$ using the \emph{pop} subroutine, which removes $v$ from $S$ and update $M$ and $\mathcal T$.
Note that some special vertices $v$ are already in $S$.
In this case, we cannot make $v$ special using the push subroutine as it assumes that $v$ is not contained in $S$. 
For such special vertices $v$, we first pop $v$  from $S$ and then push it back to $S$. 
In this way, we can obtain the skeleton of the new shell and
the contracted forest correctly.

\begin{figure}
    \centering
    \includegraphics[width=0.9\textwidth]{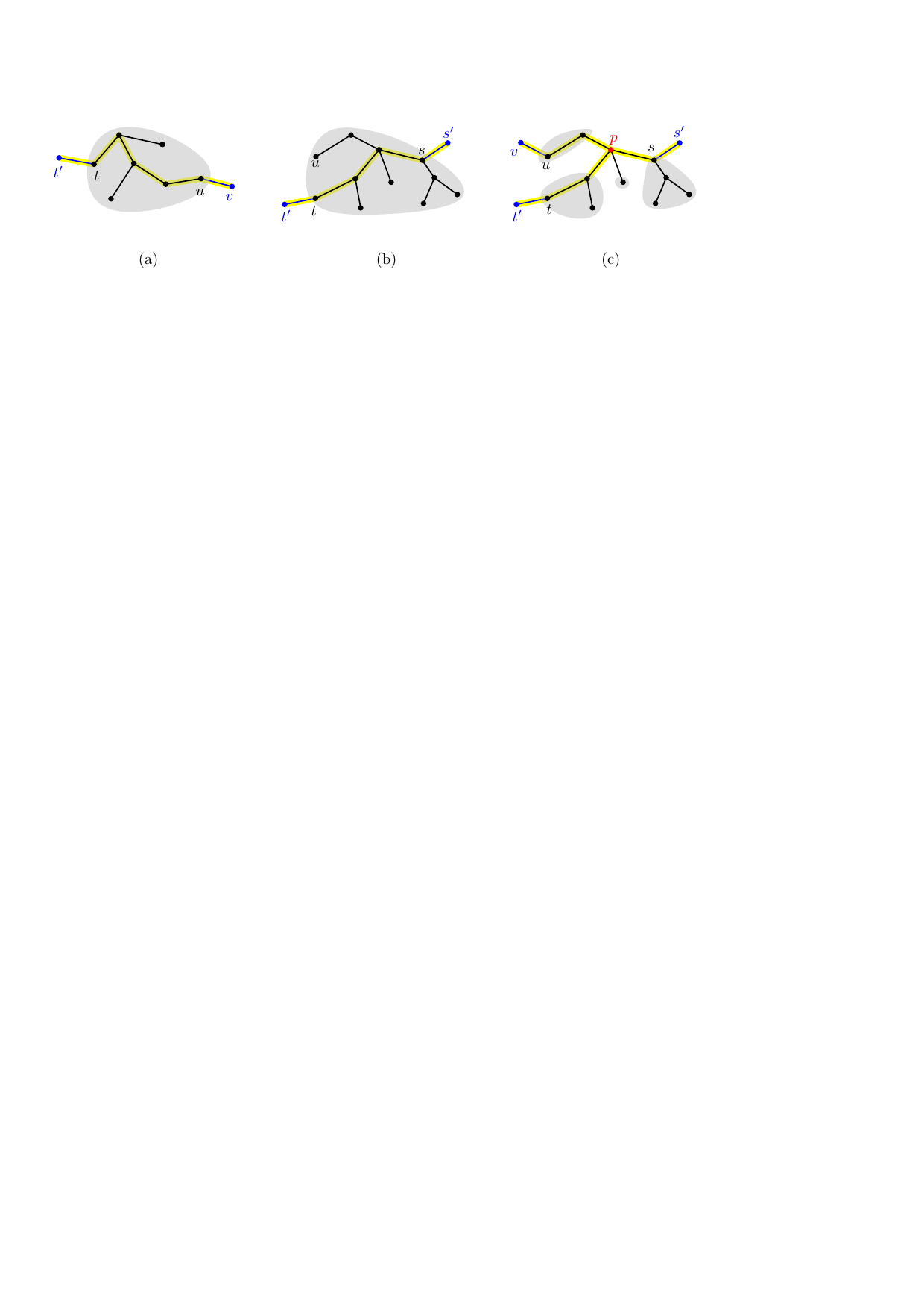}
    \caption{ (a) Case 2 of the push subroutine. After the insertion of $v$, there is a maximal path from $t'$ to $v$. The path is highlighted in yellow. (b) Case 3 of the push subroutine before the insertion of $v$. (c) Case 3 of the push subroutine after the insertion of $v$. The vertex $p$ is the lowest common ancestor of $t$ and $s$. The vertex $p$ and the edges $pv$, $pt'$, and $ps'$ are inserted into $M$.
    }
    \label{fig:push_subroutine}
\end{figure}

\subparagraph*{Push subroutine.}
Given the current shell $S$, we are to add $v$ and its incident edges to $S$. 
Let $E_v$ be the set of edges incident to $v$ to be added to $S$. 
We can ensure that the other endpoints of the edges of $E_v$ are in $S$. 
We first add $v$ and the edges of $E_v$ incident to vertices of $M$ into $S$. 
At this moment, it suffices to add $v$ and these edges to $M$. We do not need to update $\mathcal T$.
After that, we insert the remaining edges of $E_v$ into $S$ one by one as follows.
Note that those edges are incident to vertices of $\mathcal T$. 
We describe how to insert an edge $uv$ of $E_v$ with $u\in \mathcal T$ into $S$.
Let $T$ be the tree in $\mathcal T$ containing $u$.
Before the insertion of $uv$, $T$ has its bridge set.
There are the following three cases.

\subparagraph*{Case 1. $T$ has no bridge.} In this case,
$T$ was an isolated tree in $S$, and it becomes a tree connected to $M$ in $S$ due to the insertion of $uv$ into $S$. Thus it suffices to update the bridge set of $T$ to $\{uv\}$ in $O(1)$ time, and then we are done. 
\subparagraph*{Case 2. $T$ has exactly one bridge.} Let $tt'$ be the bridge of $T$ with $t\in V(T)$.
In this case, during the construction of the skeleton of $S$, all the vertices of $T$ were removed from the leaves to $t$ in order. 
However, if $uv$ were added to $S$, then $u$ would not have been removed (as $v$ is treated as a special vertex and not removed.)
As a result, none of the vertices in the $u$-$t$ path in $T$ would have been removed.
See Figure~\ref{fig:push_subroutine}(a).
Instead, the $u$-$t$ path in $T$ would have been contracted in the construction of $M$. 
To address this change in a single-shot manner, it suffices to add $vt'$ to $M$, and to add $uv$ and $tt'$ to the bridge set of $T$. This takes $O(1)$ time in total. 
\subparagraph*{Case 3. $T$ has exactly two bridges.} Let $tt'$ and $ss'$ be the bridges with $t, s\in V(T)$. 
This means that all vertices in $T$ excluding the vertices in the $t$-$s$ path 
were removed, and then all edges in the $t$-$s$ path were contracted in the construction of the skeleton.
See Figure~\ref{fig:push_subroutine}(b).
If $uv$ were added to $S$, 
no vertices in the $u$-$s$ path and the $u$-$t$ path in $T$ would have been removed during the construction of the skeleton of $S\cup \{uv\}$. 
After removing degree-1 vertices, we would be left with the vertices in the $u$-$t$ path, $u$-$s$ path and $s$-$t$ path. 
As $T$ is a tree, there must be a vertex contained in all three paths, say $p$. Then the $s'$-$p$ path, $t'$-$p$ path, and $v$-$p$ path become maximal induced paths in the resulting graph, and thus each of them would have been contracted into a single edge.
Observe that $p$ is the lowest common ancestor of $t$ and $s$ in the tree $T$ rooted at $u$.
See Figure~\ref{fig:push_subroutine}(c).
Thus we can compute $p$ using \textsf{Evert}$(\cdot)$ and \textsf{LCA}$(\cdot)$ operations supported by the link-cut trees.

To address this change in a single-shot manner, we remove $t's'$ from $M$, and add $p$, $s'p$, $t'p$, and $vp$ to $M$. Then we are done with $M$. Additionally, we update $\mathcal T$ as follows.
The deletion of $p$ from $T$ decomposes  $T$ into a constant number of subtrees each incident to at most one bridge.
Precisely, we decompose $T$ into the child subtrees of $p$, using \textsf{Cut}$(\cdot)$ operation.
For a child subtree $T'$ of $T$, we insert the edge between $p$ and the root node of $T'$ into the bridge set of $T'$.
Additionally, if $T'$ is incident to $tu$, $t'u'$, or $uv$, then we insert the corresponding edge into the bridge set of $T'$. Note that this can be checked using \textsf{Connected}$(\cdot)$ operation. In this way, we can update $\mathcal T$ and $M$ by applying a constant number of operations supported by the link-cut trees in $O(\log |V|)$ time in total.

\subparagraph*{Pop subroutine.}
We are to delete a vertex $v$ and all of its incident edges in $S$ from $S$. 
We consider two cases separately: $v$ is in $M$ or $\mathcal T$. 
For the case that $v$ is in $M$, we simply remove it and its incident edges from $M$ and the bridge sets of $\mathcal T$.
Specifically, for each edge $e$ incident to $v$ in $S$, we remove it from $M$ if it is in $M$. 
If it is not in $M$, it is in a bridge set of a tree of $\mathcal T$. 
We remove it from every bridge set.
If a bridge set containing $e$ has another edge, then there is a contracted edge of $M$, and thus we remove it.
Then we are done.  

Now consider the case that $v$ is in a tree $T$ of $\mathcal T$. 
In this case, the deletion of $v$ from $S$ changes $M$ if and only if
$T$ has two bridges $tt'$, $ss'$ with $t, s\in V(T)$ and $t',s'\in V(M)$ such that the $t$-$s$ path in $T$ contains $v$.
In particular, $M$ has the edge $t's'$, and $t's'$ must be removed from $M$ by the deletion of $v$ from $S$.
See Figure~\ref{fig:pop_subroutine}. 
We can check if the $t$-$s$ path in $T$ contains $v$ by applying $\textsf{Evert}(v)$ and $\textsf{LCA}(t,s)$. 
Then we are to remove $v$ and it incident edges from $T$ and the bridge set of $T$. Observe that an edge incident to $v$ is in $T$ or the bridge set of $T$.
We rotate $T$ in a way that $v$ becomes the root of $T$ by using $\textsf{Evert}(v)$, and remove $v$ from $T$ by applying $\textsf{Cut}(\cdot)$ operations.
Then we are given a constant number of child subtrees of $v$ since $v$ is in $V\setminus V_\irr$.
For a child subtree $T'$ of $T$ at $v$, we insert the bridges of $T$ incident to $T'$ into the bridge set of $T'$. 
Note that, given $T'$ and a bridge of $T$, we can check if $T'$ contains a vertex incident to the bridge using $\textsf{Connected}(\cdot)$.
In this way, we can remove $v$ from $\mathcal T$ and the bridge sets, and we can update $M$ accordingly in $O(\log |V|)$ time.

\begin{figure}
    \centering
    \includegraphics[width=0.45\textwidth]{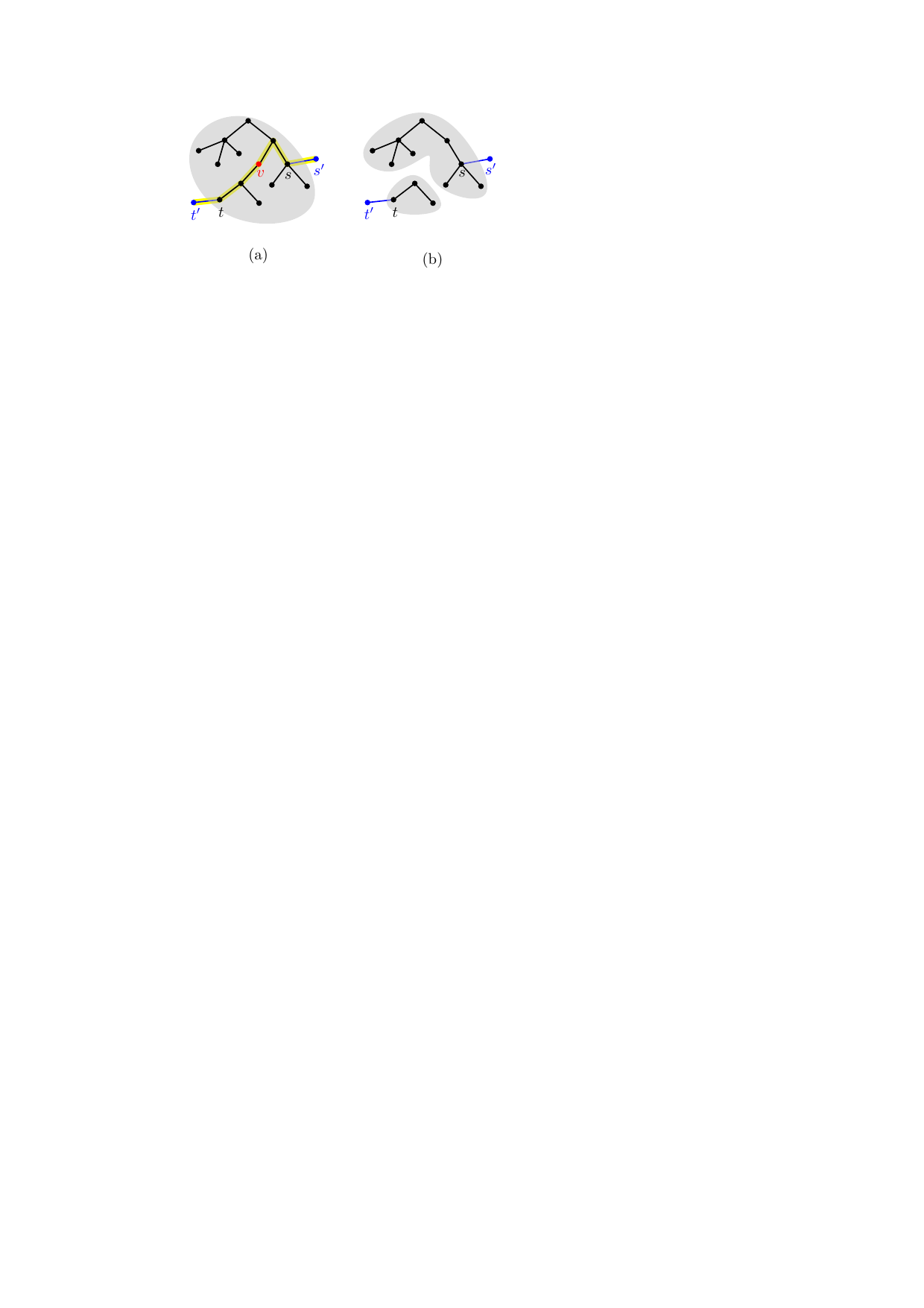}
    \caption{(a) Illustration of pop subroutine in the case that the $t$-$s$ path contains $v$. (b) Illustration of the tree $T$ after $v$ is deleted. }
    \label{fig:pop_subroutine}
\end{figure}

\subsubsection{Cleaning Step}
So far, we have treated all vertices that can cause some changes due to the update of $V$ and special vertices, and we have computed the skeleton of $\ud(V\setminus V_\irr)$.
However, some special vertices should not be considered as special vertices if they are not boundary vertices.
To handle this, we need a \emph{cleaning process} to maintain the degree of every vertex in $M$ is at least three except the boundary vertices.

We handle the vertices of $M$ of degree at most two one by one and set each of them as an ordinary vertex if it is a special vertex, as follows.
Let $v$ be a vertex we are to handle. 
First, we can check in $O(1)$ time if $v$ is a boundary vertex using Observation~\ref{obs:neighbors}.
If $v$ is a boundary vertex, we are done.
Otherwise, it suffices to handle the case that the degree of $v$ in $M$ is less than three. 
The update of $M$ is simple: 
we remove $v$ from $M$ if its degree is one in $M$, and contract a maximal induced path containing $v$ in $M$ if the degree of $v$ is two in $M$. 
Then we need to update $\mathcal T$ accordingly as follows. 
It suffices to  
merge all trees in $\mathcal T$ incident to $v$ together with their bridges incident to $v$ 
using $\textsf{Link}(\cdot)$ sequentially.
Then the bridge set of the resulting tree is the union of all bridges of merged trees except the ones incident to $v$. We can handle $v$ in $O(\log |V|)$ time. 
However, this process might decrease the degree of some other vertex of $M$ of degree in $M$ to less than three. For each such vertex, we remove it or contract a maximal induced path containing it as we did for $v$. 

\begin{lemma} \label{lem:cleaning_step_time}
    The cleaning step takes $O(\log |V|)$ time in total. 
\end{lemma}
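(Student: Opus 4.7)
The plan is to reduce the statement to two claims: each individual cleaning step can be implemented in $O(\log |V|)$ worst-case time using link-cut tree primitives, and only $O(1)$ cleaning steps are performed in total during one update. The first claim follows directly from the description of the cleaning procedure: removing a degree-1 vertex $v$ from $M$, or contracting a maximal induced path through a degree-2 vertex $v$, requires a constant number of \textsf{Evert}, \textsf{Cut}, \textsf{Link}, \textsf{Connected}, and \textsf{LCA} operations on the link-cut trees in $\mathcal T$, each costing $O(\log |V|)$. So the real content of the lemma is the second claim.

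For the second claim, I would first observe that the push-pop step described earlier performs only $O(1)$ push and pop subroutines, since the update algorithm of Section~\ref{sec:hittings} together with Observation~\ref{obs:neighbors} guarantees that only $O(1)$ vertices switch between $V_\irr$ and $V\setminus V_\irr$, and only $O(1)$ vertices change their special/non-special status. Each such subroutine touches only $O(1)$ vertices and $O(1)$ edges of $M$. Consequently, immediately after the push-pop step, the set of non-boundary vertices of $M$ whose degree in $M$ is less than three has size $O(1)$; these are exactly the vertices entering the cleaning queue.

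Next, I would analyze the cascade that cleaning a vertex can cause. Contracting a maximal induced path through a non-boundary degree-2 vertex removes that vertex from $M$ without changing the degree in $M$ of any other vertex, so it contributes nothing to the queue. The only operation that can propagate is the removal of a non-boundary degree-1 vertex $v$, which decreases the degree of its unique $M$-neighbor $u$ by exactly one. I would then use a charging argument based on a degree-loss budget. The total loss of degree at originally non-special vertices of $M$ is bounded by the $O(1)$ edges modified by push-pop plus one per degree-1 removal performed later; on the other hand, every originally non-special vertex that enters the queue as degree 2 absorbs at least one unit of this loss, and every such vertex that enters as degree 1 absorbs at least two. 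Combining this with the $O(1)$ special vertices introduced by push-pop and solving the resulting recurrence yields that the number of degree-1 removals, hence the total number of cleaning operations, is $O(1)$.

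The main obstacle is making this cascade-bounding argument rigorous; in particular, one has to distinguish the $O(1)$ newly created special vertices (which may enter the queue with tiny degree simply by definition) from previously existing non-special vertices (which can enter the queue only by losing degree), and verify that the chain of induced degree losses cannot exceed the constant budget injected by push-pop. Once this combinatorial bound is in place, multiplying the $O(1)$ cleaning operations by the $O(\log |V|)$ cost of the underlying link-cut tree operations yields the claimed $O(\log |V|)$ total time, completing the proof.
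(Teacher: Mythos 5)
Your proposal is correct and follows essentially the same route as the paper: bound each cleaning operation by $O(\log|V|)$ link-cut tree work, observe that contractions of degree-2 vertices do not propagate degree changes, and bound the cascade of degree-1 removals by the $O(1)$ budget injected by the push-pop step. Your explicit degree-loss charging argument is in fact a slightly more careful rendering of the paper's assertion that a vertex of degree at least three can only drop to degree at least two and is then contracted or left untouched.
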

\begin{proof}
Since we introduced $O(1)$ special vertices of $S$ and the degree of every vertex of $M$ has a constant degree, there are $O(1)$ vertices of degree at most two after the push-pop step.
As observed before, we can handle a vertex $v$ itself in $O(\log |V|)$ time.
However, the deletion of $v$ decreases the degree of some other ordinary vertices of $M$ to less than three.
Thus, we need to analyze the total number of such vertices.

Recall that we remove a vertex in $M$ if its degree is at most one, otherwise, we contract or leave it.
Furthermore, the contraction of a vertex in $M$ does not affect the degrees of any other vertices in $M$.
Thus a vertex that has a degree of at least three after the push-pop step has a degree of at least two after the deletion of $v$ from $M$.
Then we contract or leave such a vertex, and it does not affect the degrees of any other vertices in $M$.
Recall that the degree of $v$ is at most two, and it implies that the number of vertices of which the degree is decreased by the deletion of $v$ is $O(1)$

Therefore, the number of vertices handled in the cleaning step is $O(1)$, and each vertex can be handled in $O(\log |V|)$ time as we use $O(1)$ operations supported by the link-cut tree data structure.
\end{proof}

Since both steps can be done in $O(\log |V|)$ time, we have the following lemma.
\begin{lemma}
    Given a vertex update of $V$, we can update $V_\irr$, $M$ and $\mathcal T$ in $O(\log |V|)$ time.
\end{lemma}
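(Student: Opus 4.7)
The plan is to assemble the bound by accounting for each of the three data structures in turn, relying on the push-pop and cleaning analyses already developed, plus the $O(1)$ update time for $V_\irr$ established in Section~\ref{sec:hittings}.

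First, I would update $V_\irr$ using the triangle-hitting-set update procedure from Theorem~\ref{thm:ths_perf}, which costs $O(1)$ time and produces the at most $O(1)$ vertices that migrate between $V_\irr$ and the shell $S = \ud(V\setminus V_\irr)$. Combined with the single inserted or deleted vertex $v$, this identifies an $O(1)$-sized set of vertices of $S$ that must be added, removed, or re-marked as special (namely: new shell vertices, and their neighbors in $S$, which were at most $O(1)$ because every vertex in $V\setminus V_\irr$ has constant degree by Observation~\ref{obs:neighbors}). Identifying these vertices and their incident shell-edges takes $O(1)$ time, again by Observation~\ref{obs:neighbors}.

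Next, I would execute the push-pop step: for each of the $O(1)$ special vertices currently in $S$ we call \emph{pop} and then \emph{push}, while new special vertices are inserted by a single \emph{push}. Each push and pop was analyzed case-by-case in Section~\ref{sec:fvs_update}; in every case the update to $M$ is purely local (adding/removing $O(1)$ vertices and edges, including at most one contracted edge $t's'$ or the introduction of an LCA-vertex $p$), and the update to $\mathcal T$ uses only a constant number of \textsf{Evert}, \textsf{Cut}, \textsf{Link}, \textsf{LCA}, \textsf{Root}, and \textsf{Connected} calls, each costing $O(\log |V|)$. Therefore each individual push or pop runs in $O(\log |V|)$ time, and with only $O(1)$ of them the push-pop step takes $O(\log |V|)$ total.

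Finally, I would invoke the cleaning step, whose total running time is $O(\log |V|)$ by Lemma~\ref{lem:cleaning_step_time}: the push-pop step creates only $O(1)$ vertices of $M$ that are both non-boundary and of degree at most two, and each such vertex is resolved by a constant number of link-cut operations without cascading more than $O(1)$ further violations. Summing the three phases yields the claimed $O(\log |V|)$ update bound. The main subtlety I anticipate is a bookkeeping one rather than a structural one: ensuring that the $O(1)$ bound on special vertices genuinely holds, which in turn hinges on the constant-degree property of vertices outside $V_\irr$ (Observation~\ref{obs:neighbors}) and on the fact that $\boxplus_\irr$ changes only in an $O(1)$-sized neighborhood of $\Box_v$ (Observation~\ref{obs:irreducible_in_constant}); once those are in hand, everything else reduces to a finite case-check already carried out in the push, pop, and cleaning subroutines.
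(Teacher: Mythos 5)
Your proposal is correct and follows essentially the same route as the paper: update $V_\irr$ in $O(1)$ time via the triangle-hitting-set machinery, identify the $O(1)$ special vertices using the constant-degree property of shell vertices, run the push and pop subroutines (each a constant number of link-cut operations, hence $O(\log|V|)$ apiece), and finish with the cleaning step bounded by Lemma~\ref{lem:cleaning_step_time}. The paper derives the lemma by exactly this three-phase accounting, so nothing further is needed.
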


\begin{theorem}\label{thm:ths_perf}
    There is an $O(n)$-sized fully dynamic data structure on the unit disk graph induced by a vertex set $V$ supporting $O(\log |V|)$ update time that allows us to compute a feedback vertex set of size at most $k$ in $2^{O(\sqrt k)}$ time. 
\end{theorem}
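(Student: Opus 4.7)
The plan is to assemble the final theorem by combining the three ingredients developed throughout Section~\ref{sec:fvs}: the core grid cluster $V_\irr$ from Section~\ref{sec:hittings}, the skeleton $M$ of the shell $\ud(V\setminus V_\irr)$, and the contracted forest $\mathcal T$ with its bridge sets. As our data structure, I would store $V_\irr$ (via the grid-cell linked lists used in Section~\ref{sec:hittings}), the skeleton $M$ as an adjacency structure on its $O(|V_\irr|+|V(M)|)$ vertices, and the forest $\mathcal T$ inside a link-cut tree augmented with bridge sets of size at most two per tree (Observation~\ref{obs:edges-T-M}).

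For the update bound, I would invoke the push-pop step followed by the cleaning step from Section~\ref{sec:fvs_update}. Concretely, given an insertion or deletion of a vertex, the update algorithm for the triangle hitting set kernel in Section~\ref{sec:hittings} changes $V_\irr$ by $O(1)$ vertices, creates $O(1)$ new boundary vertices, and changes $O(1)$ edges of the shell; by Observation~\ref{obs:neighbors} we can identify all affected special vertices in $O(1)$ time. Each individual push or pop affects only a constant number of bridges and tree operations on $\mathcal T$, each of which costs $O(\log |V|)$ time in the link-cut tree. Together with Lemma~\ref{lem:cleaning_step_time}, which bounds the cleaning step by $O(\log|V|)$, the total update time is $O(\log|V|)$.

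For the query, given $k$, I would first read off the sizes of $V_\irr$ and $V(M)$ from the data structure. By Lemma~\ref{lem:irrducible_bound_fvs} (applied with vertex-disjoint triangles, each of which must be hit by any feedback vertex set) and Lemma~\ref{lem:fvs-m-bound}, if $\ud(V)$ admits a feedback vertex set of size $k$ then $|V_\irr|+|V(M)|=O(k)$; if this bound is exceeded, I return \textbf{no} immediately. Otherwise I construct the glued graph $G$ of Section~\ref{sec:fvs-kernel} from $V_\irr$ and $M$ and run the algorithm of Theorem~\ref{thm:fvs-query} to produce a feedback vertex set of $\ud(V)$ of size at most $k$ in $2^{O(\sqrt k)}$ time.

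For the space bound, $V_\irr$ uses $O(n)$ space through the grid linked lists, the link-cut tree for $\mathcal T$ uses space linear in $|V\setminus V_\irr|$, and $M$ has $O(|V(M)|)\le O(|V|)$ vertices and edges (it is planar since triangle-free disk graphs are planar), summing to $O(n)$. The main obstacle I anticipate is merely verifying correctness of the kernel bound: one must observe that the size bound on $|V_\irr|$ from Lemma~\ref{lem:irrducible_bound_fvs}, originally stated for vertex-disjoint triangles, applies in the feedback vertex set setting because any feedback vertex set of size $k$ is a transversal of every collection of vertex-disjoint triangles, hence bounds the maximum number of vertex-disjoint triangles by $k$; once this is noted, the rest of the proof is a straightforward assembly of the lemmas already established.
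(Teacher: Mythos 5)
Your proposal is correct and follows essentially the same route as the paper: the theorem is exactly the assembly of the $O(\log|V|)$ update bound for $V_\irr$, $M$, and $\mathcal T$ from Section~\ref{sec:fvs_update} with the $2^{O(\sqrt k)}$ query algorithm of Theorem~\ref{thm:fvs-query}, preceded by the size check on $|V_\irr|+|V(M)|$ justified by Lemmas~\ref{lem:irrducible_bound_fvs} and~\ref{lem:fvs-m-bound}. Your explicit remark that the triangle-packing bound of Lemma~\ref{lem:irrducible_bound_fvs} transfers to the feedback vertex set setting (since any feedback vertex set is a transversal of the vertex-disjoint triangles) is precisely the intended use of that lemma.
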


\section{Dynamic Cycle Packing Problem}\label{sec:cpp}
In this section, we describe a fully dynamic data structure on the unit disk graph of a vertex set $V$ dynamically changing under vertex insertions and deletions that can answer cycle packing queries efficiently. Each query is given with a positive integer $k$ and asks to return a set of $k$ vertex-disjoint cycles of $\ud(V)$ if it exists.
Here, we use the core grid cluster $\boxplus_\irr$, the set $V_\irr$ of vertices contained in $\boxplus_\irr$, the skeleton $M$ of $\ud(V\setminus V_\irr)$, and the contracted forest $\mathcal T$ along with the bridge sets.
They can be maintained in $O(\log |V|)$ time as shown in Section~\ref{sec:fvs_update}.
Note that $\mathcal T$ and the bridges are the auxiliary data structures for updating $M$ efficiently.
In this section, we present a query algorithm for \textsc{Cycle Packing} assuming that we have $\boxplus_\irr$, $V_\irr$, and $M$.
The following lemma was given by An and Oh~\cite{an2024eth}.

\begin{lemma}\label{lem:cp-kernel-bound}
    If $(\ud(V), k)$ is a \textbf{no}-instance for \cpp, then $|V(M)|=O(k)$.
\end{lemma}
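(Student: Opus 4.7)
The plan is to mimic the proof of Lemma~\ref{lem:fvs-m-bound}, reducing the cycle-packing hypothesis to a feedback-vertex-set bound on $M$. First, any collection of vertex-disjoint cycles in $M$ lifts to a collection of vertex-disjoint cycles in $\ud(V)$: each edge of $M$ corresponds to a maximal induced path of non-boundary vertices in $\ud(V\setminus V_\irr)$, and the interiors of different such paths are pairwise vertex-disjoint and also disjoint from $V(M)$. Un-contracting each cycle of an $M$-packing therefore yields a vertex-disjoint cycle packing in $\ud(V\setminus V_\irr)\subseteq\ud(V)$. Consequently, the hypothesis that $(\ud(V),k)$ is a \textbf{no}-instance of \cpp\ forces $M$ to contain at most $k-1$ vertex-disjoint cycles.

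Second, $M$ is planar: since $V_\textsf{tri}\subseteq V_\irr$, the graph $\ud(V\setminus V_\irr)$ is triangle-free, every triangle-free disk graph is planar~\cite{breu1996algorithmic}, and $M$ is a minor of $\ud(V\setminus V_\irr)$. Appealing to the Erd\H{o}s--P\'osa property specialized to planar graphs---a planar graph with at most $p$ vertex-disjoint cycles admits a feedback vertex set of size $O(p)$---I obtain a feedback vertex set $X$ of $M$ with $|X|=O(k)$. The number $|B|$ of boundary vertices of $M$ is also $O(k)$: by Lemma~\ref{lem:irrducible_bound_fvs}, $|V_\irr|=O(k)$ (every triangle is a cycle, so the no-instance assumption gives at most $k$ vertex-disjoint triangles), and each boundary vertex lies in a cell within distance $2$ of a $V_\irr$-cell, while each such non-$V_\irr$ cell contains at most two vertices.

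Third, I would apply the counting argument of Lemma~\ref{lem:fvs-m-bound} to the forest $F=M\setminus X$. Vertices of $V\setminus V_\irr$ live in cells containing at most two vertices, so every vertex of $M$ has $M$-degree $O(1)$, while non-boundary vertices of $M$ have $M$-degree at least three. A non-boundary leaf of $F$ must have $\geq 2$ neighbors in $X$, and a non-boundary degree-$2$ vertex of $F$ must have $\geq 1$ neighbor in $X$; double-counting edges incident to $X$ bounds each of these classes by $O(|X|)=O(k)$. Combined with $|B|=O(k)$ and the fact that the branching vertices of a forest are at most its leaves, this gives $|V(F)|=O(k)$, and hence $|V(M)|=|V(F)|+|X|=O(k)$. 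The main obstacle is the invocation of the linear-ratio Erd\H{o}s--P\'osa bound for planar graphs in the second step: the classical general bound $\tau=O(\nu\log\nu)$ would only yield $|V(M)|=O(k\log k)$, so the sharper planar version (presumably what An and Oh~\cite{an2024eth} appeal to) is essential.
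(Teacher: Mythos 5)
Your proof is correct, but it takes a genuinely different route from the paper's. The paper does not invoke any packing--covering duality as a black box: it iteratively strips degree-one vertices from $M$ to obtain a graph $M_2$ of minimum degree two, bounds the stripped portion by $O(k)$ via a forest argument whose leaves are boundary vertices, and then proves from scratch a face-colouring claim (the faces of a plane graph of maximum degree $\Delta$ can be coloured with $6\Delta$ colours so that faces sharing a vertex get distinct colours) to extract $|F_2|/(6\Delta)$ vertex-disjoint cycles from the faces of $M_2$; a \textbf{no}-instance then forces $|F_2|\leq 6\Delta k$, and Euler's formula together with degree counting bounds the number of degree-$\geq 3$ vertices by $12\Delta k$. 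In effect the paper re-proves, for this particular bounded-degree planar graph, exactly the linear Erd\H{o}s--P\'osa bound that you import as an external theorem. Your version is shorter at the top level and your subsequent forest-counting step (non-boundary leaves of $M\setminus X$ send at least two edges into $X$, degree-two vertices at least one, branching vertices are dominated by leaves) is sound given $\Delta(M)=O(1)$; but it hinges on the linear-ratio cycle packing/FVS duality for planar graphs (e.g.\ Kloks--Lee--Liu), which is a genuine theorem yet one the paper neither cites nor needs --- you correctly flag that the classical $O(k\log k)$ bound would only yield $|V(M)|=O(k\log k)$. Two further points in your favour: you make explicit both the lifting of vertex-disjoint cycles from $M$ to $\ud(V)$ (which the paper uses implicitly, phrasing its case analysis in terms of $(G,k)$ and relying on the gluing lemma proved only afterwards) and the $O(k)$ bound on boundary vertices (which the paper merely asserts). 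So the trade-off is: the paper buys self-containment at the cost of an ad hoc colouring lemma; you buy brevity at the cost of an external citation that must be supplied.
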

\begin{proof}
    Observe that the number of boundary vertices of $M$ is $O(k)$.
    The other vertices of $M$ have degree at least three.
    Let $\Delta$ be the maximum degree of $M$, which is $O(1)$ by construction. 
    Now it is sufficient to show that the number of vertices of $M$ of degree at least three is $O(k)$. 
    For the purpose of analysis, we iteratively remove degree-1 vertices from $M$. Let $M_2$ be the resulting graph. In this way, a vertex of degree at least three in $M$ can be removed, and the degree of a vertex can decrease due to the removed vertices.
    We claim that the number of such vertices is $O(k)$ in total. 
    To see this, consider the subgraph of $M$ induced by the vertices removed during the construction of $M_2$. They form a rooted forest 
    such that the root of each tree is adjacent to a vertex of $M_2$.
    The number of vertices of the forest of degree at least three is linear in the number of leaf nodes of the forest. 
    By construction, 
    all leaf nodes of the forest are boundary vertices. 
    Therefore, only $O(k)$ vertices of degree at least three in $M$ are removed during the construction of $M_2$. 
    Also, the degree of a vertex decreases only when it is adjacent in $M$ to a root node of the forest.
    Since the number of root nodes of the forest is $O(k)$, and a root node of the forest is adjacent to only one vertex of $M_2$, there are $O(k)$ vertices of $M_2$ whose degrees decrease during the construction of $M_2$.

    Therefore, it is sufficient to show that vertices of $M_2$ of degree at least three is at most $12\Delta k$. 
    For this, we show that there are $|F_2|/(6\Delta)$ disjoint
    cycles in $M_2$, where  $F_2$ denotes the number of faces of $M_2$. 
    Due to the following claim, we can color the faces of $M_2$
    using $6\Delta$ colors such that the faces colored with the same
    color are pairwise vertex-disjoint.
    Thus there are $|F_2|/(6\Delta)$ disjoint
    cycles in $M_2$. Therefore, if $|F_2|$ exceeds $6\Delta k$, $(G,k)$ is a \textbf{yes}-instance. 

    \begin{claim}
        For any plane graph $H$ with maximum degree $\Delta$, the faces of $H$ can be colored with $6\Delta$ colors such that two faces sharing a common vertex are colored with different colors. 
    \end{claim}
    \begin{proof}
        We prove the claim using the induction on the number of faces of $H$.
        In the induction step, we remove edges one by one. Note that
        the removal of an edge does not decrease the maximum degree of a graph. 
        If the number of faces is less than $6\Delta$, then 
        we are done. Otherwise, we remove all degree-1 vertices iteratively. 
        Then, for each degree-2 vertex, we connect its two neighbors by an edge and remove it. Here, we do not change the drawing of $H$. In particular, there is a one-to-one correspondence between faces before and after the removal. Moreover, if two faces share a common vertex before the removal, then they still share a common vertex after the removal, and vice versa. 
        
        Consider the \emph{face-vertex graph} $H^*$ of $H$: For each face $f$ of $H$, we add a vertex for $H^*$ 
        in the interior of $f$, and connect it to all vertices incident to $f$. All vertices and edges of $H$ are also added to $H^*$. 
        Since $H^*$ is planar, it has a vertex $v$ of degree at most five.

        If $v$ corresponds to a face $f$ of $H$, then $f$ consists of at most five vertices. Therefore, there are at most $5\Delta$
        faces sharing a vertex with $f$. We remove an arbitrary edge $e$ incident to $f$, then two faces, say $f$ and $f'$, incident to $e$ are merged into a face, say $f''$. We apply the induction hypothesis to the resulting graph. 
        Then we color $f'$ using the color for $f''$.
        In this way, all faces of $H$, except for $f$, are colored properly. 
        Since the number of faces sharing a vertex with $f$ is $5\Delta$, there
        is at least one available color for $f$. Therefore, the induction step works.

        If $v$ is a vertex of $H$, it is incident to at least three faces since 
        we removed all degree-1 and degree-2 vertices of $H$. 
        This implies that $v$ is adjacent to at least six vertices in $H^*$, three for vertices of $H$ and three for vertices corresponding to faces of $H$. This contradicts the choice of $v$. Therefore, this case does not happen, and the claim holds.
    \end{proof}

    Now consider the case that $(G,k)$ is a \textbf{no}-instance. 
    Let $V_2, E_2$ and $F_2$ be the vertex set, edge set, and face set of $M_2$, respectively.
    Let $\ell$ be the number of vertices of $M_2$ of degree at least three. We have three formulas. Here $\textsf{deg}(v)$ denotes the degree of $v$ in $H_2$. 
    \begin{align*}
        &2(|V_2|-\ell)+3\ell \leq \sum_{v\in V_2} \textsf{deg}(v) = 2|E_2|\\
        &|V_2|-|E_2|+|F_2|=2 \\
        &|F_2|\leq 6\Delta k
    \end{align*}
    This implies that $\ell \leq 12\Delta k$, and thus the lemma holds. 
\end{proof}

As in Section~\ref{sec:fvs-kernel}, we first compute the graph $G$ 
by gluing $\ud(V_\irr)$ and $M$. More precisely, 
the vertex set of $G$ is the union of $V_\irr$ and $V(M)$.
There is an edge $uv$ in $G$ if and only if $uv$ is either an edge of $\ud(V_\irr)$, an edge of $M$, or an edge of $\ud(V)$ between $u\in V_\irr$ and $v\in V(M)$.
Notice that $v$ is a boundary vertex of $M$ in the third case.
By the following lemma, we can compute $k$ vertex-disjoint cycles of $\ud (V)$ by computing $k$ vertex-disjoint cycles of $G$.

\begin{lemma}\label{lem:cp-instance}
    $(G,k)$ is a \textbf{yes}-instance of $\cpp$ if and only if $(\ud(V),k)$ is a \textbf{yes}-instance of \cpp.
\end{lemma}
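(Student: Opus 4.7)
The plan is to prove both directions by establishing a cycle-preserving correspondence between $G$ and $\ud(V)$ that exploits the fact that $M$ is a minor of $\ud(V\setminus V_\irr)$. Recall that $M$ is obtained by (a) iteratively removing degree-$1$ non-boundary vertices of $\ud(V\setminus V_\irr)$ and then (b) contracting each maximal induced path of non-boundary vertices into a single edge. Consequently, each edge of $G$ is either an edge of $\ud(V)$ directly, or an $M$-edge encoding a unique induced path of $\ud(V\setminus V_\irr)$ whose interior vertices lie in $V\setminus V_\irr\setminus V(M)$ and are internally vertex-disjoint across distinct $M$-edges.

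For the forward direction, given $k$ vertex-disjoint cycles in $G$, I would lift each one to $\ud(V)$ by leaving edges of $\ud(V_\irr)$ and edges between $V_\irr$ and boundary vertices untouched, and replacing each used $M$-edge $ab$ by the induced path $a - v_1 - \cdots - v_m - b$ of $\ud(V\setminus V_\irr)$ that was contracted to produce it. Since contracted paths are internally vertex-disjoint and their interior vertices lie entirely outside $V(G)$, the lifts introduce no collisions among the $k$ cycles, so the lifted cycles are vertex-disjoint in $\ud(V)$.

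For the reverse direction, given $k$ vertex-disjoint cycles in $\ud(V)$, I would project each cycle $C$ to $G$ by discarding every vertex of $C$ that lies in $V\setminus V_\irr\setminus V(M)$ and connecting consecutive retained vertices via the appropriate $M$-edge. The main obstacle is to justify that this projection is well-defined, and it reduces to the structural claim that \emph{no vertex removed in phase~(a) of the construction of $M$ can lie on a cycle of $\ud(V)$}. I would prove this claim by a descent argument: if such a vertex $v$ lay on a cycle $C$, then since $v$ is non-boundary, both of its $C$-neighbours would lie in $V\setminus V_\irr$, and at least one of them must itself have been removed strictly before $v$ in phase~(a); iterating produces an infinite chain of distinct vertices of $C$, contradicting the finiteness of $C$. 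With this claim in hand, every vertex of $C$ outside $V(G)$ is a degree-$2$ non-boundary interior vertex of some contracted path, so each maximal sub-path of $C$ that avoids $V(G)$ coincides with the interior of a unique contracted path whose two $M$-endpoints are consecutive along $C$; replacing each such interior by the corresponding $M$-edge yields a simple cycle in $G$. Vertex-disjointness is preserved under projection because projection only deletes vertices, proving that $(G,k)$ is a yes-instance whenever $(\ud(V),k)$ is.
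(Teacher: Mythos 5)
Your proposal is correct and follows essentially the same route as the paper: lift each cycle of $G$ to $\ud(V)$ by expanding $M$-edges into their contracted paths (using that these paths are internally disjoint and avoid $V(G)$), and project cycles of $\ud(V)$ to $G$ by contracting maximal sub-paths outside $V(G)$. Your descent argument showing that no vertex removed in the degree-one pruning phase can lie on a cycle is a point the paper's proof leaves implicit, so your write-up is, if anything, slightly more careful on the well-definedness of the projection.
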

\begin{proof}
    Let $V_1$ be the union of $V_\irr$ and the set of boundary vertices of $\ud(V\setminus V_\irr)$. Note that $V(G)$ contains all vertices of $V_1$.
    We denote the complement $V\setminus V_1$ by $V_2$.
    For the forward direction, 
    let $\Gamma$ be the set of $k$ vertex disjoint cycles in $G$.
    For a cycle $\lambda\in \Gamma$, we compute a cycle $\lambda'$ in $\ud(V)$ as follows. For each edge $uv$  of $\lambda$, either there is an edge $uv$ on $G$ or there is a $u$-$v$ path on $\ud(V\setminus V_\irr)$. 
    For the latter case, we replace $uv$ with the $u$-$v$ path in $\ud(V\setminus V_\irr)$. We denote the resulting cycle by $\lambda'$. 
    We claim that two cycles $\lambda_1'$ and $\lambda_2'$ from $\lambda_1, \lambda_2\in \Gamma$ are vertex-disjoint.
    Since $\lambda_1, \lambda_2$ are vertex-disjoint in $G$, they are edge-disjoint. Moreover, for each vertex $w\in V_2\setminus V(G)$, there is at most one edge $uv$ in $V(G)$ such that there is a $u$-$w$-$v$ path in $\ud(V)$. Thus, $w$ is contained in at most one of $\lambda_1'$ and $\lambda_2'$. Hence, $\lambda_1'$ and $\lambda_2'$ are vertex-disjoint.
    
    For the converse, let $\Gamma'$ be the set of $k$ vertex disjoint cycles in $\ud(V)$. 
    For a cycle $\lambda'\in \Gamma'$, we do the converse way to compute a cycle $\lambda$ in $G$. In particular, for each maximal path of $\lambda'$ with all internal vertices contained in $V_2$ and have degree two in $\ud(V)$, let $u,v$ be the two end vertices of the path. Then we replace the path with $uv$ edge in $G$. The resulting graph is a cycle in $G$. Since $\Gamma'$ consists of vertex-disjoint cycle in $\ud(V)$ and $V(\lambda)\subset V(\lambda')$ holds for all cycles $\lambda'$, we obtain $k$ vertex-disjoint cycles in $G$.
    This completes the proof.    
\end{proof}

\subsection{Surface-Cut Decomposition of $G$}\label{sec:cp-surfacecut}
An and Oh~\cite{an2024eth} presented the static algorithm for \cp on unit disk graphs based on the similarity between unit disk graphs and planar graphs.
They first computes a weighted planar graph $H$, called the \emph{map sparsifier}, from the given unit disk graphs with $O(k)$ vertices together with its geometric representation. Then they recursively decompose the plane with respect to $H$.
Lastly, they design a dynamic programming algorithm for \cp on the recursive decomposition.
The efficiency of their algorithm is supported by the following two arguments. 
First, the total clique-weights of the vertices of $H$ lying on the boundary of the decomposition is \emph{small}. Second, the number of subproblems in the dynamic programming is bounded by a single exponential of the total clique-weights.

In our case, $G$ is not necessarily a unit disk graph, and thus we cannot use
the algorithm of~\cite{an2024eth} as a black box. But we can apply their argument to $G$ similarly. In particular, once we can define a map sparsifier of $G$ in a similar manner, 
we can recursively decompose the plane with respect to $H$ as they did. This
decomposition is called the \emph{surface-cut decomposition} of the plane with respect to $H$. 
Then using the surface-cut decomposition, we can design an efficient dynamic programming algorithm of \cpp.

\begin{figure}
    \centering
    \includegraphics[width=0.5\textwidth]{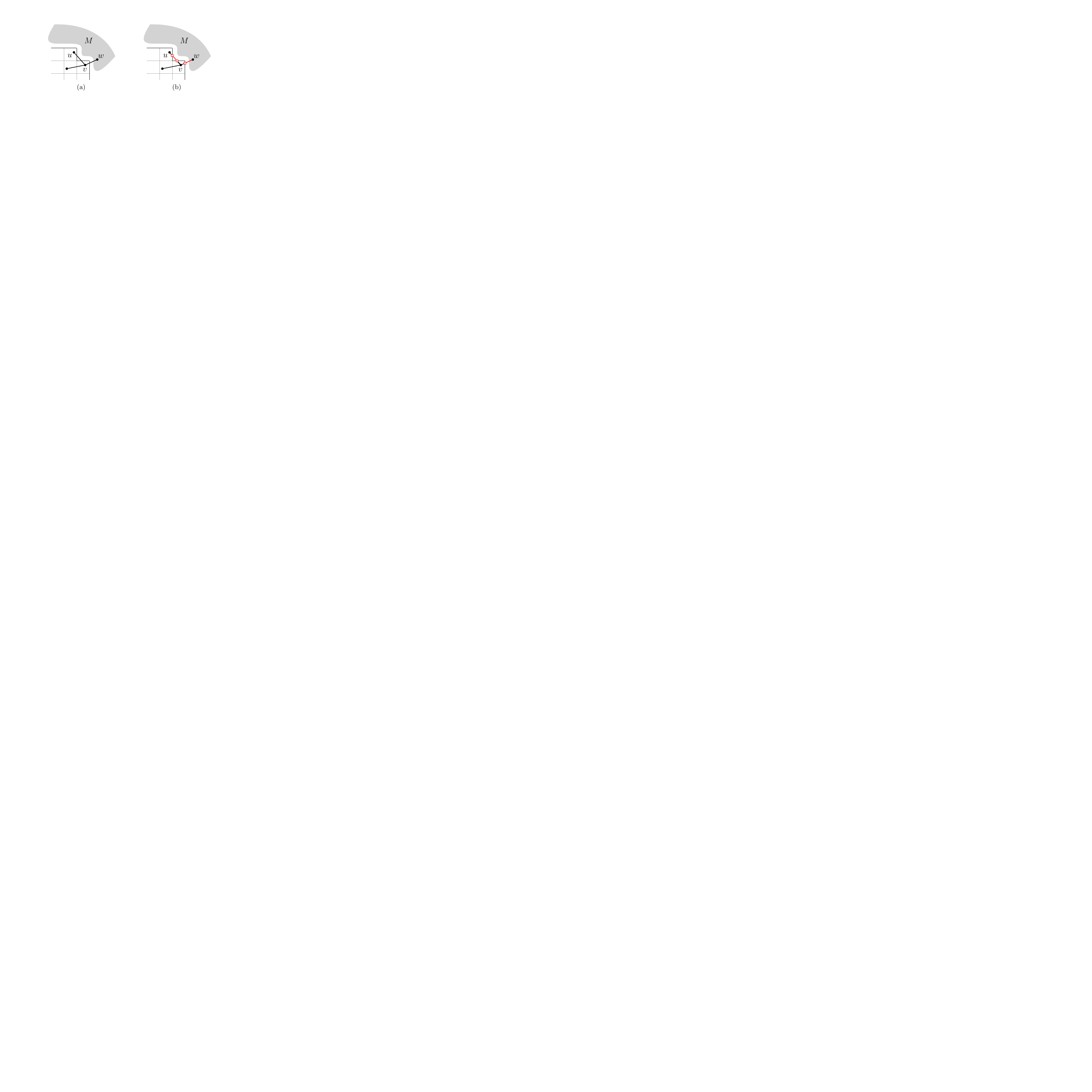}
    \caption{(a) Illustration of a part of $G$. The black grid line is a part of the boundary of $\boxplus_\irr$, and $uv$ and $wv$ are boundary edges of $G$. (b) The red boxes are the crossing points, and they split boundary edges accordingly. Then, the red edges are the subedges of the boundary edges lying outside of $\boxplus_\irr$, and the closure of $M$ contains the red edges. Note that $uv$ is a boundary edge split into three subedges.}
    \label{fig:closure-M}
\end{figure}

\subparagraph*{Drawing of $G$.}
Since $G$ is a minor of $\ud(V)$, we can draw $G$ based on $\ud(V)$.
But this drawing might have complexity $\Theta(n^2)$ in the worst case.
Instead, we consider a drawing of $G$ of complexity $\text{poly}(k)$ as follows.
Since the number of vertices of $V_\irr$ is $O(k)$, it suffices to draw $M$ and the edges between $V(M)$ and $V_\irr$ properly. 
Here, we call an edge of $G$ whose one endpoint is contained in $V_\irr$ and crosses the boundary of $\boxplus_\irr$ a \emph{boundary edge} of $G$. 
For each boundary edge $uv$ of $G$, we split it into (at most three) subedges with respect to the crossing points between $uv$ and the boundary of $\boxplus_\irr$. 
The graph obtained from $M$ by adding all subedges of the boundary edges lying outside of $\boxplus_\irr$ is called the \emph{closure} of $M$. 
See Figure~\ref{fig:closure-M}.

Consider a planar drawing of the closure of $M$ into the plane such that all 
vertices of the closure of $M$  
are drawn on their corresponding points in $V$, and the drawing does not cross 
the boundary of $\boxplus_\irr$. There always exists such a drawing. To see this, imagine that we draw the closure of $M$ in the plane based on the drawing of $\ud(V)$. 
If an edge in this drawing of the closure of $M$ crosses the boundary of $\boxplus_\irr$, we slightly perturb 
this edge so that the drawing does not cross the boundary of $\boxplus_\irr$. 
In Section~\ref{sec:planar_drawing}, we will see that we can compute a drawing of the closure of $M$ of complexity $\text{poly}(k)$ satisfying the desired properties in $\text{poly}(k)$ time.
Since the crossing points between the boundary of $\boxplus_\irr$ and 
the boundary edges of $G$ are added to the closure of $M$, 
we can draw $G$ by merging this drawing with the sub-drawing of $\ud(V)$
restricted to the edges and vertices of $G$ not appearing in $M$. 
Since the number of vertices and edges not appearing in $M$ is $O(k)$, we can compute the resulting drawing in $\text{poly}(k)$ time in total.

\begin{figure}
    \centering
    \includegraphics[width=0.7\textwidth]{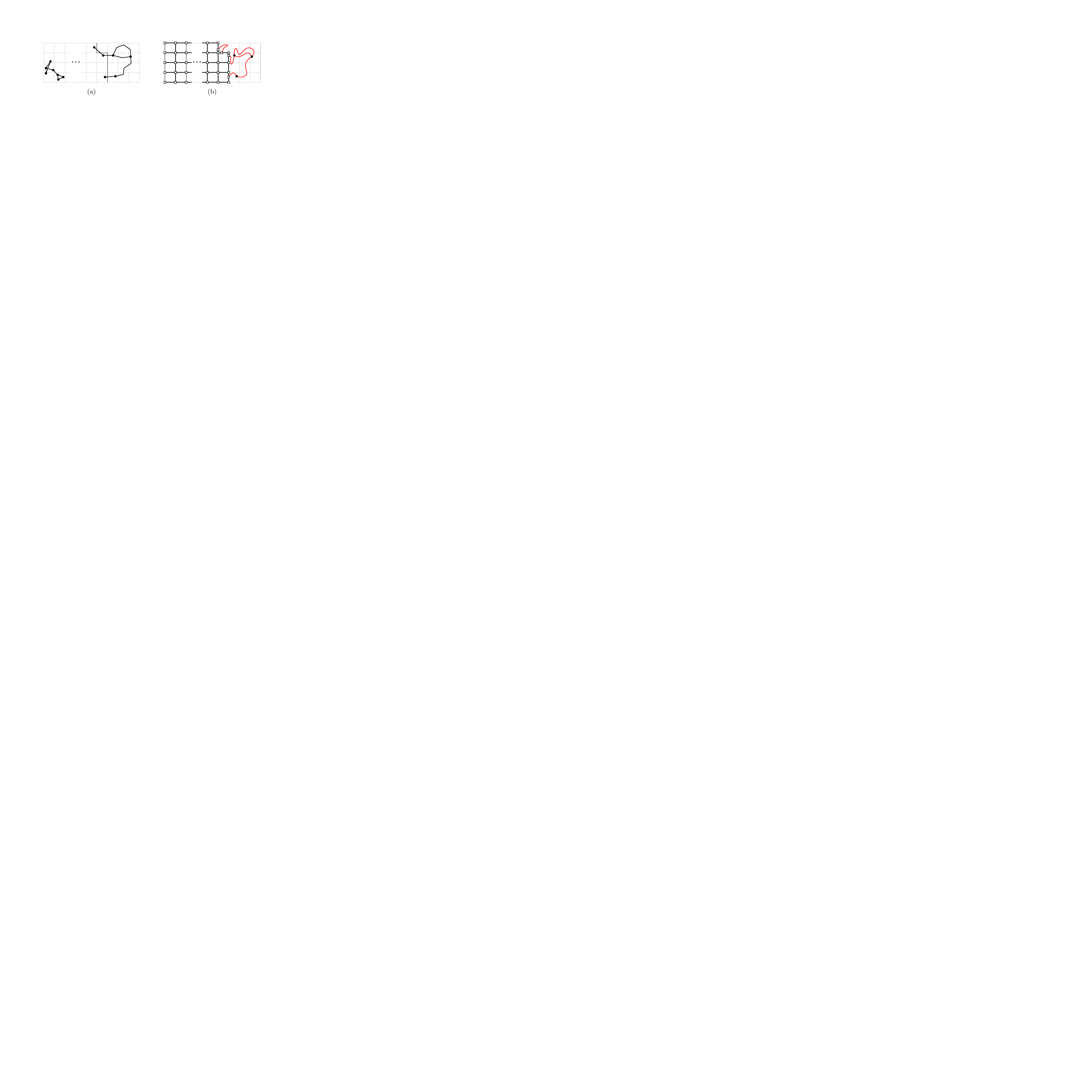}
    \caption{(a) Illustration of a part of the drawing $G$ respecting $\ud(V)$. The black grid line is a part of the boundary of $\boxplus_\irr$. (b) The map sparsifier of $G$. The black line segments and black boxes are the grid graph of $\boxplus_\irr$, and the red curves are the drawing of the closure of $M$ of complexity $\text{poly}(k)$.}
    \label{fig:map-sparsifier}
\end{figure}

\subparagraph*{Map sparsifier of $G$.}
We define the map sparsifier $H$ of $G$ and its planar drawing as follows. 
We construct the grid graph induced by the boundaries of the cells of $\boxplus_\irr$. 
In particular, each vertex (and edge) of the grid graph corresponds to a corner (and side) of a cell of $\boxplus_\irr$.
To obtain $H$, we glue the closure of $M$ and the grid graph together by considering the boundary edges $uv$ of $G$.
See Figure~\ref{fig:map-sparsifier}. 
Here, we add the vertices of the closure of $M$ added on the boundary of $\boxplus_\irr$ to $H$, and split the edges from the grid graph accordingly. 
Once we have a proper drawing of the closure of $M$ of complexity $\text{poly}(k)$, we can compute $H$ along with its geometric drawing in $\text{poly}(k)$ time.
We define the \emph{clique-weight} of each vertex of $H$ as follows. 
Recall that a vertex $v$ of $H$ is either a vertex of $M$, a vertex of the grid graph, or a newly created vertex due to the boundary edges. 
In any case, we specified the location of $v$ in the drawing of $H$. 
The clique-weight $c_H(v)$ of $v\in V(H)$ is defined as $1+\sum\log(1+|V(G)\cap \Box|)$ where the summation is taken over all $10$-neighboring cells $\Box$ of $\Box_v$. 
Let
$\wmax{G}=\max_{v\in V} c_H(v)$ and $\wsqr{G}=\sqrt{\sum_{v\in V}(c_H(v))^2}$.
\begin{lemma}\label{lem:cp-weight-of-H}
    If $|V_\irr|+|V(M)|=O(k)$, 
        {$\wmax{H} = O(\log k)$} and 
        $\wsqr{H} = O(\sqrt{k})$.
\end{lemma}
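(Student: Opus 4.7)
The plan is to handle the two bounds separately by direct counting on grid cells. The bound $\wmax{H}=O(\log k)$ is immediate: every vertex $v$ has only $O(1)$ many $10$-neighboring cells $\Box$, and for each $|V(G)\cap\Box|\le|V(G)|=|V_\irr|+|V(M)|=O(k)$, hence $c_H(v)=1+\sum_{\Box\sim\Box_v}\log(1+|V(G)\cap\Box|)=O(\log k)$, where $\Box\sim\Box_v$ denotes that $\Box$ is a $10$-neighbor of $\Box_v$.

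For $\wsqr{H}$ I would follow the Cauchy--Schwarz strategy from Lemma~\ref{lem:fvs-G'-weighted}. Using that the number of $10$-neighbors is $O(1)$, I would bound $(c_H(v))^2 \le O(1 + \sum_{\Box\sim\Box_v}(\log(1+|V(G)\cap\Box|))^2)$, then sum over $v\in V(H)$ and swap the order of summation to get
\[
\sum_{v}(c_H(v))^2 \;\le\; O(|V(H)|) + O\!\left(\sum_{\Box}(\log(1+|V(G)\cap\Box|))^2\cdot N(\Box)\right),
\]
where $N(\Box) = |\{v\in V(H) : \Box\sim\Box_v\}|$.

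The main step would be to show $|V(H)\cap\Box|=O(1)$ for every grid cell $\Box$, which forces $N(\Box)=O(1)$. I would argue this from the construction of $H$: each cell contributes $O(1)$ grid corners; a cell inside $\boxplus_\irr$ contains no vertex of $M$, while a cell outside contains at most two vertices of $V\setminus V_\irr$ (three would form a triangle of $\ud(V)$, placing them into $V_\textsf{tri}\subseteq V_\irr$); and each cell contains only $O(1)$ crossing points, since every boundary edge contributes at most two crossings with the boundary of $\boxplus_\irr$ and the boundary edges crossing a fixed cell side come from $O(1)$ nearby cells each holding $O(1)$ vertices of $M$. Pinning down this last part against the concrete $\text{poly}(k)$-complexity drawing of the closure of $M$ is the main technical obstacle.

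Once $N(\Box)=O(1)$ is in hand, together with $|V(H)|=O(k)$ (since $H$ lives on $O(k)$ cells with $O(1)$ vertices per cell) and the elementary inequality $(\log(1+z))^2\le 2z$ for positive integers $z$, I obtain $\sum_{\Box}(\log(1+|V(G)\cap\Box|))^2\le 2|V(G)|=O(k)$. Plugging back gives $\sum_v(c_H(v))^2=O(k)$ and therefore $\wsqr{H}=O(\sqrt k)$, completing the proof.
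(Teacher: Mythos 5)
Your proof is correct and follows essentially the same route as the paper: the $\wmax{H}$ bound is immediate from the $O(1)$ number of $10$-neighboring cells, and the $\wsqr{H}$ bound uses Cauchy--Schwarz, the observation that each grid cell contributes to only $O(1)$ of the clique-weights $c_H(v)$ (i.e.\ $O(1)$ vertices of $H$ per cell), and the inequality $(\log(z+1))^2\le 2z$ to reduce $\sum_\Box(\log(1+|V(G)\cap\Box|))^2$ to $O(|V(G)|)=O(k)$. The one technical worry you flag --- bounding crossing points per cell --- is resolved without reference to the $\text{poly}(k)$ drawing, since the crossings added to the closure of $M$ are only those between boundary edges and the boundary of $\boxplus_\irr$, and each cell is reached by $O(1)$ such edges because every vertex of $V\setminus V_\irr$ has $O(1)$ degree in $\ud(V)$ and lies in a cell with at most two shell vertices.
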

\begin{proof}
    In the definition of $c_H(v)$, we take summation of at most $O(1)$ terms $\log(1+|V\cap \Box|)$, each of which is at most $\log(1+|V_\irr|)=O(\log k)$. Hence, $\wmax{H}$ is clearly $O(\log k)$.
    For the second statement, the square of $1+\sum\log(1+|V\cap \Box|)$ is at most $O(1)\cdot (1+\sum(\log(1+|V\cap \Box|))^2)$ by the Cauchy-Schwarz inequality.
    For each grid cell $\Box\in \boxplus_\irr$, it contributes $O(1)$ clique-weights $c_H(v)$ because there are $O(1)$ 10-neighboring cells of $\Box$, and each cell contains $O(1)$ newly created vertices of $V(H)$ on its boundary.
    Hence, the squared sum of $c_H(v)$ is at most $|V(H)|+\sum O((\log(1+|V\cap \Box|))^2)\leq O(k+\sum(\log(1+k))^2)\leq O(k)$.
    Here, the last inequality comes from $(\log(z+1))^2\leq 2z$ for every positive integer $z$. This completes the proof.
\end{proof}

\subparagraph*{Surface-cut decomposition.} 
    Given the map sparsifier $H$ and its geometric drawing, we compute the \emph{surface-cut decomposition} for $H$ introduced by An and Oh~\cite{an2024eth}. 
    A point set $A$ is said to be \emph{regular closed} if the closure of the interior of $A$ is $A$ itself. 
    We call a regular closed and interior-connected region $A$ an \emph{$s$-piece} if $\mathbb{R}^2\setminus A$ has $s$ connected components. If $s=O(1)$, we simply call $A$ a \emph{piece}. Note that a piece is a planar surface, and this is why we call the following decomposition the \emph{surface-cut} decomposition. 
    The boundary of each component of $\mathbb{R}^2\setminus A$ is a closed curve. We call such a boundary component a \emph{boundary curve} of $A$.
    A \emph{surface-cut decomposition} of a plane graph $H$ with vertex weights $c: V(H)\rightarrow \mathbb{R}^+$ is a pair $(T, \mathcal A)$ 
    where $T$ is a rooted binary tree, and 
    $\mathcal A$ is a mapping that maps a node $t$ of $T$ into a piece $A_t$ in the plane satisfying the conditions \textbf{(A1--A3)}. 
    For a node $t$ and two children $t',t''$ of $t$,
    \begin{itemize} 
        \item \textbf{(A1)} $\partial A_t$ intersects the planar drawing of $H$ only at its vertices,    
	  \item \textbf{(A2)} $A_{t'}, A_{t''}$ are interior-disjoint and $A_{t'}\cup A_{t''}=A_t$, and
        \item \textbf{(A3)} $|V(H)\cap A_t|\leq 2$ if $t$ is a leaf node of $T$.
    \end{itemize}
    The \emph{clique-weighted width} of a node $t$ is defined as the sum of the clique-weights $c_H(v)$ for all $v\in \partial A_t\cap V(H)$.
    The \emph{clique-weighted width} of a surface-cut decomposition $(T, \mathcal A)$ is defined as the 
    maximum clique-weight of the nodes of $T$.
    An and Oh~\cite{an2024eth} showed that
    a planar graph has a surface-cut decomposition of small clique-weighted width, and it can be computed in time polynomial in the complexity of the planar graph. 
\begin{theorem}\label{thm:surfacecut}
    If $|V_\irr|+|V(M)|=O(k)$ and the complexity of the drawing of $M$ is $\text{poly}(k)$, one can compute surface-cut decomposition $(T, \mathcal A)$ of clique-weighted width $O(\sqrt k)$ in $\text{poly}(k)$ time. Moreover, the number of nodes in $T$ is $O(k)$.
\end{theorem}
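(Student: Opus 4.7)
The plan is to invoke the surface-cut decomposition machinery of An and Oh~\cite{an2024eth} as a near-black-box, after verifying that the map sparsifier $H$ meets all the required hypotheses. First I would observe that $H$ is a plane graph: it is built by gluing the closure of $M$ (which is planar since $M$ is drawn without crossings and we explicitly subdivided at each crossing with $\partial\boxplus_\irr$) with the grid graph of $\boxplus_\irr$ (which is planar), and by construction the two pieces meet only at shared vertices lying on $\partial\boxplus_\irr$. Under the hypothesis that the drawing of $M$ (hence of its closure) has complexity $\text{poly}(k)$, and since $|\boxplus_\irr|=O(k)$ contributes $O(k)$ grid-graph vertices and edges, the total complexity of $H$ is $\text{poly}(k)$.

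Next I would bound the clique-weight quantities of $H$. Since $|V_\irr|+|V(M)|=O(k)$, Lemma~\ref{lem:cp-weight-of-H} applies directly and gives $\wmax{H}=O(\log k)$ and $\wsqr{H}=O(\sqrt k)$. With these bounds in hand, the surface-cut decomposition theorem of An and Oh~\cite{an2024eth} -- whose statement is exactly that any planar graph with vertex weights admits a surface-cut decomposition of clique-weighted width $O(\wsqr{H}+\wmax{H})$ computable in time polynomial in the size of the planar graph -- produces a decomposition $(T,\mathcal A)$ of clique-weighted width $O(\sqrt k + \log k)=O(\sqrt k)$. Because $H$ has $\text{poly}(k)$ size, this computation takes $\text{poly}(k)$ time.

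Finally, I would argue the bound on $|V(T)|$. The standard construction of a surface-cut decomposition is a binary recursion that splits a piece along a short noose through $O(\sqrt k)$ weighted vertices until each leaf piece contains at most two vertices of $H$, as required by \textbf{(A3)}. At each leaf of $T$, the piece is charged to a distinct pair (or singleton) of vertices of $H$, so the number of leaves is at most $|V(H)|=O(k)$; since $T$ is binary, the total number of nodes is also $O(k)$. If the black-box decomposition of~\cite{an2024eth} does not already deliver this leaf bound explicitly, a simple post-processing pass that contracts any chain of internal nodes separating the same set of vertices, or merges leaf pieces containing the same vertex set, reduces $T$ to $O(k)$ nodes without increasing the clique-weighted width.

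The main obstacle is checking that the hypotheses of An and Oh's black box are met by our $H$, which is not a unit disk graph but a hybrid of $M$ and the grid graph; the cleanest way around this is to observe that their decomposition theorem depends only on planarity and vertex weights, not on any geometric realization as a unit disk graph, so once $H$ is presented as an explicitly drawn plane graph of polynomial size with the weights $c_H$, the theorem applies verbatim.
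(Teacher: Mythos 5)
Your proposal is correct and follows essentially the same route as the paper, which likewise treats the An--Oh surface-cut decomposition as a black box applied to the plane graph $H$ of $\text{poly}(k)$ complexity, using Lemma~\ref{lem:cp-weight-of-H} for the bounds $\wmax{H}=O(\log k)$ and $\wsqr{H}=O(\sqrt{k})$ to obtain clique-weighted width $O(\sqrt{k})$. Your additional care in justifying the $O(k)$ bound on $|V(T)|$ via the leaf-counting and merging argument is a reasonable way to make explicit what the paper leaves implicit.
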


\subsection{Dynamic Programming Algorithm}\label{sec:cp-dp}
Once we have a drawing of $G$ and a surface-cut decomposition $(T,\mathcal A)$ of $H$ with the weighted width $\omega$,
we can solve \cp on $G$ in $2^{O(\omega)}$-time using the algorithm in~\cite{an2024eth} with a slight (straightforward) modification. 
However, to make our paper self-contained, 
we describe the high-level summary of the dynamic programming algorithm in~\cite{an2024eth}. 

\subparagraph*{Structured solutions.}
We say a cycle $C$ of $G$ is an \emph{intra}-cycle if all vertices of $C$ are contained in a single cell $\Box$  and an \emph{inter}-cycle otherwise. 
If $(G,k)$ is a \textbf{yes}-instance of \cpp, there are $k$ vertex-disjoint cycles such that the number of inter-cycles intersecting a single grid cell $\Box$ is at most $O(1)$. 
Let $\Gamma$ be the set of vertex-disjoint cycles of $G$. Since we have the drawing of $G$, we can define the drawing of the cycles of $\Gamma$ appropriately. The \emph{intersection graph} of  $\Gamma$ is defined as a graph whose vertex set is $\Gamma$ such that two vertices are connected by an edge if and only if the drawings of two corresponding cycles intersect. If $(G,k)$ is a \textbf{yes}-instance of \cpp, there are $k$ vertex-disjoint cycles whose corresponding intersection graph is $K_{z,z}$-free for some constant $z$.

\begin{lemma}[Lemma 22 and 23 of~\cite{an2024eth}] \label{lem:sol-structure}
Suppose $(G,k)$ is a \textbf{yes}-instance. Then $G$ has a set $\Gamma$ of $k$ vertex-disjoint cycles such that for each grid cell $\Box$, there are $O(1)$ inter-cycles of $\Gamma$ intersecting $\Box$, and the intersection graph of $\Gamma$ is $K_{z,z}$-free for $z=O(1)$.
\end{lemma}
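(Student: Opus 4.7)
The plan is to take an arbitrary witness $\Gamma_0$ of $k$ vertex-disjoint cycles of $G$ (which exists by hypothesis) and modify it to obtain a structured witness $\Gamma$ satisfying both properties. Concretely, I would choose $\Gamma$ to minimize the lexicographic pair $(|\bigcup_{C\in\Gamma}V(C)|,\text{number of pairs of edge-crossings in the drawing of $\Gamma$})$ among all collections of $k$ vertex-disjoint cycles of $G$, and then argue that a violation of either property allows a strict decrease of this measure.

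For the first property, fix a cell $\Box$ and suppose that inter-cycles $C_1,\ldots,C_m$ of $\Gamma$ all intersect $\Box$. Recall that $V_\irr\cap\Box$ forms a clique in $G$, whereas $|V(M)\cap\Box|=O(1)$ by the bounded-degree construction of $M$. Hence, for $m$ larger than some constant, many of the $C_i$ contain at least two vertices of $V_\irr\cap\Box$ (for the remaining ones, $|V(C_i)\cap\Box|\le O(1)$ and they consume one of only $O(1)$ available slots). For a pair $C_i,C_j$ among the former, let $\pi_i$ (resp.\ $\pi_j$) be the sub-arc of $C_i$ (resp.\ $C_j$) between its first and last visits to $V_\irr\cap\Box$, with endpoints $x_i,y_i$ (resp.\ $x_j,y_j$). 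Using the clique, I replace $\pi_i$ by the chord $x_iy_i$ and $\pi_j$ by $x_jy_j$; the resulting two cycles are vertex-disjoint (they use only the four $V_\irr\cap\Box$ vertices inside $\Box$), still vertex-disjoint from $\Gamma\setminus\{C_i,C_j\}$ (as $x_i,y_i,x_j,y_j$ already belonged to $\Gamma$), and collectively use strictly fewer vertices, contradicting minimality. Degenerate cases where $\pi_i$ is a single vertex or entirely in $V(M)\cap\Box$ are absorbed by the $O(1)$ budget.

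For the $K_{z,z}$-freeness claim, suppose the intersection graph of $\Gamma$ contained $K_{z,z}$ for two sides $\mathcal A,\mathcal B$ of size $z$. Since cycles in $\Gamma$ are vertex-disjoint, every geometric intersection between a cycle in $\mathcal A$ and one in $\mathcal B$ is an edge crossing in the drawing of $G$. By the drawing construction in Section~\ref{sec:cp-surfacecut}, the only crossings occur within individual grid cells (outside $\boxplus_\irr$ the drawing inherits planarity from $\ud(V)$, and inside the map sparsifier is planar). Hence each $(C,C')\in\mathcal A\times\mathcal B$ witnesses a crossing in some cell $\Box$ shared by $C$ and $C'$. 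But by the first property, each cell hosts $O(1)$ inter-cycles, so each cell contributes only $O(1)$ such intersecting pairs. Since cycles of $\mathcal A\cup\mathcal B$ together intersect at most $O(|\mathcal A|+|\mathcal B|)=O(z)$ cells, the total number of intersecting pairs is $O(z)$, contradicting $z^2$ once $z$ exceeds the absolute constants involved.

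The main obstacle is the first property: one must carefully design the re-routing so that (a) the two new cycles remain vertex-disjoint from the rest of $\Gamma$, and (b) no extra edge crossings appear that could outweigh the gain in vertex count. The clique structure of $V_\irr\cap\Box$ furnishes all needed exchange chords for free, and the bound $|V(M)\cap\Box|=O(1)$ caps the pathological contributions from purely $M$-hosted subarcs, so the argument reduces to the clean clique-exchange case. Once the first property is in hand, the second follows by the fairly standard planarity-style counting above.
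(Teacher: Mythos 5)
This lemma is imported by the paper as a black box (Lemmas 22 and 23 of the cited work on static \textsc{Cycle Packing}); the paper gives no proof of its own, so your proposal has to stand on its own as a proof, and as written it does not.

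For the first property, your extremal measure (minimize total vertices, then crossings) does not force $O(1)$ inter-cycles per cell. The clique-exchange you describe only shortens a \emph{single} cycle's excursion through $\Box$: it shows that in a vertex-minimal solution each inter-cycle uses at most two (consecutive) vertices of $\Box$, but it says nothing about how many \emph{distinct} inter-cycles touch $\Box$. A large cell $V_\irr\cap\Box$ can host arbitrarily many vertex-disjoint inter-cycles each using exactly one vertex of $\Box$ (e.g., many disjoint triangles each with one corner in $\Box$ and two corners in nearby cells); such a configuration can be vertex-minimal and crossing-free, so neither component of your measure can be decreased, yet the number of inter-cycles through $\Box$ is unbounded. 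Your parenthetical claim that cycles with fewer than two $V_\irr\cap\Box$-vertices ``consume one of only $O(1)$ available slots'' is exactly the missing step and is not justified by anything you wrote (only the $V(M)\cap\Box$ part has an $O(1)$ bound). A correct argument needs a different potential --- e.g., first maximizing the number of intra-cycles, or minimizing the number of inter-cycles --- together with a rerouting that converts a bunch of inter-cycles meeting one cell into intra-cycles plus a reassembled outside part, which is the substantive content of the cited Lemma 22.

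For the $K_{z,z}$-freeness, the step ``cycles of $\mathcal A\cup\mathcal B$ together intersect at most $O(z)$ cells'' is false: an inter-cycle may traverse arbitrarily many grid cells, so $z^2$ crossing pairs could be realized in $z^2$ distinct cells with only $O(1)$ inter-cycles per cell, and your counting does not exclude $K_{z,z}$. Even granting property 1, deducing $K_{z,z}$-freeness requires an uncrossing/rerouting argument exploiting the clique structure of cells, not a cell-counting bound. Both halves therefore have genuine gaps; if you want to avoid reproving them, the honest move is to cite the static result as the paper does.
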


From now on, we let $\Gamma$ denote the maximum number of vertex-disjoint cycles of $G$ that satisfies the condition specified in Lemma~\ref{lem:sol-structure}.
We say $\Gamma$ as a \emph{structured solution}.
If it is clear from the context, we use $\Gamma$ as a set of vertex-disjoint cycles and a subgraph of $G$ interchangeably. 

\begin{figure}
    \centering
    \includegraphics[width=0.9\textwidth]{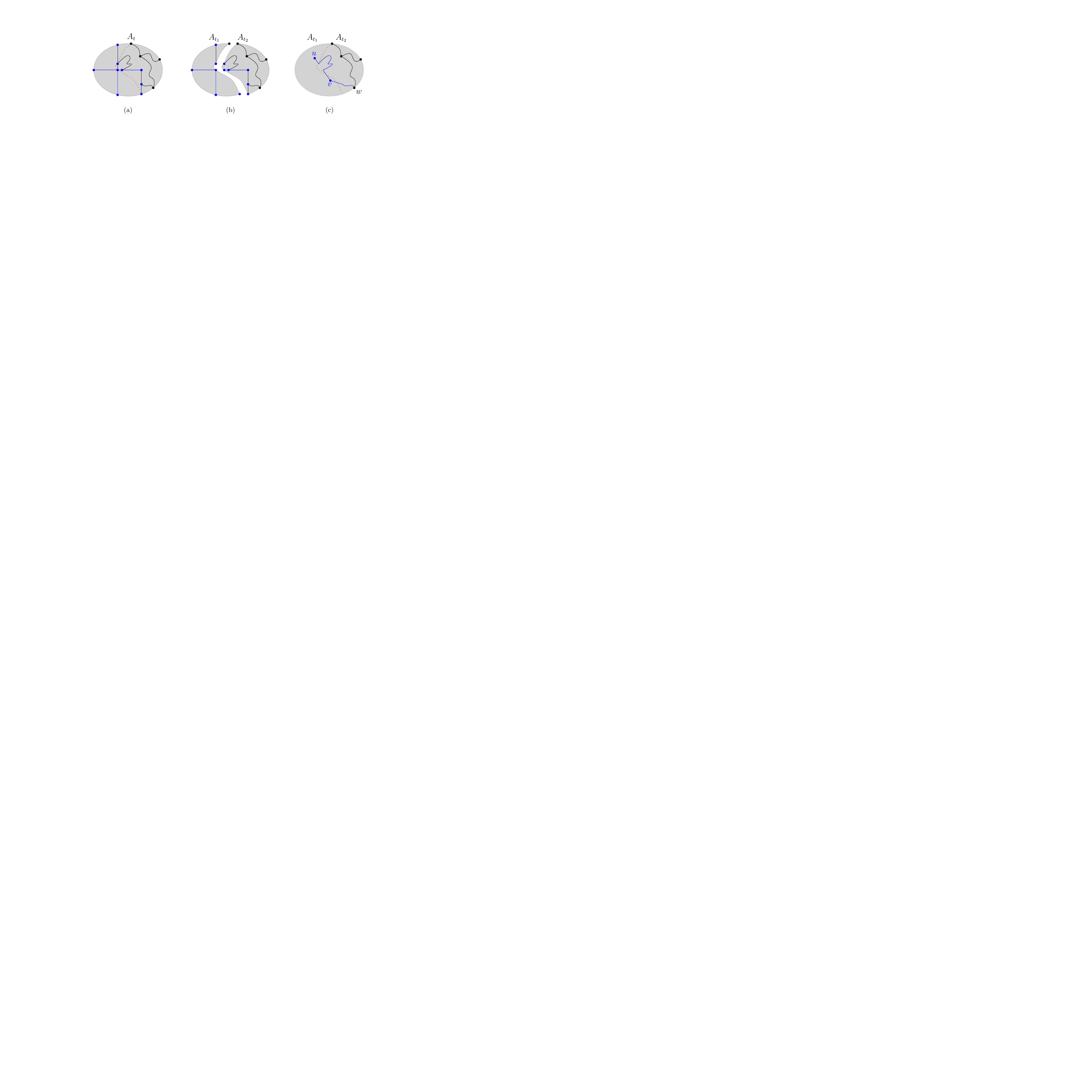}
    \caption{(a) The boundary of a piece $A_t$ intersects only vertices of $H$. The blue vertices and lines are a part of the grid graph, and the black ones are a part of the drawing of the closure of $M$. (b) In the construction of a surface-cut decomposition of $H$, $A_t$ is separated into two pieces $A_{t_1}$ and $A_{t_2}$ along the red dotted curve of (a). (c) The blue vertices and lines are the drawing of the vertices in $V_\irr$ and their incident edges. $uv$ is in $\textsf{cut}(t_1)$ and $vw$ is in $\textsf{cut}(t_1)$ and $\textsf{cut}(t_2)$. Note that every vertex of $M$ is on the boundary of pieces.}
    \label{fig:suface-cut-decom}
\end{figure}

    \subparagraph*{Subproblems.}
    For each node $t$ of $T$, we define subproblems for the subgraph $G_t$ of $G$ induced by $V(G)\cap A_t$. 
    Note that the drawing of an edge $uv$ of $G$ is no longer necessarily a line segment.
    Let $\cut t$ denote the set of edges of $G$ whose one endpoint is contained in $A_t$ and whose drawing intersects $\partial A_t$. 
    See Figure~\ref{fig:suface-cut-decom}. 
    Since a vertex of $G_t\setminus V(\cut t)$ is never incident to a vertex of $G\setminus G_t$,
    we are enough to encode the interaction between $\Gamma$ and $\cut t$.
    To do this, we focus on the parts of $\Gamma$ consisting of edges of $\cut t$.
     In particular, the interaction can be characterized as tuple $(\Lambda, V(\Delta), \mathcal P)$, where $\Lambda$ denotes the subgraph induced by edges of $\cut t\cap \Gamma$, $\Delta$ denotes the set of all intra-cycles of $\Gamma$, and $\mathcal P$ denotes the pairing of the endpoints of the paths of $\Lambda$ contained in $A_t$. Here, $\mathcal P$ stores the information that two vertices of each pair belong to a single path 
     in the subgraph of $\Gamma$ induced by $V(G)\cap A_t$. We call $(\Lambda, V(\Delta), \mathcal P)$ the \emph{signature} of $\Gamma$.

    Since we are not given $\Gamma$ in advance, we try all possible tuples $(\widetilde \Lambda, \widetilde V, \widetilde{\mathcal P})$ that can be the signatures of $\Gamma$, and find an optimal solution for each.
    Let $\boxplus_t$ be the set of grid cells that contain a vertex of $\cut t$.
    Let $f$ denote the mapping that maps an endpoint $u$ of an edge $uv$ of $\cut t$ into the first intersection point between the drawing of $uv$ and $\partial A_t$ starting from $u$. Clearly, $f$ is a one-to-one mapping.
    We impose three conditions for each node $t\in T$ and each cell $\Box$ of $\boxplus_t$. In the third condition, we let $\textsf{end}(\widetilde \Lambda)$ denote the image of $f$ on the set of endpoints of the paths of $\widetilde \Lambda$ contained in $A_t$. 
    \begin{itemize}\setlength\itemsep{-0.1em}
        \item $\widetilde \Lambda$ induces vertex-disjoint cycles and paths of $\cut t$, 
        \item $\widetilde V$ is a subset of 
        $\{v\in V(G): \Box_v\subseteq \boxplus_t\}$ with $\widetilde V\cap V(\widetilde \Lambda)=\emptyset$ and $|\Box_v \cap (V(G)\setminus \widetilde V)|=O(1)$, and
        \item $\widetilde{\mathcal P}$ is a pairing of $\textsf{end}(\widetilde \Lambda)$.
    \end{itemize}
    The issue here is that the third condition gives $2^{O(\omega\log \omega)}$ possible pairings, which violates the desired complexity. 
    To handle this issue, we strengthen the third condition based on Lemma~\ref{lem:sol-structure}.
    In particular, 
    if $(\widetilde \Lambda, \widetilde V, \widetilde{\mathcal P})$ is the signature of a \emph{structured} solution $\Gamma$, there is a set of curves contained in $A_t$ such that each curve connects two points of $\partial A_t$ specified by a pair of $\widetilde{\mathcal P}$. 
    Then the intersection graph with respect to these curves is
    $K_{z,z}$-free by Lemma~\ref{lem:sol-structure}. Subsequently, we can strengthen the third condition as follows.
    \begin{itemize}
        \item $\widetilde{\mathcal P}$ is a pairing of $\textsf{end}(\widetilde \Lambda)$ such that there is a set of curves connecting two points of the pairs of $\widetilde{\mathcal P}$ whose intersection graph is $K_{z,z}$-free for a constant $z$. 
    \end{itemize}

    \begin{lemma}\label{lem:tuple-num}
        The number of tuples $(\widetilde \Lambda, \widetilde V, \widetilde{\mathcal P})$ satisfying the posed conditions is $2^{O(\omega)}$. Moreover, these tuples can be computed in $2^{O(\omega)}$ time. 
    \end{lemma}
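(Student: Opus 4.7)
The plan is to bound the number of admissible tuples by handling the three components $\widetilde\Lambda$, $\widetilde V$, and $\widetilde{\mathcal P}$ separately, each by a factor of $2^{O(\omega)}$, and then multiply. Throughout I would use the two facts already built into the construction: $\omega = \sum_{v\in V(H)\cap \partial A_t} c_H(v)$ by the definition of clique-weighted width, and $|V(G)\cap \Box| \leq 2^{O(c_H(v))}$ for every $10$-neighboring cell $\Box$ of a boundary vertex $v\in V(H)\cap \partial A_t$ by the definition of $c_H$.

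First I would count the choices of $\widetilde\Lambda$. Condition (A1) of the surface-cut decomposition forces every edge of $\cut t$ to cross $\partial A_t$ in its drawing at some vertex of $V(H)\cap \partial A_t$, and the two endpoints of such an edge lie in cells close to that crossing vertex $v$. Hence the number of edges of $\cut t$ routed through $v$ is bounded by a product of cell sizes, namely $2^{O(c_H(v))}$. Since $\widetilde\Lambda$ induces a vertex-disjoint union of paths and cycles, at most two of its edges pass through $v$, giving $2^{O(c_H(v))}$ local choices at $v$. Multiplying over all $v\in V(H)\cap \partial A_t$ yields $2^{O(\sum_v c_H(v))} = 2^{O(\omega)}$. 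The bound for $\widetilde V$ is analogous: each cell $\Box\in \boxplus_t$ contains a vertex incident to $\cut t$ and therefore sits near a boundary vertex $v$ with $|\Box \cap V(G)| \leq 2^{O(c_H(v))}$, so specifying the at-most-$O(1)$ vertices of $\Box$ excluded from $\widetilde V$ costs $2^{O(c_H(v))}$ per cell, which multiplies to $2^{O(\omega)}$.

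The hard step is the bound on the pairings $\widetilde{\mathcal P}$, since the naive count of perfect matchings on $m$ endpoints is $(m-1)!! = 2^{\Omega(m\log m)}$, one $\log$-factor too large. This is where the strengthened third condition is essential: $\widetilde{\mathcal P}$ must be realizable by a family of curves drawn inside $A_t$ whose intersection graph is $K_{z,z}$-free for a fixed constant $z$. Because $A_t$ is an $O(1)$-piece, i.e., a planar surface with $O(1)$ boundary components, a Catalan-style argument generalizing the non-crossing matching enumeration used for sphere-cut decompositions gives a $2^{O(m)}$ bound on the number of such realizable pairings, where $m = |\textsf{end}(\widetilde\Lambda)| = O(\omega)$; this is exactly the enumeration result established by An and Oh~\cite{an2024eth}. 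Combining the three bounds yields $2^{O(\omega)}$ tuples, and the same argument is constructive: I would enumerate the local data at each boundary vertex in $2^{O(c_H(v))}$ time per vertex and generate the $K_{z,z}$-free pairings via the same Catalan-style recursion, achieving total time $2^{O(\omega)}$.
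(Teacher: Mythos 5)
Your proposal is correct and follows essentially the same route as the paper: charge the endpoints of edges of $\cut{t}$ to nearby vertices of $V(H)\cap\partial A_t$ so that the product of cell sizes telescopes to $2^{O(\sum_v c_H(v))}=2^{O(\omega)}$, count $\widetilde V$ cell by cell, and invoke the An--Oh enumeration of $K_{z,z}$-free pairings for $\widetilde{\mathcal P}$. Two small imprecisions worth noting, neither of which affects the count: an edge of $G$ drawn inside $\boxplus_\irr$ need not cross $\partial A_t$ at a vertex of $H$ (the paper instead charges it to a vertex of $H$ lying on $\partial\Box\cap\partial A_t$ for the cell $\Box$ containing the crossing), and a single boundary vertex of $H$ may receive many edges of $\widetilde\Lambda$ --- the degree-two bound applies to vertices of $G$, so the local count at $v$ comes from choosing $O(1)$ vertices per nearby cell and then their $\le 2$ adjacencies, not from ``at most two edges through $v$.''
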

    \begin{proof}
        First, we analyze the number of choices of $\widetilde\Lambda$. 
        For each edge $uv$ of $\cut t$, we charge $u$ and $v$ into some vertex of $\partial A_t\cap V(H)$ through the case studies.
        In the first case, the drawing of $uv$ and $\partial A_t$ intersect outside \hgc.
        Since $\partial A_t$ never intersects the edge part of $H$ that is drawn outside \hgc, without loss of generality, $u$ is an intersection point between $uv$ and $\partial A_t$. 
        We charge $u$ and $v$ into $u$.
        In the other case, $uv$ and $\partial A_t$ intersect inside \hgc. Let $w$ be an intersection point.
        Since $H$ has no vertex on the interior of $\Box_w$, there is a vertex of $V(H)$ lie on $\partial \Box_w\cap \partial A_t$.
        We charge $u,v$ to that vertex. 
        Note that for both cases, we charge $u$ and $v$ into a vertex of $V(H)$ such that the distance between $u$ (and $v$) and the vertex of $V(H)$ is at most two. 
        Subsequently, for each $v\in V(H)\cap \partial A_t$, $u$ can be charged into $v$ only if $u$ is contained in a 10-neighboring cell of $\Box_v$.
        Hence, the number of    
        different sets of $V(\widetilde \Lambda)$ is at most
        \[ 
        \prod_{\substack{d(\Box_v, \Box)\leq 10\\ v\in \partial A_t}}|V(G)\cap\Box|^{O(1)}= 
        \exp 
        \biggl( \sum_{\substack{d(\Box_v, \Box)\leq 10\\ v\in \partial A_t}} O(\log |V(G)\cap \Box|)\biggr)
        =
        \exp\biggl( 
        O \biggl( \sum_{v\in\partial A_t} c_H(v)\biggr)\biggr) = 2^{O(\omega)}.\]
        Since each vertex of $\widetilde \Lambda
        $ has degree at most two in $\widetilde \Lambda
        $, the number of different edge sets of $\widetilde \Lambda
        $ is $2^{O(\omega)}$.
        
        Second, the number of different $\widetilde V$ is $2^{O(\omega)}$ because all but $O(1)$ vertices are contained in $\widetilde V$ for each cell $\Box$ of $\boxplus_t$, and the number of cells of $\boxplus_t$ is at most the number of vertices in $\partial A_t$, which is $O(\omega)$.
        All possible pairs $(\widetilde \Lambda, \widetilde V)$ can be computed in $2^{O(\omega)}$ time.
        An and Oh~\cite{an2024eth} showed that the number of different possible pairings $\widetilde{\mathcal P}$ under the third posed condition is $2^{O(\omega)}$, and we can enumerate all possible pairings in $2^{O(\omega)}$ time. 
    \end{proof}

    Combining all, we conclude the following theorem.

\begin{theorem}\label{thm:cp-query}
    Given the core grid cluster \hg and the skeleton $M$, \cp can be solved in $2^{O(\sqrt k)}$ time.
\end{theorem}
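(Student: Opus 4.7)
The plan is to reduce the problem to running the dynamic programming of Section~\ref{sec:cp-dp} on the surface-cut decomposition of Theorem~\ref{thm:surfacecut}, after first ruling out the case of a trivially huge kernel. First I would check in $O(|V_\irr| + |V(M)|)$ time whether $|V_\irr| + |V(M)| \leq c\,k$ for the constant $c$ hidden by Lemma~\ref{lem:irrducible_bound_fvs} and Lemma~\ref{lem:cp-kernel-bound}. If the bound is exceeded, then by the contrapositive of Lemma~\ref{lem:cp-kernel-bound} (combined with Lemma~\ref{lem:irrducible_bound_fvs} to rule out large $V_\irr$, which forces many vertex-disjoint triangles and hence many vertex-disjoint cycles), $(\ud(V),k)$ is automatically a \textbf{yes}-instance, so I return \textbf{yes} at once.

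Assuming the bound holds, I would form the glued graph $G$ of Lemma~\ref{lem:cp-instance}, so that it suffices to solve \cpp on $G$, and then compute the drawing of $G$ and the map sparsifier $H$ described in Section~\ref{sec:cp-surfacecut} in $\text{poly}(k)$ time. By Lemma~\ref{lem:cp-weight-of-H}, $\wsqr{H} = O(\sqrt k)$, so Theorem~\ref{thm:surfacecut} yields a surface-cut decomposition $(T,\mathcal A)$ with $O(k)$ nodes and clique-weighted width $\omega = O(\sqrt k)$ in $\text{poly}(k)$ time.

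The core of the argument is a bottom-up dynamic programming on $T$ indexed by the signatures $(\widetilde \Lambda, \widetilde V, \widetilde{\mathcal P})$ defined in Section~\ref{sec:cp-dp}. At each leaf node $t$, by \textbf{(A3)} the piece $A_t$ contains at most two vertices of $H$, so I enumerate the $2^{O(\omega)}$ admissible signatures directly and record for each whether it is realizable by a partial structured solution inside $A_t$. At an internal node $t$ with children $t', t''$, I combine the tables of the two children by scanning all pairs of compatible signatures: two signatures are compatible iff their edge-sets agree on $\cut{t'}\cap\cut{t''}$, their vertex-sets $\widetilde V$ are consistent, and gluing their pairings along the shared boundary yields a pairing on $\textsf{end}(\widetilde\Lambda_t)$ with no premature closed curves (which would prematurely form cycles that must be counted) while respecting the $K_{z,z}$-freeness constraint on the enclosing curves. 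By Lemma~\ref{lem:tuple-num} both tables have size $2^{O(\omega)} = 2^{O(\sqrt k)}$, so each combination step takes $2^{O(\sqrt k)}$ time, and since $|T| = O(k)$ the whole DP runs in $2^{O(\sqrt k)}$ time. At the root, I read off the maximum number of vertex-disjoint cycles and compare it with $k$.

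The main obstacle is verifying that the composition rule at an internal node really preserves the $K_{z,z}$-free-curve invariant built into the third strengthened condition: one must check that curves certifying the left child's pairing and curves certifying the right child's pairing can be concatenated across $\partial A_{t'}\cap \partial A_{t''}$ without introducing a $K_{z,z}$ in the combined intersection graph. This is exactly the technical content already worked out by An and Oh~\cite{an2024eth} for unit disk graphs, and since $G$ differs from a unit disk graph only by contractions internal to $\boxplus_\irr$ (which do not alter the topology of the drawing outside $\boxplus_\irr$, where the $K_{z,z}$-freeness argument lives), their analysis transfers verbatim. The remaining bookkeeping, in particular counting closed intra-cycles inside $A_t$ and propagating the total cycle count, is routine and contributes only polynomial factors that are absorbed into $2^{O(\sqrt k)}$.
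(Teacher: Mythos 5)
Your proposal is correct and follows essentially the same route as the paper's proof: rule out the large-kernel case via Lemma~\ref{lem:cp-kernel-bound} (and Lemma~\ref{lem:irrducible_bound_fvs}), then build the drawing and map sparsifier, invoke Theorem~\ref{thm:surfacecut} for a width-$O(\sqrt k)$ surface-cut decomposition with $O(k)$ nodes, and run the signature-indexed dynamic programming with $2^{O(\sqrt k)}$ states per node as bounded by Lemma~\ref{lem:tuple-num}. The extra detail you give on combining child tables and preserving the $K_{z,z}$-freeness invariant is exactly the part the paper also defers to An and Oh~\cite{an2024eth}, so nothing is missing.
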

\begin{proof}
    Due to the Lemma~\ref{lem:cp-kernel-bound}, $(G,k)$ is \textbf{yes}-instance if either $|V_\irr|=\omega(k)$ or $V(M)=\omega(k)$. 
    Otherwise, we compute the drawing of $H$ using Lemma~\ref{lem:drawing-planar-polygonial} and compute the surface-cut decomposition $(T, \mathcal A)$ of clique-weighted width $O(\sqrt k)$ in $\text{poly}(k)$ time using Theorem~\ref{thm:surfacecut}.
    In the dynamic programming on $(T, \mathcal A)$, the number of tuples we have enough to consider is $2^{O(\sqrt k)}$ and these tuples can be computed in $2^{O(\sqrt k)}$ time. 
    Then the dynamic programming takes $2^{O(\sqrt k)}$ time since $T$ has $O(k)$ nodes.
\end{proof}

\subsection{Planar Drawing of the Closure of $M$}\label{sec:planar_drawing}
In this subsection, we illustrate how to compute a planar drawing of the closure of $M$ into the plane such that the 
vertices are drawn on their corresponding points in $V$, and the drawing does not intersect
the boundary of $\boxplus_\irr$.  
Here, it suffices to prove the following lemma. 
In our case, each vertex of the closure of $M$ is prespecified,  
and each connected region of $\mathbb{R}^2\setminus \boxplus_\irr$ is a polygonal domain $\Sigma$. 
By applying the following lemma for each connected region of $\mathbb{R}^2\setminus \boxplus_\irr$, we can compute a planar drawing of the closure of $M$ of complexity $\text{poly}(k)$ with the desired properties in $\text{poly}(k)$ time. 

\begin{lemma}\label{lem:drawing-planar-polygonial}
    Any planar graph $L$ admits a planar drawing on a polygonal domain $\Sigma$ that maps each vertex to its prespecified location and each edge to a polygonal curve with $O(|\Sigma|\cdot |E(L)|^3)$ bends, where $|\Sigma|$ denotes the total complexity of boundaries of $\Sigma$. Moreover, we can draw such one in $O(|\Sigma|\cdot |E(L)|^4)$ time if we know the proper ordering of edges incident to each vertex.
\end{lemma}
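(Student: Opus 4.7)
The plan is to construct the drawing incrementally, adding the edges of $L$ one at a time in an order consistent with the prescribed rotation system, and maintaining throughout the invariants that the partial drawing is planar, lies inside $\Sigma$, realizes the prescribed vertex positions, and respects the prescribed cyclic order of edges around every vertex. Before the main loop, I would place each vertex $v$ at its prescribed position and draw around $v$ a tiny ``wheel'' consisting of $\deg(v)$ arbitrarily short stubs, one per incident edge, placed in the prescribed cyclic order. The wheels collectively contribute $O(|E(L)|)$ pieces of complexity and fix once and for all the local combinatorial structure at every vertex; each edge of $L$ will later be attached to its two assigned stubs.

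In the main loop, for the $i$-th edge $e_i = uv$, let $F_{i-1}$ denote the connected component of $\Sigma$ minus the previously drawn edges (and stubs) whose boundary contains the two stubs assigned to $e_i$. By induction on the rotation system, the two stubs always lie on the boundary of a common face of the current partial embedding, so $F_{i-1}$ is a well-defined polygonal region (possibly with holes). I would triangulate $F_{i-1}$ in time proportional to its complexity $N_{i-1}$, build the dual graph of the triangulation, and find a dual-graph path from a triangle incident to $u$'s stub to a triangle incident to $v$'s stub. Tracing the midpoints of the triangle edges crossed by this dual path yields a polygonal curve inside $F_{i-1}$ for $e_i$ with $O(N_{i-1})$ bends that attaches to the correct stubs, thereby preserving the rotation system.

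For the bend bound, let $B_i$ be the total number of bends after the $i$-th edge is drawn. Naively $N_{i-1} = O(|\Sigma| + B_{i-1})$, which gives an exponential recursion, so the heart of the proof is a more careful routing argument. The observation is that the boundary of $F_{i-1}$ consists of pieces of $\partial \Sigma$ together with copies of at most $O(|E(L)|)$ previously drawn edges, and one can always choose a route for $e_i$ that ``follows'' each such piece only along a single contiguous portion of its length, contributing at most $O(|\Sigma| + |E(L)|)$ bends per piece. This yields $b_i = O(|\Sigma| \cdot |E(L)|^2)$ bends per edge, and summing over the $|E(L)|$ edges gives the claimed $O(|\Sigma| \cdot |E(L)|^3)$ total. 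The main obstacle is precisely this amortized routing argument: one must argue that the structured route exists while simultaneously being planar with respect to all earlier edges and consistent with the prescribed cyclic orders at $u$ and $v$. Finally, each iteration is dominated by the triangulation and dual-graph search in a region of complexity $O(|\Sigma| \cdot |E(L)|^2)$, plus $O(|E(L)|)$ bookkeeping for the rotation system, giving $O(|\Sigma| \cdot |E(L)|^3)$ per edge and $O(|\Sigma| \cdot |E(L)|^4)$ total running time.
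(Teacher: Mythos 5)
Your approach is genuinely different from the paper's: the paper invokes the Pach--Wenger theorem (every planar graph has a planar polyline drawing with prespecified vertex locations and $O(|V(L)|)$ bends per edge) as a black box, forces the prescribed rotation system by attaching wheel-graph gadgets so that the graph becomes 3-connected and Whitney's theorem makes the embedding unique, and then handles the polygonal domain by contracting each hole to a point, blowing the holes back up, and uncontracting the boundary vertices. You instead propose an incremental edge-by-edge routing inside a triangulated face. Unfortunately, the step you yourself identify as ``the heart of the proof'' is not established, and I do not believe it can be established in the form you state it. You claim that routing $e_i$ so that it follows each boundary piece of $F_{i-1}$ along a single contiguous portion costs only $O(|\Sigma|+|E(L)|)$ bends per piece. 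But a boundary piece that is a copy of a previously drawn edge $e_j$ carries all $b_j$ bends of that edge, and a curve that runs alongside a contiguous portion of $e_j$ inherits up to $b_j$ bends. So even under your contiguity assumption the best bound you get is $b_i \le O(|\Sigma|) + \sum_{j<i} b_j$, which is exactly the doubling recursion you were trying to escape and yields $2^{\Theta(|E(L)|)}$ bends, not $O(|\Sigma|\cdot|E(L)|^2)$. To break this feedback loop you would need to argue that the new route only ever needs to track the \emph{original} geometry ($\partial\Sigma$ and the vertex stubs), never the accumulated bends of earlier edges, and no such argument is given; this is precisely the difficulty that pushes the paper (and Pach--Wenger) to a global, non-incremental construction.

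A secondary, smaller gap: you assert ``by induction on the rotation system, the two stubs always lie on the boundary of a common face of the current partial embedding.'' This is true only if the edges are inserted in an order compatible with some planar embedding extension (e.g., following a canonical ordering or an $st$-order of the faces); for an arbitrary insertion order the two stubs of $e_i$ can end up in different faces of the partial drawing even though the full rotation system is planar. You would need to specify and justify the insertion order. Neither issue is fatal to the general strategy, but as written the bend bound --- the substance of the lemma --- is not proved.
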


We demonstrate Lemma~\ref{lem:drawing-planar-polygonial} by Witney's theorem~\cite{whitney19332, whitney1932congruent} along with the algorithm in~\cite{pach2001embedding}.
Specifically, Witney's theorem guarantees that a \emph{3-connected} planar graph admits a topologically unique planar embedding.

For clarity, we assume that $L$ is connected, and we demonstrate how to draw a planar drawing of $L$ that maps each vertex to its prespecified location and each edge to a polygonal curve with $O(|\Sigma|\cdot |E(L)|)$ bends in $O(|\Sigma|\cdot|E(L)^2|)$ time.
When $L$ is not connected, we can draw the desired planar drawing of $L$ by drawing each component of $L$ using the aforementioned algorithm, sequentially.
Precisely, after we compute a drawing of a connected component of $L$, we add the region containing the drawing into $\Sigma$ as a hole.
We can compute such region in time $O(|\Sigma|\cdot |E(L)|^2)$ time by unifying all faces of the drawing except the outer face.
Note that the total complexity of the boundaries of such regions is $O(|\Sigma|\cdot |E(L)|^2)$.
Thus we can compute the desired planar drawing of $L$ in $O(|\Sigma|\cdot |E(L)|^4)$ time.

\begin{lemma}[Theorem 1 in~\cite{pach2001embedding}]\label{lem:fix-drawing}
    Every planar graph $L$ admits a planar drawing that maps each vertex to an arbitrarily prespecified distinct location and each edge to
    a polygonal curve with $O(|V(L)|)$ bends.
    Moreover, such a drawing can be constructed in $O(|V(L)|^2)$ time.
\end{lemma}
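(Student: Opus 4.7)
The plan is to invoke the Pach--Wenger algorithm (Lemma~\ref{lem:fix-drawing}) as a black box to obtain an initial planar polygonal drawing of $L$ in the whole plane, and then to deform this drawing so that it lies entirely inside $\Sigma$ while remaining planar and respecting the prescribed cyclic orderings at every vertex. First I would apply Lemma~\ref{lem:fix-drawing} to $L$, obtaining in $O(|V(L)|^2)$ time a planar drawing $\mathcal D$ with vertices at their prescribed locations and each edge a polygonal curve with $O(|V(L)|)$ bends. The vertices already lie inside $\Sigma$ by hypothesis, but the edges of $\mathcal D$ may cross through the holes of $\Sigma$, so the drawing must be corrected.

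To correct $\mathcal D$, I process its edges one at a time. For the current edge $e$, I trace its curve $\gamma_e$ from one endpoint to the other and identify every maximal sub-arc of $\gamma_e$ lying outside $\Sigma$; each such excursion is replaced by a polygonal curve that hugs the inside of the corresponding boundary component of $\Sigma$. To keep the drawing planar, the edges already processed in earlier rounds are treated as additional obstacles, and each replacement is computed by a shortest-path-in-polygonal-domain routine. The proper side of each obstacle on which to route is forced by the homotopy class of $e$ in $\Sigma$ relative to the already-drawn edges, which in turn is fixed by the combinatorial embedding carried over from $\mathcal D$.

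For the bend count, each straight segment of $\gamma_e$ can cross $\partial \Sigma$ at most $O(|\Sigma|)$ times, and $\gamma_e$ has $O(|V(L)|) = O(|E(L)|)$ segments, so each edge undergoes $O(|\Sigma|\cdot |E(L)|)$ replacements, each contributing $O(|\Sigma|)$ bends along $\partial\Sigma$ plus extra detours induced by up to $O(|E(L)|)$ previously rerouted edges that already occupy corridors along $\partial \Sigma$. Summed over the $|E(L)|$ edges this yields the claimed $O(|\Sigma|\cdot |E(L)|^3)$ total bend bound. The shortest-path computations run in time polynomial in the current polygonal complexity, which is always $O(|\Sigma|\cdot |E(L)|^3)$, and carrying them out once per edge gives the $O(|\Sigma|\cdot |E(L)|^4)$ total running time.

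The main obstacle will be matching the prescribed cyclic edge ordering at each vertex of $L$ during rerouting. This is where Whitney's theorem comes in: after decomposing $L$ into its $3$-connected components via an SPQR/block decomposition, each component has a unique planar embedding up to global reflection, so the embedding produced by Lemma~\ref{lem:fix-drawing} automatically matches the prescribed one on each component (possibly after a single flip). The rerouting in Step~2 can then be applied component by component, with each component drawn inside a sub-polygon of $\Sigma$ that is reserved for it and then treated as an additional hole when drawing subsequent components --- exactly the incremental strategy sketched in the text immediately preceding the lemma, where the region containing each already-drawn component is added to $\Sigma$ as a new hole for the next component.
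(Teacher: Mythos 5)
Your proposal does not prove the statement under review. The statement is Lemma~\ref{lem:fix-drawing} itself, i.e., the Pach--Wenger theorem that every planar graph $L$ has a planar drawing with its vertices at arbitrary prespecified distinct locations and every edge a polygonal curve with $O(|V(L)|)$ bends, computable in $O(|V(L)|^2)$ time. Your very first step is ``apply Lemma~\ref{lem:fix-drawing} to $L$,'' which assumes exactly the claim to be proved, so the argument is circular with respect to the target; everything that follows (rerouting excursions around the holes of $\Sigma$, the $O(|\Sigma|\cdot|E(L)|^3)$ bend count, the $O(|\Sigma|\cdot|E(L)|^4)$ running time, the Whitney/SPQR discussion) is an attempted proof of a different result, namely Lemma~\ref{lem:drawing-planar-polygonial}. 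In the paper, Lemma~\ref{lem:fix-drawing} carries no proof at all: it is imported verbatim as Theorem~1 of the cited work of Pach and Wenger, and what the paper actually proves is Lemma~\ref{lem:drawing-planar-polygonial}, by contracting the holes to points, passing to the $3$-connected gadget graph $L_{\tc}$ so that Whitney's theorem pins down the embedding, applying the cited theorem, and then blowing the holes back up and uncontracting. A genuine proof of the statement would have to establish the fixed-location drawing from scratch, e.g., by augmenting $L$ (subdividing each edge at most once) to a planar graph admitting a Hamiltonian cycle, taking a two-page book embedding, and routing the edges in channels along a curve visiting the prescribed points in Hamiltonian order; nothing of this kind appears in your write-up.

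Even read charitably as a proof of Lemma~\ref{lem:drawing-planar-polygonial}, your rerouting plan has a gap precisely where the paper works hardest: matching the prescribed cyclic order of edges at each vertex. For a planar graph that is not $3$-connected, the drawing returned by the Pach--Wenger algorithm need not realize the prescribed rotation system, and an SPQR/block decomposition does not by itself force it, since the embeddings of the pieces can be glued and flipped in many ways and nothing in your scheme controls which face of the partial drawing each hole of $\Sigma$ ends up in. Rerouting by shortest paths around obstacles cannot repair a wrong rotation at a vertex on a hole boundary without crossing edges incident to that vertex. This is exactly why the paper contracts each hole boundary to a single vertex, attaches wheel gadgets and triples every edge to make $L_{\tc}$ $3$-connected, and only then invokes Whitney's theorem before recovering the holes.
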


We first contract each hole in $\Sigma$ into a single point and compute the planar drawing of $L$ in a plane without holes using Lemma~\ref{lem:fix-drawing}.
After that, we recover the holes in $L$ at the prespecified locations.
In the recovering step, we want to ensure that the given topological ordering is maintained in the planar drawing of $L$.
To do this, we use Witney's theorem. 
However, $L$ is not necessarily 3-connected.
For this reason, we slightly modify $L$ so that it becomes 3-connected.


\begin{figure}
    \centering
    \includegraphics[width=0.5\textwidth]{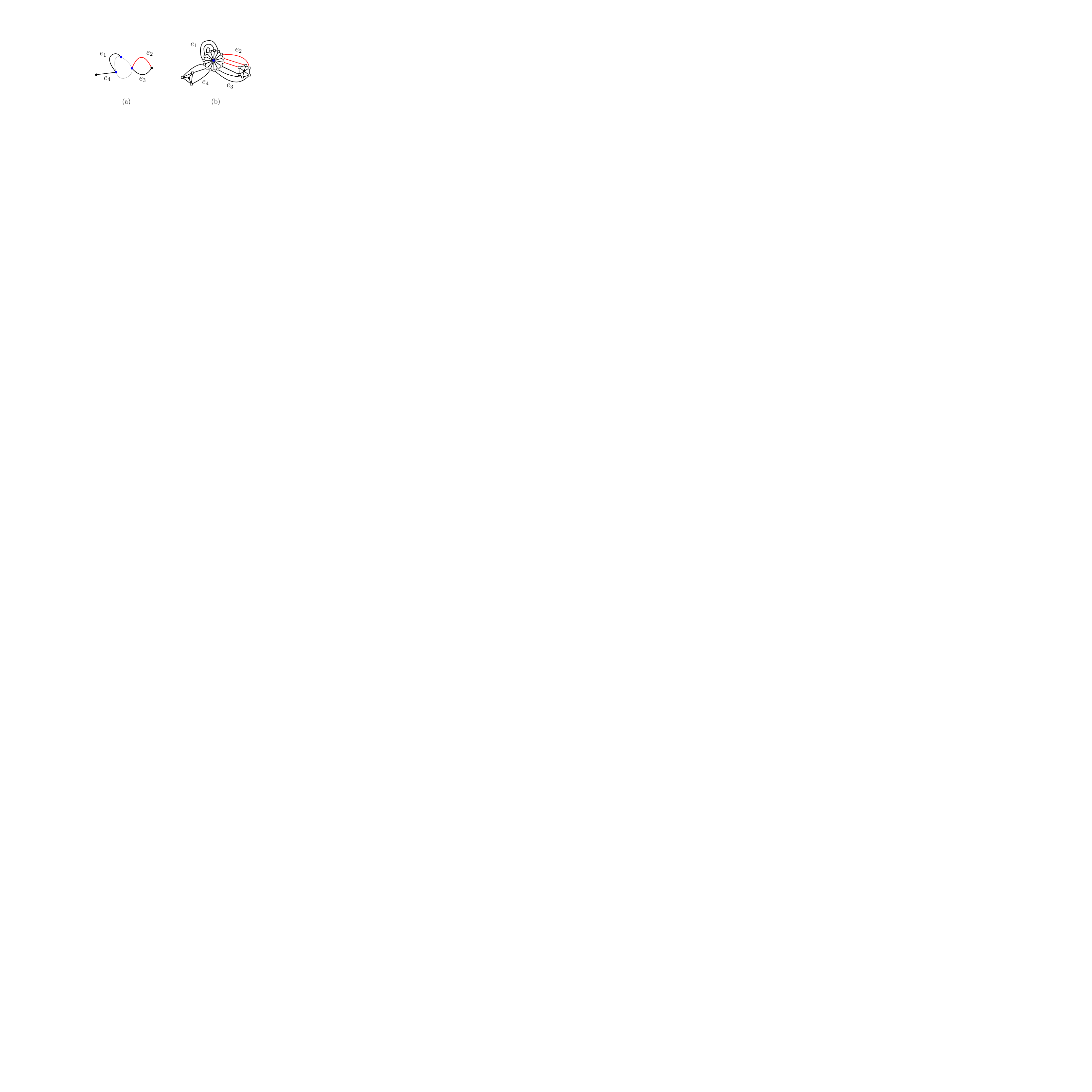}
    \caption{(a) The blue vertices denote the vertices of $L$ on the boundary of a hole. (b) The vertices on a hole are contracted into one blue vertex on the boundary of the hole. And, the black boxes denote the vertices of gadgets. Each edge is copied into three edges. For clarity, the edges corresponding to $e_2$ are colored by red.}
    \label{fig:L-tc}
\end{figure}

\subparagraph*{Modification for $L_\tc$ and drawing $\mathcal D_\tc$.}
To utilize Witney's theorem and Lemma~\ref{lem:fix-drawing}, we obtain a \emph{3-connected} planar graph $L_{\tc}$ by modifying $L$ with respect to the holes in the polygonal domain $\Sigma$, and we compute a polygonal drawing $\mathcal D_{\tc}$ of $L_{\tc}$ using 
Lemma~\ref{lem:fix-drawing}. 
We assume that each vertex in $L$ is the distance at least $0<\varepsilon<1$ from any other vertex in $L$.

We first contract each hole into a single point in the plane.
In this way, all vertices of $L$ lying on the boundary of the same hole or the outer boundary of $\Sigma$ are contracted to a single vertex on the boundary accordingly.
Next, we add a \emph{wheel graph} centered on $v$ for each vertex $v$ of $L$ as a gadget.
Specifically, the gadget is formed by 
connecting a single universal vertex to the vertices of 
a cycle with $3d_v$ vertices of diameter $\varepsilon/100$, 
where $d_v$ is the degree of $v$ in $L$.
Then we replace each edge $uv$ of $L$ with three edges:
we choose three consecutive vertices from the gadget for $u$
and three consecutive vertices from the gadget for $v$,
then we connect them.  See Figure~\ref{fig:L-tc}. 
We can do this for all edges of $L$ without crossing by maintaining the proper ordering of the edges around each vertex of $L$.
Recall that we have the proper ordering of incident edges at each vertex in $L$. 
We denote the result graph as $L_\tc$.
Note that $L_\tc$ is a 3-connected planar graph since a wheel graph with at least four vertices is 3-connected.

The number of vertices in $L_\tc$ is at most $6|E(H)|+|V(H)|$.
By Lemma~\ref{lem:fix-drawing}, we can compute a planar drawing of $L_\tc$ in the plane where each edge has at most $O(|E(H)|)$ bends.
Furthermore, it is the unique topological embedding by Witney's theorem since $L_\tc$ is a 3-connected planar graph.
We denote the drawing by $\mathcal D_\tc$.
In the following, we recover a polygonal drawing $\mathcal D$ of $L$ in  $\Sigma$.

\begin{figure}
    \centering
    \includegraphics[width=0.7\textwidth]{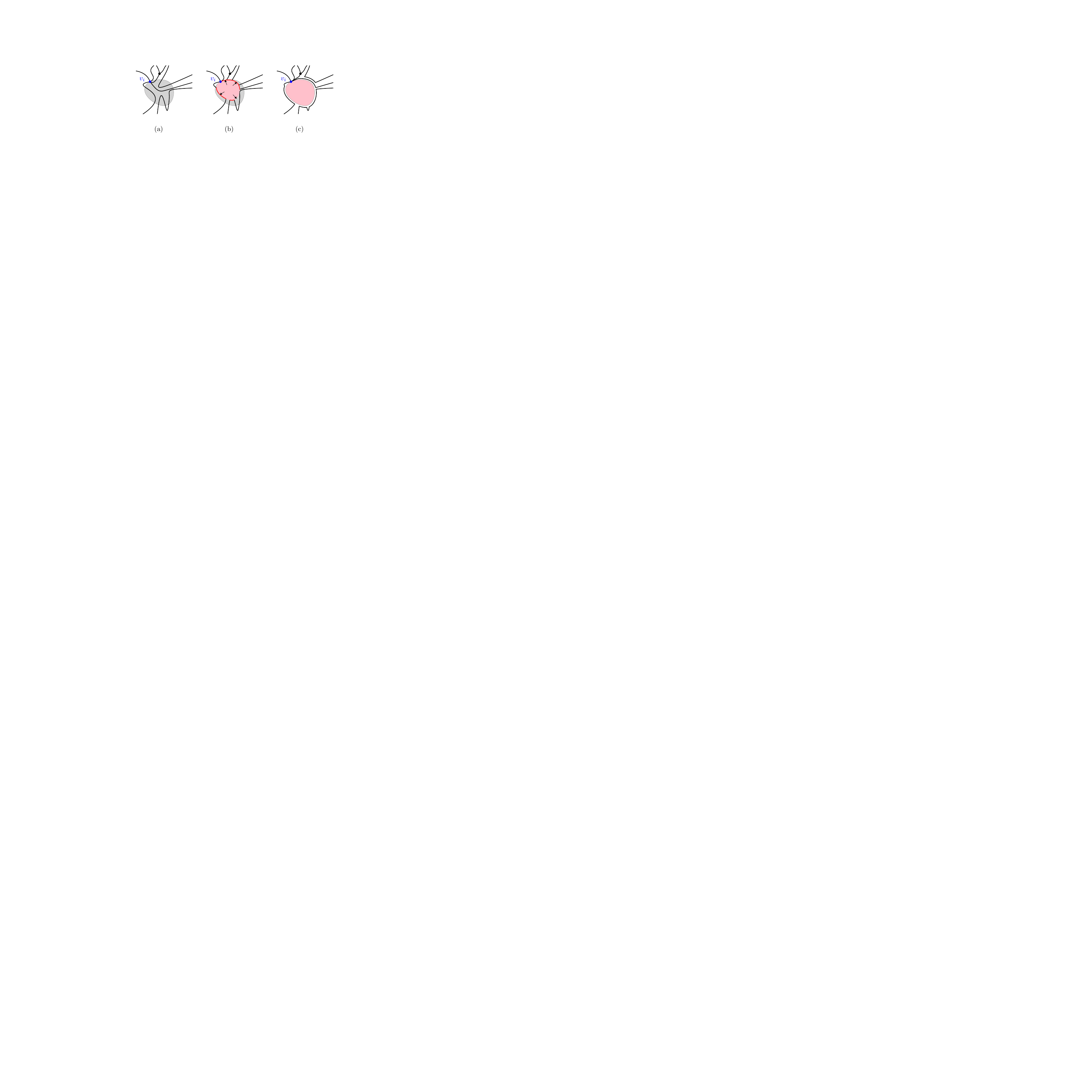}
    \caption{(a) A part of $D_\tc$ after removing paths. The gray region is a hole $A_i$, and the blue vertex is $v_i$. and (b) The pink region is $A'_i$. While blowing $A'_i$, we push subcurves on its boundary as the red curves if they intersect $A_i$. (c) After blowing up $A'_i$, we perturb the drawing of $\mathcal D$ without crossing.}
    \label{fig:Recover-D-Blow}
\end{figure}

\subparagraph*{Recovering $\mathcal D$ from $\mathcal D_\tc$.}
We recover a drawing $\mathcal D$ of $L$ in $\Sigma$ from $\mathcal D_\tc$.
For each edge $uv$ in $L$, it corresponds to three paths of length three in $L_\tc$ connecting the universal vertices of the gadgets for $u$ and $v$.
Among them, except the middle one, we remove two paths from $\mathcal D_\tc$.
While keeping the drawing of $\mathcal D_\tc$ of the remaining path between the universal vertices for $u$ and $v$ as the drawing of $uv$ of $L$, we remove the vertices in $L_\tc$ which are not in $L$.
This process increases the number of bends of each edge by a factor of at least three compared to $\mathcal D_\tc$.
We refer to the obtained drawing as $\mathcal D$.

In the following, we recover the boundaries of $\Sigma$.
Let $A_1, A_2,\dots$, and $A_\ell$ be holes of $\Sigma$,
Note that the vertices of $L$ on the boundary of $A_i$ are contracted into one vertex in $L_\tc$. 
We refer to the contracted vertex as $v_i$ for each $A_i$.


For each $A_i$, we modify $\mathcal D$ so that the resulting drawing avoids $A_i$.
Precisely, we blow up a region $A'_i$, which is initially the point $v_i$, 
within a face adjacent to $v_i$ until it becomes $A_i$. 
While blowing up $A'_i$, we push the subcurves of the curves of $\mathcal D$ which intersect $A'_i$ onto the boundary of $A'_i$.
Notice that we push such subcurves, avoiding $v_i$, except the boundary curves of the face containing $A'_i$.
See Figure~\ref{fig:Recover-D-Blow}(b).
By repeating such a process, we make the interior of $A_i$ empty.
Then, by perturbing the drawing $\mathcal D$, we can modify the drawing to avoid $A_i$ and crossing.
See Figure~\ref{fig:Recover-D-Blow}(c). 
This process increases the bends by a factor of $|\Sigma|$ compared to $\mathcal D_\tc$.


\begin{figure}
    \centering
    \includegraphics[width=0.74\textwidth]{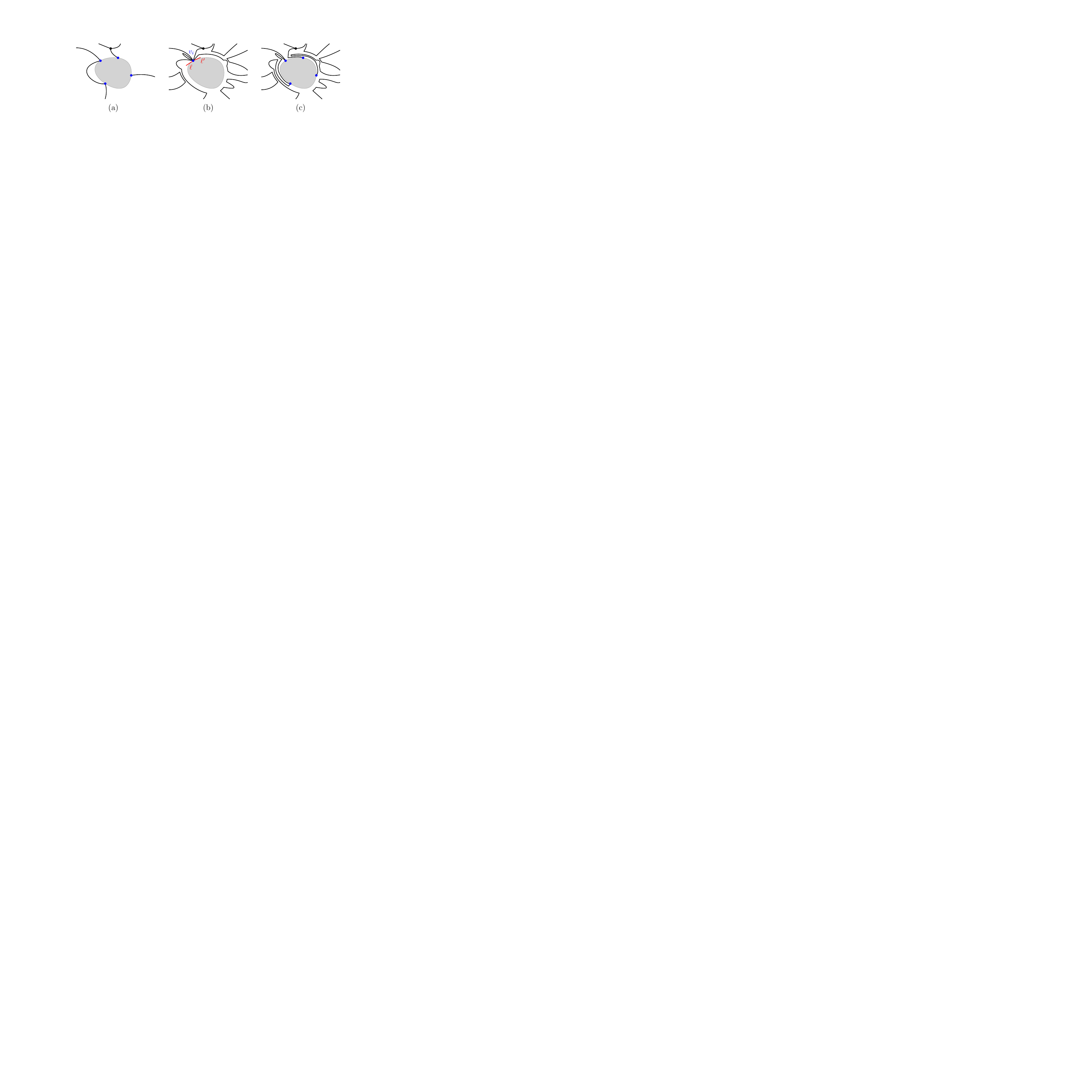}
        \caption{(a) A part of $L$. The gray region is a hole $A_i$, and the blue vertices are the vertices of $L$ on the boundary of $A_i$. (b) A part of drawing $D$ after making $A_i$ empty. The blue vertex is $v_i$, and the red lines are $l$ and $l'$, respectively. (c) We uncontract $v_i$ into the four blue vertices of $L$.
    \label{fig:Recover-D-uncontract}}
\end{figure}

In the following, we uncontract the vertices $v_i$'s for each $A_i$'s. 
Let $\ell$ and $\ell'$ be short line segments incident to $v_i$ drawn in opposite directions within the face containing $A_i$ so that they intersect $\mathcal D$ and $A_i$ only at $v_i$.
See Figure~\ref{fig:Recover-D-uncontract}(b).
Note that the edges in $\mathcal D$ incident to $v_i$ have an endpoint on the boundary of $A_i$ in $L$, and the ordering is the same as the prespecified ordering along the boundary of $A_i$ in $L$ due to Witney's theorem.
We choose the closest incident edge $e$ of $v_i$ to the boundary of $A_i$, and we uncontract an endpoint $u$ of the edge $e'$ in $L$ corresponding to $e$ contracted into $v_i$.
Precisely, when there is no incident edge of $v_i$ between $e$ and $\ell$ (or $\ell'$), we extend the drawing of $e$ along the boundary of $A_i$ in the counterclockwise direction (or clockwise direction) until $u$ is located at its prespecified location.
If both endpoints of $e'$ are contracted into $v_i$, we uncontract the endpoint preceding in the clockwise order (or counterclockwise order).
By repeating such a process, we uncontract all the vertices in $L$ on the boundary $A_i$ at the prespecified location.
See Figure~\ref{fig:Recover-D-uncontract}(c).
This process increases the bends of each edge by a factor of at most $|\Sigma|$.

\medskip

In conclusion, the obtained $\mathcal D$ is a polygonal drawing of $L$ in $\Sigma$ each of which edge has at most $O(|\Sigma|\cdot |E(H)|)$ bends, where $|\Sigma|$ denotes the complexity of the boundary of $\Sigma$.
Furthermore, the above processes take $O(|\Sigma|\cdot |E(H)|^2)$ time in total. 
This completes the proof of Lemma~\ref{lem:drawing-planar-polygonial}.

\section{Conclusion}
In this paper, we initiate the study of fundamental parameterized problems for dynamic unit disk graphs. 
including 
 \textsc{$k$-Path/Cycle}, \textsc{Vertex Cover}, \textsc{Triangle Hitting Set}, \textsc{Feedback Vertex Set}, and \textsc{Cycle Packing}. 
Our data structure supports $2^{O(\sqrt{k})}$ update time and $O(k)$ query time for \textsc{$k$-Path/Cycle}.
For the other problems, our data structures support $O(\log n)$ update time and $2^{O(\sqrt{k})}$ query time, where $k$ denotes the output size.
Despite the progress made in this work, there remain numerous open problems.
First, can we obtain a trade-off between query times and update times?
Second, one might consider other classes of geometric intersection graphs in the dynamic setting such as disk graphs~\cite{an2023faster,lokshtanov2022subexponential}, outerstring graphs~\cite{an2024sparseouterstringgraphslogarithmic}, transmission graphs~\cite{transmission} and hyperbolic unit disk graphs~\cite{oh_et_al:LIPIcs.ESA.2023.85}.
To the best of our knowledge, there have been no known results on parameterized algorithms for those graph classes.

\bibliography{papers}
\clearpage

\end{document}